\definecolor{darkred}  {rgb}{0.5,0,0}
\definecolor{darkblue} {rgb}{0,0,0.5}
\definecolor{darkgreen}{rgb}{0,0.5,0}
\theoremstyle{definition}
\newtheorem{corollary}{Corollary}
\newtheorem{definition}{Definition}
\newtheorem{conjecture}{Conjecture}
\newtheorem{lemma}{Lemma}
\newtheorem{proposition}{Proposition}
\newtheorem{theorem}{Theorem}
\newtheorem*{remark}{Remark}
\newcommand{\conv}{\text{conv}}
\newcommand{\mbf}{\mathbf}
\newcommand{\mbb}{\mathbb}
\newcommand{\mc}{\mathcal}
\newcommand{\msf}{\mathsf}
\newcommand{\mf}{\mathfrak}
\newcommand{\tr}{\textrm{Tr}}
\newcommand{\cl}{\text{cl}}
\newcommand{\crk}{\text{crk}}
\newcommand{\ket}[1]{|#1\rangle}
\newcommand{\bra}[1]{\langle #1|}
\newcommand{\op}[2]{|#1\rangle\langle #2|}
\newcommand{\ip}[2]{\langle #1| #2\rangle}
\newcommand{\bcA}{\bm{\mc{A}}}
\newcommand{\bsA}{\bm{\msf{A}}}
\newcommand{\bA}{\bm{A}}
\newcommand{\bs}{\boldsymbol}
\definecolor{cool_green}{rgb}{0.0, 0.5, 0.0}
\newcommand{\norm}[1]{\left\lVert#1\right\rVert}
\begin{document}

\title{Building Multiple Access Channels with a Single Particle}

\author{Yujie Zhang}
 \affiliation{Department of Physics, University of Illinois at Urbana-Champaign, Urbana, IL 61801, USA}%Lines break automatically or can be forced with
\orcid{0000-0002-7858-7476}
\author{Xinan Chen}%
 \affiliation{Department of Electrical and Computer Engineering, University of Illinois at Urbana-Champaign, Urbana, IL 61801, USA}
\author{Eric Chitambar}
\email{echitamb@illinois.edu}
\orcid{0000-0001-6990-7821}
\affiliation{Department of Electrical and Computer Engineering, University of Illinois at Urbana-Champaign, Urbana, IL 61801, USA}

\maketitle

\begin{abstract}
A multiple access channel describes a situation in which multiple senders are trying to forward messages to a single receiver using some physical medium. In this paper we consider scenarios in which this medium consists of just a single classical or quantum particle. In the quantum case, the particle can be prepared in a superposition state thereby allowing for a richer family of encoding strategies. To make the comparison between quantum and classical channels precise, we introduce an operational framework in which all possible encoding strategies consume no more than a single particle.  We apply this framework to an $N$-port interferometer experiment in which each party controls a path the particle can traverse.  When used for the purpose of communication, this setup embodies a multiple access channel (MAC) built with a single particle. 

We provide a full characterization of the $N$-party classical MACs that can be built from a single particle, and we show that every quantum particle can generate a MAC outside the classical set.  To further distinguish the capabilities of a single classical and quantum particle, we relax the locality constraint and allow for joint encodings by subsets of $1<K\le N$ parties.  This generates a richer family of classical MACs whose polytope dimension we compute.  We identify a "generalized fingerprinting inequality" as a valid facet for this polytope, and we verify that a quantum particle distributed among $N$ separated parties can violate this inequality even when $K=N-1$.  Connections are drawn between the single-particle framework and multi-level coherence theory.  We show that every pure state with $K$-level coherence can be detected in a semi-device independent manner, with the only assumption being conservation of particle number.
\end{abstract}

\section{Introduction}

A quantum particle is fundamentally different than its classical counterpart.  From the second quantization picture, a quantum ``particle'' can be regarded as a single mode excitation of some field, and hence it is endowed with certain wave-like features that a classical particle lacks.  Modern experimental techniques are able to individually address these quantum particles, or ``quanta'', and use them for information processing.  Hence it is a practically relevant endeavor to identify the advantages that a single quantum particle can offer in some particular information task over a classical particle.  This provides an operational comparison between classical and quantum information systems in terms of {one of} their basic building block -- particle.

In this work, we consider the task of multi-party communication with $N$ spatially-separated senders $(\msf{A}_1, \msf{A}_2,\cdots \msf{A}_N)$ and one receiver $(\msf{B})$. The receiver $\msf{B}$ obtains some output data $b$ that depends on the collection of messages $(a_1,a_2,\cdots,a_N)$ chosen by the senders.  Ideally, $b$ would be a perfect copy of all the $N$ messages, $b=(a_1,a_2,\cdots,a_N)$.  However in practice there are some physical limitations that prevent perfect communication.  In such scenarios, the communication is described by the transition probabilities $p(b|a_1\cdots  a_N)$.  The distributions $p(b|a_1\cdots  a_N)$ collectively represent a multiple-access channel (MAC) \cite{Cover-2006a,Tse-2005a}.  Ultimately, the probabilities $p(b|a_1\cdots a_N)$ are determined by the particular physical system used to transmit the information.  The question we raise here is what MACs can be generated under the restriction that the communication channel be implemented using only a single particle, with none of its internal degrees of freedom being accessible.  More precisely, information is only allowed to be encoded in external relational degrees of freedom, such as what particular points in space-time the particle occupies.  We are interested in comparing the MACs that can be realized when a quantum versus classical particle is used to transmit information in this way.

This question can be pushed in a variety of different directions. In point-to-point communication, one very active line of research considers communication enhancements that can arise when a single quantum particle is subjected to different configurations of quantum communication devices \cite{Kristjansson-2020a}.  For instance, non-classical effects can be generated when a particle is subjected to two noisy channels with indefinite causal order \cite{Ebler-2018a,Goswami-2020,salek-2018,Procopio-2019,Procopio-2020, Chiribella-2021}.
  
It is also possible to explore the communication power of a single particle that travels through channels in a superposition of different trajectories or times \cite{Alastair-2018a,Gisin-2005, Chiribella-2019a, Kristjansson-2020a}. In these
strategies, the superposition in trajectories or causal order have to be determined by an extra control bit, and there are subtleties in interpreting precisely what types of enhancements are physically realizable \cite{Kristjansson-2020b}.
%and the enhancements are shown only by using some specific channels in combination}. 
In this paper, %\textcolor{red}{instead of working with point-to-point communication where quantum superposition is used indirectly, we focus on a multipoint-to-point communication scenario where} 
we adopt a more basic model in which quantum superpositions are limited to just the spatial path traveled by the particle and accessed by the different senders. The encoding of information is delocalized among the parties $(\msf{A}_1, \msf{A}_2,\cdots \msf{A}_N)$, and the encoding is assumed to be implemented at a definite moment in time.  Yet even in this more familiar and simpler setting, operational advantages of using quantum versus classical MACs can be identified.

%\textcolor{red}{(I think we may have emphasised too much on quantum fingerprinting in this paragraph. Having these information here may confuse the referee that our main result is just a generalization of quantum fingerprinting.) }
Consider, for example, the task of quantum fingerprinting.  In the two-party scheme, the goal of this task is for receiver $\msf{B}$ to decide whether or not $a_1=a_2$ based on some partial information sent from $\msf{A}_1$ and $\msf{A}_2$, which represents their ``fingerprints'' \cite{Buhrman-2001a}.  While there are many variations to the problem, one version compares the task when $\msf{A}_1$ and $\msf{A}_2$ have pre-shared classical versus quantum correlations \cite{Horn-2005a, Massar-2005a, Horn-2005b}.  As observed by Massar \cite{Massar-2005a}, a single quantum particle initially prepared in a spatial superposition state can be used for $\msf{B}$ to decide if $a_1=a_2$ when $a_1,a_2\in\{0,1\}$, a feat that is impossible with a single classical particle.  In this example, parties $\msf{A}_1$ and $\msf{A}_2$ encode the parity of their bit values in the phase of the single-particle state.

{
The results of this paper are inspired by this fingerprinting task but go well beyond it.  Mathematically, every classical single-particle fingerprinting protocol can be described as a point in a polytope, the so-called ``classical polytope''.  The best possible classical fingerprinting protocol, in terms of the probability that $\msf{B}$ correctly decides matching inputs, lies on the boundary of this polytope.   Using a single quantum particle allows for protocols that go beyond this boundary, thereby identifying a genuine quantum advantage.  As we move to more senders and more inputs/outputs, the classical polytope becomes much more complex, and the classical-quantum boundary is no longer characterized by the simple fingerprinting task.  One goal of this paper is describe new features that emerge in this higher-dimensional polytope that do not reduce to two-party fingerprinting.

%and their work is closely related to the problem studied here, as explained further below. 

Other sources of inspiration for this work are a series of recent results showing different types of single-particle enhancement in quantum communication.}  Del Santo and Daki\'{c} have devised a protocol that allows for two-way communication between two parties using a single quantum particle, whereas the communication is always one-way if a single classical particle is employed \cite{DelSanto-2018a}.  A cascaded implementation of the Del Santo and Daki\'{c} protocol has also been devised in which one of the parties can gain even more information through the use of a single particle \cite{Hsu-2020a}.   Advantages of quantum particles in the reverse communication setting of two senders and one receiver have also been discovered \cite{DelSanto-2019a}, and very recently, Horvat and Daki\'{c} have found that the superposition state of a single particle can provide significant enhancement to the speed of information retrieval in some globally-encoded data \cite{Horvat-2019a}.  More generally, methods are known for mapping multi-qubit quantum information into the spatial or temporal degrees of freedom of a single photon \cite{Garcia-Escartin-2013a, Arrazola-2014a}.  These results reveal the resource character of single-particle superposition states for building enhanced multipartite classical communication channels.  It is natural to consider how powerful this resource can be for generating classical channels and whether it can be formally characterized in the context of a quantum resource theory \cite{Horodecki-2013a, Coecke-2016a, Chitambar-2019a}. %This paper explores these questions specifically for the single-particle generation of MACs.

Traditional studies of quantum-enhanced MACs have focused on the capacity rate region \cite{Winter-2001a, Hsieh-2008a, Leditzky-2020a}, which quantifies the optimal asymptotic communication rates between the senders and the receiver.  From an information-theoretic sense, this is perhaps the most natural object to consider.  However, capacity is just one property of a channel, and two different channels can have the same rate region.  Consequently, a comparison of rate regions can be too broad for the purpose of separating classical and quantum-generated MACs.  Our analysis is fine-grained in that it distinguishes MACs on the level of individual transition probabilities $p(b|a_1\cdots  a_N)$.  This has the advantage that, when expressed in terms of these probabilities, the collection of MACs generated by a single classical particle forms a convex polytope, provided the output variable is binary.  This allows us to employ standard techniques from convex analysis to construct experimentally-implementable methods for certifying non-classical MACs.

The main results of this paper and its organization are as follows.  In Section \ref{Sect:operational framework}, we propose an operational framework for building multiple-access channels with a single particle based on an $N$-path interferometer experiment.  Sender $\msf{A}_i$ encodes information along path $i$ using a map that obeys the central constraint of not increasing the overall particle number (as characterized in Eq. \eqref{Eq:Kraus-operators-NP}).  Included in these allowed operations are phase encoding and path blocking \cite{Rozema-2020}, the latter is also known as vacuum encoding \cite{Chiribella-2019a, Kristjansson-2020a}.  We also introduce communication models in which shared randomness is introduced and other relaxations on the allowed operations are considered.  {The interested reader can jump to Section \ref{Sect:SR-model} and compare the figures within for a high-level overview of the different models we consider.}

In Section \ref{Sect:classical MACs} we analyze the structure of classical local MACs in more detail. In Theorem \ref{Thm:classical-I2-N}, we show that these MACs are characterized entirely in terms of vanishing second-order interference terms.  Roughly speaking, this means that the particular linear combination
\begin{equation*}
\label{Eq:I2-intro}
    I_2=p(b|a_i,a_j)+p(b|a_i',a_j')-p(b|a_i',a_j)-p(b|a_i,a_j')
\end{equation*}
must equal zero for any output $b$ and arbitrary inputs $a_i,a_i',a_j,a_j'$ chosen by parties $\msf{A}_i$ and $\msf{A}_j$.  The quantity $I_2$ is well-known in the study of double-slit experiments \cite{Grangier-1986,Jacques-2005,Merzbacher-1998a}, and here we prove that its vanishing is essentially the only constraint that assures the MAC has an implementation using a single classical particle. A similar observation for binary input/output MACs was made independently in Ref. \cite{Horvat-2019b}. In contrast, every quantum state with a non-zero off-diagonal term in the path basis can generate $I_2\not=0$, as explicitly shown in Section \ref{Sect:quantum MACs} Proposition \ref{Prop:N-Local-quantum-I2}.

We also consider the family of MACs obtained by partially relaxing the locality constraints on the encoders and allowing joint encoding schemes among subgroups of size $K<N$.  In low dimensions, the resulting classical polytope is presented in Section \ref{Sect:N,K binary}.  For arbitrary $N$ and $K$, we show that the classical polytope exhibits a tight facet inequality, which we call the generalized fingerprinting inequality \cite{Horvat-2019a}:
\[p(0|0,\cdots,0)+\sum_{i=1 }^{K+1}p(1|0,\cdots,1_i,\cdots,0)\le K+1,\]
This inequality is shown to be violated by a fully local quantum MAC in Section \ref{Sect:N-local-quantum}, a result previously demonstrated in Ref. \cite{Horvat-2019a} but one we slightly optimize here.   Along the way we also derive a number of other structural properties of single-particle classical and quantum MACs.  Finally, in Section \ref{Sect:Multi-level-coherence}, we draw a direct connection between our framework and the resource theory of multi-level coherence.  In particular, we design new semi-device independent witnesses for multi-level coherent states.

%To explore these questions, we first establish an operational framework in which the entire communication between $(\msf{A}_1,\msf{A}_2,\cdots,\msf{A}_N)$ and $\msf{B}$ is restricted to an implementation that uses just a single particle.  This prohibits $\msf{A}_i$ from sending additional particles to $\msf{B}$ as part of their local encoding strategy, yet it encompasses both phase encoding and path blocking \cite{Rozema-2020}, the latter also known as vacuum encoding \cite{Chiribella-2019a, Kristjansson-2020a}.  We then consider what happens if the locality constraint on certain subsets of parties is relaxed so that, for example, $\msf{A}_1$ and $\msf{A}_2$ can perform a joint encoding strategy of their inputs ($a_1,a_2$) that still does not introduce any new particles.  

Near the completion of this manuscript, we became aware of another paper \cite{Horvat-2020a} by Horvat and Daki\'{c} that is similar in spirit to this work.  The setup in their paper also involves local encoding in an $N$-path interformeter, however the restriction to a single particle is not made and the analysis focuses on the emergence of higher-order interference effects.  The latter refers to realizing nonzero values for the quantities
\begin{align*}
    \label{Eq:Ik-intro}
    I_{K}=\sum_{a_1,\cdots,a_{K}\in\{0,1\}}\prod_{i=1}^{K}(-1)^{a_i}p(0|a_1,\cdots a_{K}),
\end{align*}
with $K=2,\cdots,N$. Notice that $K=2$ corresponds to the second-order interference given above. The quantity $I_K$ arises naturally in $K$-slit interference experiments, and while quantum mechanics can generate $I_2\not=0$, it is known that $I_K=0$ for $K>2$ in all standard single-particle quantum mechanical setups, an intriguing fact that has motivated multiple studies into the nature of higher-order interference \cite{Sorkin-1994a, Sinha-2010a, Ududec-2010a, Lee-2016a, Daki-2014, Horvat-2020a, Rozema-2020}.  In Ref. \cite{Horvat-2020a}, it is shown that $I_{2K}$ can nevertheless be nonzero in interferometer experiments involving $K$ particles, while $I_{K'}$ still vanishes in $K$-particle experiments when $K'>2K$.  In Section \ref{Sect:quantum-B>2} of this paper, we describe a similar effect in single-particle interferometer experiments.  Namely, when $K$ parties are allowed to jointly encode on a quantum system of just one particle using operations constrained to particle-number conservation, $I_{2K}$ can be nonzero, but $I_{K'}$ vanishes for all $K'>2K$.  We thus see Ref. \cite{Horvat-2020a} and this work as being complementary and reflecting once again the richness of $N$-path interforemeter experiments for demonstrating quantum information primitives.

\section{Definitions and Operational Framework}
\label{Sect:operational framework}
Let us now introduce our framework in more detail. In Subsection \ref{Sect:single-particle-MACs}, we will provide a general recipe for building MACs with $M$ quantum particles. Moving beyond that section, we focus exclusively on the single-particle case with $M=1$. In Subsections \ref{Sect:Macs with Q-particle} and \ref{Sect:Macs with C-particle}, the structure of single-particle MACs is analyzed for both a quantum and classical particle. In Subsection \ref{Sect:SR-model}, we include shared randomness as an extra resource, which further modifies the communication setup. Finally, in Subsection \ref{Sect:N-K Macs}, we partially relax the locality constraint on the $N$ senders and introduce the idea of $(N,K)$-local MACs.  

\subsection{MACs with Quantum Particles and Number-Preserving extendible Operations}
\label{Sect:single-particle-MACs}

 We model our communication scenario as a generalized $N$-port interferometer experiment in which sender $\msf{A}_i$ sits along path $i$ and wishes to send a classical message $a_i$ drawn from set $\mc{A}_i$ (see Fig. \ref{Fig:N-Local}).  We denote the full collection of senders as $\bm{\msf{A}}=(\msf{A}_1,\msf{A}_2,\cdots,\msf{A}_N)$  Throughout this work we will assume that all message sets are finite, and they can thus be represented by a set of non-negative integers.  For an arbitrary real number $x$, we denote $[x]:=\{0,\cdots,\lfloor x\rfloor -1\}$ so that $\mc{A}_i$ is in a one-to-one correspondence with the set $[|\mc{A}_i|]$.  We will be particularly interested in the binary set $[2]=\{0,1\}$ and its $N$-fold Cartesian product $[2]^N:=\{0,1\}^{\times N}$.  For a general Cartesian product of input sets $\mc{A}_i$ we simply write $\bm{\mc{A}}:=\mc{A}_1\times\mc{A}_2\times\cdots\times \mc{A}_N$, and likewise for random variables $A_i$ we write their joint variable as $\bm{A}:=A_1A_2\cdots A_N$.  For an output set $\mc{B}$, the $N$-sender MACs we consider are {the collection of} stochastic maps $\mbf{p}_{B|\bm{A}}:\bm{\mc{A}}\to\mc{B}$ whose transitions probabitilies we denote by $p(b|\bm{a})$ for $b\in\mc{B}$ and $\bm{a}\in\bm{\mc{A}}$.  In this way $\mbf{p}_{B|\bm{A}}$ can be viewed as a ($|\mc{B}|\cdot{|\bcA|}$)-dimensional vector with coordinates being $p(b|\bm{a})$.

\begin{figure}[h]
  \centering
    \includegraphics[width=0.5\textwidth]{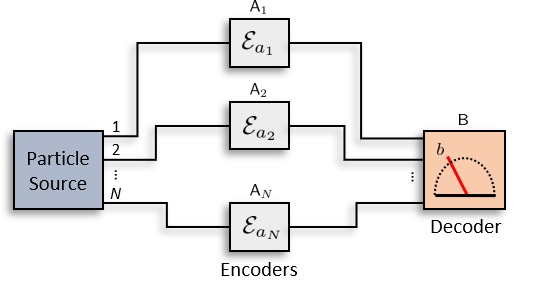}
     \caption{A physical implementation of the MAC $\mbf{p}_{B|\bm{A}}$ using a single particle and a generalized interferometer setup.}
     \label{Fig:N-Local}
\end{figure}

In a general $M$-particle communication experiment, an $M$-particle state $\rho^{\bm{\msf{A}}}$ is prepared and sent along the $N$ possible interferometer paths, with its final destination being the receiver $\msf{B}$.  Here $\rho^{\bm{\msf{A}}}$ has support on space 
% \begin{equation}
%     \mc{H}^{\bm{\msf{A}}}_{M}:=\text{span}\{\ket{\mbf{x}}:=\ket{x_1}_{\msf{A}_1}\ket{x_2}_{\msf{A}_2}\cdots\ket{x_N}_{\msf{A}_N}\;:\; x_i\in[M+1],\;\sum_{i=1}^Nx_i=M\},
% \end{equation}
\begin{equation}
    \mc{H}^{\bm{\msf{A}}}_{M}:=\text{span}\{\ket{\mbf{x}}:=\ket{x_1}_{\msf{A}_1}\cdots\ket{x_N}_{\msf{A}_N}\;:\sum_{i=1}^Nx_i=M\},
\end{equation}
where $\{\ket{0}_{\msf{A}_i},\ket{1}_{\msf{A}_i},\cdots\}$ are number states for the particle along path $i$.  We assume the particles are of the same species, and an additional restriction holds depending on whether they are fermions or bosons.  In the fermionic case, we must have $|x_i|\leq 1$ while no such constraint holds for bosons.  To encode the message $a_i$, sender $\msf{A}_i$ performs some completely-positive trace-preserving map $\mc{E}^{\msf{A}_i}_{a_i}$ so that the final state obtained by $\msf{B}$ has the form
\begin{equation}
\label{eq:encoded-state}
    \sigma_{a_1\cdots a_N}=\mc{E}^{\msf{A}_1}_{a_1}\otimes\cdots\otimes \mc{E}^{\msf{A}_N}_{a_N}[\rho^{\bm{\msf{A}}}]
\end{equation}
The decoding by $\msf{B}$ is performed using a positive operator-valued measure (POVM) $\{\Pi_{b}\}_{b\in\mc{B}}$ so that the generated MAC has transition probabilities
\begin{equation}
\label{Eq:1-N-quantum}
    p(b|a_1\cdots a_N)=\tr\left[\Pi_b(\sigma_{a_1\cdots a_N})\right].
\end{equation}
Here we do not require that the output set $\mc{B}$ be equivalent to the input set $\bm{\mc{A}}$.  For instance in the task of quantum fingerprinting described above, $\mc{B}=[2]$ and the value $b$ just represents partial information about the collective inputs $(a_1,a_2,\cdots,a_N)$.

The crucial aspect of our approach is that we demand a strict accounting of all particles used in the communication protocol, and we prohibit the use of more particles than what is present in the initial state $\rho$. Hence, the local encoding maps $\mc{E}^{\msf{A}_i}_{a_i}$ need to be restricted so that they cannot increase particle number.  Even more, to make this constraint experimentally feasible, we require that each map be implementable by a process that preserves overall particle number.  This can be modeled by giving party $\msf{A}$ access to a collection of ancilla ports $\msf{E}_1,\msf{E}_2,\cdots,\msf{E}_K$, with each port $j=1,\cdots,K$ having its own set of number states $\{\ket{0}^{\msf{E}_{j}},\ket{1}^{\msf{E}_{j}},\cdots\}$ (see Fig. \ref{Fig:Dilation-free}).  Here we have temporarily omitted the party-labeling subscript on $\msf{A}$ for simplicity.   
A valid encoding map will then have a dilation $U$ on systems $\msf{A},\msf{E}_1,\cdots,\msf{E}_{K}$ that is number-preserving extendible.  More precisely, for every integer $0\leq J\leq M$, the dilation $U$ must satisfy
\begin{equation}
\label{Eq:unitary-N-P}
    U\left[\ket{x}_{\msf{A}}\ket{x_1}_{\msf{E}_{1}}\cdots\ket{x_K}_{\msf{E}_K}\right]\in\mc{H}^{\msf{A}\msf{E}_{1}\cdots \msf{E}_K}_J
\end{equation}
whenever $x+\sum_{i=1}^Kx_i=J$.  Since no additional particles can be introduced along the path to $\msf{B}$, we require that the ancilla systems always begin in the vacuum state $\ket{0}_{\msf{E}}:=\ket{0}_{\msf{E}_1}\cdots\ket{0}_{\msf{E}_K}$.  Then every allowed encoding map on system $\msf{A}$ will have the form
\begin{equation}
    \mc{E}(\rho^{\msf{A}})=\tr_{\msf{E}}\left[U(\rho^{\msf{A}}\otimes\op{0}{0}^{\msf{E}})U^\dagger\right],
\label{eq:map}
\end{equation}
where $U$ has the block-diagonal form of Eq.~\eqref{Eq:unitary-N-P}.  It is natural to assume that each party has access to some local randomness, and so convex combinations of these maps are also allowed. \begin{definition}
A completely positive trace-preserving (CPTP) map is called \textbf{number-preserving extendible} (NPE), if it can be extended as written as a convex combination of channels each having the form of Eq.~\eqref{eq:map}.
\end{definition}  
\begin{figure}[t]
    \includegraphics[width=0.5\textwidth]{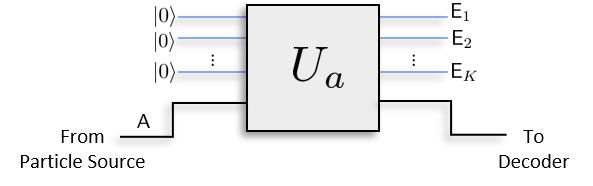}
     \caption{The number-preserving extendible (NPE) encoding maps have a dilation in which overall particle number is preserved.  The effective channel on system $A$ is obtained by tracing out the ancilla systems $E_1,E_2,\cdots,E_k$, and the output is sent to the decoder.}
     \label{Fig:Dilation-free}
\end{figure}
The structure of an NPE map $\mc{E}$ can be characterized in terms of its Kraus operators.  Here we restrict attention to $M$ bosons occupying the input mode of party $\msf{A}$ since the fermionic case is formally equivalent to the $M=1$ bosonic case.  Let $\bm{\msf{E}}:=\msf{E}_1,\cdots,\msf{E}_K$ denote the collection of all ancilla ports for a given NPE operation.  We then write
\begin{equation}
U=\sum_{m=0}^M\op{\psi_{m}}{m,0}^{\msf{A}\bm{\msf{E}}}
\end{equation}
where 
\begin{equation}
\ket{\psi_m}^{\msf{A}\bm{\msf{E}}}=\sum_{k=0}^m\ket{k}\ket{\varphi_{m,m-k}}
\end{equation}
and $\ket{\varphi_{m,m-k}}$ is an $(m-k)$-particle state for systems $\bm{\msf{E}}$.  Let $\{\ket{\mbf{e}_{n,\lambda}}\}_{n,\lambda}$ be an orthonormal basis of $\mc{H}^{\bm{\msf{E}}}$ such that $\ket{\mbf{e}_{n,\lambda}}$ is an $n$-particle state.  Here, $\lambda=1,\cdots,\binom{n+K-1}{n}$ is ranging over the number of ways $n$ particles can be distributed among the $K$ ancilla ports $\bm{\msf{E}}$.  Then a partial contraction with $\ket{\mbf{e}_{n,\lambda}}$ on systems $\bm{\msf{E}}$ will lead to
\begin{equation}
\ip{\mbf{e}_{n,\lambda}}{\psi_m}^{\msf{A}\bm{\msf{E}}}=c_{n,\lambda,m}\ket{m-n}
\end{equation}
with $c_{n,\lambda,m}=0$ if $n>m$.  Thus, the induced map on system $\msf{A}$ will have Kraus operators
\begin{align}
\label{Eq:Kraus-operators-NP}
E_{n,\lambda}=\sum_{m=n}^Mc_{n,\lambda,m}\op{m-n}{m},
\end{align}
with $n=0,\cdots,M$ and $\lambda=1,\cdots,\binom{n+K-1}{n}$.  The range of $\lambda$ can grow unbounded for each value of $n>0$ by taking more ancilla ports; however when $n=0$, there is only one permissible value of $\lambda$ regardless of the size of $K$.  Each $E_{n,\lambda}$ can be seen as a matrix whose only non-negative elements are on the $n^{\text{th}}$ upper diagonal.  Conversely, suppose we have a complete set of Kraus operators having the form of Eq.~\eqref{Eq:Kraus-operators-NP}.  A number-preserving dilation can be formed by defining states $\ket{\varphi_{m,m-k}}=\sum_{\lambda}c_{m-k,\lambda,m}\ket{\mbf{e}_{m-k,\lambda}}$ and then using the above construction.  We summarize in the following Proposition
\begin{proposition}
A CPTP map $\mc{E}$ is NPE if and only if it is a convex combination of CPTP maps, each of which has Kraus operators $\{E_{n,\lambda}\}_{n,\lambda}$ satisfying Eq.~\eqref{Eq:Kraus-operators-NP} for some finite $K$.
\end{proposition}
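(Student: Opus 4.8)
This Proposition is essentially settled by the construction in the paragraph immediately preceding it; the plan is to record that construction with the completeness relations made explicit and then observe that the convex-combination wrappers on the two sides match. Since both the definition of an N-P map and the statement of the Proposition have the form ``a convex combination of channels of a given type'', it suffices to establish the single-channel equivalence: a CPTP map $\mc{E}$ admits a dilation of the form \eqref{eq:map} with $U$ obeying \eqref{Eq:unitary-N-P} \emph{if and only if} $\mc{E}$ possesses a complete Kraus decomposition of the form \eqref{Eq:Kraus-operators-NP} for some finite $K$; applying this inside the convex hull then yields the Proposition.

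For the ``only if'' direction I would fix an orthonormal number-eigenbasis $\{\ket{\mbf{e}_{n,\lambda}}\}$ of $\mc{H}^{\bm{\msf{E}}}$ and set $E_{n,\lambda}:=\bra{\mbf{e}_{n,\lambda}}^{\bm{\msf{E}}}U\ket{0}^{\bm{\msf{E}}}$, which is a Kraus decomposition of $\mc{E}$ since $\tr_{\bm{\msf{E}}}[\,\cdot\,]=\sum_{n,\lambda}\bra{\mbf{e}_{n,\lambda}}^{\bm{\msf{E}}}(\cdot)\ket{\mbf{e}_{n,\lambda}}^{\bm{\msf{E}}}$. The one substantive point is that, because \eqref{Eq:unitary-N-P} forces $U\ket{m}_{\msf{A}}\ket{0}^{\bm{\msf{E}}}=\ket{\psi_m}$ to lie in the total-number-$m$ sector, the component of $\ket{\psi_m}$ carrying $n$ ancilla particles must carry $m-n$ particles on the single mode $\msf{A}$; as that eigenspace of $\mc{H}^{\msf{A}}$ is one-dimensional, $\ip{\mbf{e}_{n,\lambda}}{\psi_m}=c_{n,\lambda,m}\ket{m-n}$ with $c_{n,\lambda,m}=0$ for $n>m$, whence $E_{n,\lambda}=\sum_{m=n}^{M}c_{n,\lambda,m}\op{m-n}{m}$, exactly \eqref{Eq:Kraus-operators-NP}. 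Completeness is then automatic: $\sum_{n,\lambda}E_{n,\lambda}^\dagger E_{n,\lambda}=\bra{0}^{\bm{\msf{E}}}U^\dagger\bigl(\sum_{n,\lambda}\op{\mbf{e}_{n,\lambda}}{\mbf{e}_{n,\lambda}}\bigr)U\ket{0}^{\bm{\msf{E}}}=\bra{0}^{\bm{\msf{E}}}U^\dagger U\ket{0}^{\bm{\msf{E}}}=I$.

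For the ``if'' direction, given a complete Kraus set of the form \eqref{Eq:Kraus-operators-NP} I would choose $K$ large enough that $\binom{n+K-1}{n}$ is at least the number of Kraus operators at each level $n$ (possible because this binomial grows without bound in $K$ for $n\ge 1$, while the form \eqref{Eq:Kraus-operators-NP} already imposes exactly one operator at $n=0$), allocate $K$ ancilla ports with number-eigenbasis $\{\ket{\mbf{e}_{n,\lambda}}\}$, padding with zero Kraus operators as needed, and define $\ket{\psi_m}:=\sum_{n,\lambda}c_{n,\lambda,m}\ket{m-n}_{\msf{A}}\ket{\mbf{e}_{n,\lambda}}_{\bm{\msf{E}}}$, which is a total-number-$m$ state. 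A short computation using $\ip{m'-n}{m-n}=\delta_{mm'}$ together with $\sum_{n,\lambda}E_{n,\lambda}^\dagger E_{n,\lambda}=I$ gives $\ip{\psi_{m'}}{\psi_m}=\delta_{mm'}$, so $\ket{m}_{\msf{A}}\ket{0}^{\bm{\msf{E}}}\mapsto\ket{\psi_m}$ is a partial isometry that respects the total-number grading; extending it sector by sector within $\mc{H}^{\msf{A}\bm{\msf{E}}}$ produces a unitary $U$ satisfying \eqref{Eq:unitary-N-P}, and since $\bra{\mbf{e}_{n,\lambda}}^{\bm{\msf{E}}}U\ket{m}_{\msf{A}}\ket{0}^{\bm{\msf{E}}}=c_{n,\lambda,m}\ket{m-n}=E_{n,\lambda}\ket{m}$, the channel \eqref{eq:map} built from $U$ is exactly $\mc{E}$.

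The only step that needs a moment of care, and the one I would flag as the main (mild) obstacle, is this last extension: within each total-number sector $J\le M$ the partial isometry sends the one-dimensional subspace $\mathrm{span}\{\ket{J}_{\msf{A}}\ket{0}^{\bm{\msf{E}}}\}$ into the $J$-particle subspace of $\mc{H}^{\msf{A}\bm{\msf{E}}}$, so the orthogonal complements of domain and codomain in that sector have equal dimension and a unitary extension of the sector exists (take $U$ to be the identity on sectors $J>M$); should one ever want more room, enlarging $K$ only inflates the sectors and never obstructs the extension. Combining this single-channel equivalence with the fact that the two described convex hulls therefore coincide completes the proof; everything else is routine bookkeeping.
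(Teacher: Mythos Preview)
Your proof is correct and follows essentially the same approach as the paper: the forward direction contracts the number-preserving dilation against an ancilla number-eigenbasis to read off the Kraus operators in the form \eqref{Eq:Kraus-operators-NP}, and the converse rebuilds the dilation by defining the $\ket{\psi_m}$ (equivalently the $\ket{\varphi_{m,m-k}}$) from the Kraus coefficients and extending to a unitary. You have simply made explicit several bookkeeping steps---completeness, orthonormality of the $\ket{\psi_m}$, and the sector-by-sector unitary extension---that the paper's pre-proposition paragraph leaves implicit.
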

\noindent We note that these operations represent a subset of $U(1)$-covariant operations \cite{Gour_Spekkens}.  However there is a subtle constraint on the index $\lambda$ when $n=0$ that does not appear in the definition of a $U(1)$-covariant map.
\subsection{MACs with a Single Quantum Particle}
\label{Sect:Macs with Q-particle}
In this work, we focus on MACs that can be generated by a single particle and so henceforth we will take $M=1$ exclusively.  In this case, the structure of a general NPE map has a particularly nice form.  From Eq.~\eqref{Eq:Kraus-operators-NP}, a full set of Kraus operators will have the form $\{A,B_\lambda\}_{\lambda=1}^{K}$ with $A=\left(\begin{smallmatrix}1&0\\0&y\end{smallmatrix}\right)$ and $B_\lambda=\left(\begin{smallmatrix}0&z_\lambda\\0&0\end{smallmatrix}\right)$, and the normalization constraint being $1=|y|^2+\sum_{\lambda}|z_\lambda|^2$.  It is easy to see this map can be equivalently represented by Kraus operators $\{A,B\}$ in which 
\begin{align}
\label{Eq:Kraus-qubit-MAC}
A&=\begin{pmatrix}1&0\\0&y\end{pmatrix}&B&=\begin{pmatrix}0&z\\0&0\end{pmatrix}
\end{align}
with $z=\sum_\lambda|z_\lambda|^2$.  In summary, for $M=1$, an NPE operation is simply a mixture of amplitude damping channels.  Furthermore, to physically implement them, no more than a single ancilla port $\msf{E}_1$ is needed, e.g. phase encoding with $|y|=1$ and path blocking with $y=0$ can be realized with no ancilla port, and an arbitrary damping operation with $0<y<1$ can be implemented by coupling vacuum state with a beam-spiltter.

\begin{definition}
For fixed input/output sets $\bm{\mc{A}}$ and $\mc{B}$, an \textbf{$N$-local quantum} MAC is any channel in which $p(b|a_1\cdots a_N)=\tr\left[\Pi_b(\sigma_{a_1\cdots a_N})\right]$, with $\sigma_{a_1\cdots a_N}$ being a single-particle state encoded by NPE operations.  The collection of all such channels will be denoted as $\mc{Q}_{N}(\bm{\mc{A}};\mc{B})$, or simply $\mc{Q}_{N}$ when the input/output sets are clear.
\end{definition}
\noindent It is also interesting to consider the family of MACs that can be generated by a fixed input state $\rho^{\bm{\msf{A}}}$ under the restriction of NPE encoding.  We will write this set as $\mc{Q}_{N}(\bm{\mc{A}};\mc{B};\rho)$, or simply $\mc{Q}_{N}(\rho)$.  According to the formalism introduced here, we thus have
\begin{equation}
    \mc{Q}_{N}=\bigcup_{\rho\in\mc{B}(\mc{H}_1^{\bm{\msf{A}}})}\mc{Q}_{N}(\rho).
\end{equation}

\subsection{MACs with a Single Classical Particle}
\label{Sect:Macs with C-particle}
The classical case can easily be modeled by taking the initial state $\rho^{\bm{\msf{A}}}$ to be diagonal in the number basis.  The classical NPE operations correspond to $U$ being a permutation in Eq.~\eqref{Eq:unitary-N-P} and the decoding POVM being a projective measurement in the number basis followed by some post-processing of the measurement outcome.

To characterize the generated channels, we work this process out in more detail.  Every single-particle classical state has the form
\begin{equation}\label{Eq:state-classical}
\rho_{\cl}^{\bm{\msf{A}}}=\sum_{i=1}^N p_i \op{\mbf{e}_i}{\mbf{e}_i},
\end{equation}
where $\mbf{e}_i$ is the $i^{\text{th}}$ unit vector and $\ket{\mbf{e}_i}:=\ket{0}_{\msf{A}_1}\cdots\ket{1}_{\msf{A}_i}\cdots\ket{0}_{\msf{A}_N}$.  This can be understood as sending a classical particle (like a tennis ball) along path $i$ with probability $p_i$.  A local NPE operation amounts to stochastically letting the particle continue along its respective path or blocking it from reaching $\msf{B}$.  For each party $\msf{A}_i$, we can model this action by {a collection of }encoding maps $q_i:\mc{A}_i\to \mc{M}_i:=\{0,\mbf{e}_i\}$ with $q_i(0|a_i)$ being the probability that the particle along path $i$ is blocked for input $a_i$ and $q_i(\mbf{e}_i|a_i)$ being the probability that it is transmitted.  Hence for input state $\op{\mbf{e}_i}{\mbf{e}_i}$, the state received by $\msf{B}$ is 
\begin{align}
    \sigma_{a_i}&=\bigotimes_{j\not=i}\op{0}{0}^{\msf{A}_j}\otimes \mc{E}^{\msf{A}_i}_{a_i}(\op{1}{1})\notag\\ 
    &= q_i(\mbf{e}_i|a_i)\op{\mbf{e}_i}{\mbf{e}_i}^{\bm{\msf{A}}}+q_i(0|a_i)\op{0}{0}^{\bm{\msf{A}}}. 
\end{align}
The decoding process of party $B$ is defined by a projective measurement with classical post-processing.  Physically, this amounts to party $B$ examining each path to see if it contains a particle, and then sampling from $\mc{B}$ with probability distribution $d(b|\mbf{e}_i)$ if a particle is received along path $i$ and distribution $d(b|0)$ if no particle is received.  The stochastic mapping $d:\cup_{i=1}^N\mc{M}_i\to\mc{B}$ with transition probabilities $d(b|\mbf{e}_i)$ is referred to as the decoder, and the channel obtained after averaging over all input states is 
\begin{equation}
\label{Eq:1-N-classical}
    p(b|a_1\cdots  a_N)=\sum_{i=1}^Np_i [d(b|0)q_i(0|a_i)+d(b|\mbf{e}_i)q_i(\mbf{e}_i|a_i)]
\end{equation}
\begin{definition}
Any channel admitting a decomposition like Eq.~\eqref{Eq:1-N-classical} will be called an \textbf{$N$-local classical MAC}.  For given input/output sets $\bcA$ and $\mc{B}$, we denote the family of all such channels by $\mc{C}_{N}(\bcA;\mc{B})$, or simply $\mc{C}_{N}$.
\end{definition}

\subsection{Shared Randomness Models}

\label{Sect:SR-model}

\begin{figure}[b]
  \centering
    \includegraphics[width=0.3\textwidth]{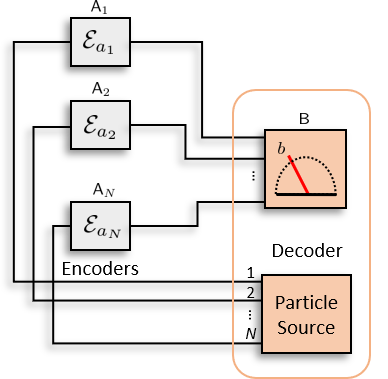}
     \caption{When randomness is shared between the source and the decoder, the decoder knows which path the particle takes, and the generated channels have the form of Eq.~\eqref{Eq:2-N-source}.  These MACs constitute the set $\mc{C}'_N(\bcA,\mc{B})$.}
     \label{Fig:N-Local-SR-source}
\end{figure}

We can modify the communication setup of Fig. \ref{Fig:N-Local} by equipping the senders and receiver with various additional resources. In this section, we introduce three additional communication models that have clear operational meaning. We will see that these three models, along with $\mc{C}_N$ previously defined, are all equivalent when the output is binary (Proposition \ref{Prop:binary-out})
, while they may be distinct when $|\mc{B}|>2$ (Tab. \ref{tab:N-local-comparision}). Let us begin by recalling the general definition of a MAC in $\mc{C}_N(\bcA,\mc{B})$:
\begin{equation}
\label{Eq:2-N-non-convex}
    p(b|a_1\cdots  a_N)=\sum_{i=1}^Np_i\sum_{m=0,\mbf{e}_i} d(b|m)q_i(m|a_i).
\end{equation}

\subsubsection{Which-Path Information}
Our first generalization allows the decoder to share correlations with the particle source (see Fig. \ref{Fig:N-Local-SR-source}). Operationally, this model captures the scenario in which the decoder knows which path the particle takes in each run of the experiment. Crucially in this model, when the decoder receives no particle, he/she knows which party performed the blocking operation, whereas for channels in $\mc{C}_N$ this information is not known. MACs generated in this scenario will have transition probabilities of the form 
\begin{align}
\label{Eq:2-N-source}
p(b|a_1\cdots  a_N)=\sum_{i=1}^Np_i\sum_{m=0,\mbf{e}_i} d_i(b|m)q_i(m|a_i),
\end{align}
and we denote the collection of these channels by $\mc{C}'_N(\bcA,\mc{B})$. Notice that now the decoding function $d_i$ can depend on the path $i$.
%The crucial difference between Eqns. \eqref{Eq:2-N-source} and \eqref{Eq:2-N-non-convex} is that in Eq.~\eqref{Eq:2-N-source}, when the decoder receives no particle, he/she knows which party performed the blocking operation whereas in Eq.~\eqref{Eq:2-N-source} this information is not known. 
The set $\mc{C}'_N(\bcA,\mc{B})$ arises most naturally in the context of nonlocal games where the decoder $\msf{B}$ (often called a ``referee'') both prepares the particle and receives it in the end \cite{Cleve-2004a, Horvat-2019a}.  %For a classical particle, the path of the particle would be chosen by the decoder.  %For a quantum particle, it is possible for more than just ``which path'' information to be shared between the decoder and particle source.  Namely, the particle could be entangled with some system held by the decoder, thus realizing an entanglement-assisted multiple access channel \cite{Hsieh-2008a}.  %Although, here one has technically stepped outside the one-particle framework of this paper since it introduces an auxiliary particle on the decoder's side that is entangled with the encoded particle.
\begin{figure}[b]
  \centering
    \includegraphics[width=0.5\textwidth]{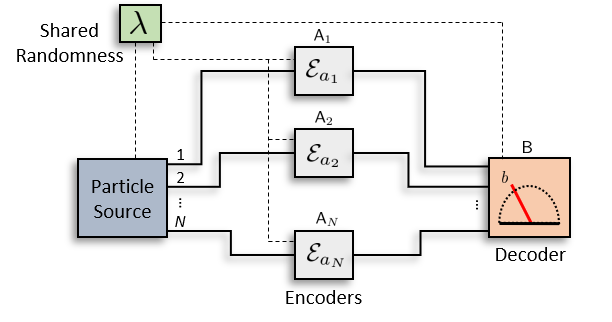}
     \caption{When randomness is shared between the source, the encoders, and the decoder, the generated channels have the form of Eq.~\eqref{Eq:2-N-convex}. These MACs constitute the convex hull of $\mc{C}_N$ and is denoted by $\conv[\mc{C}_N(\bcA,\mc{B})]$.}
     \label{Fig:N-Local-SR}
\end{figure}

\subsubsection{{Global Shared Randomness}}
An even more general model allows for shared randomness between the particle source, the encoders, and the decoder (see Fig. \ref{Fig:N-Local-SR}).  In the classical case, this describes the convex hull of $\mc{C}_N(\bcA,\mc{B})$, denoted by $\conv[\mc{C}_N(\bcA,\mc{B})]$, and a general MAC in $\conv[\mc{C}_N(\bcA,\mc{B})]$ has the form
\begin{align}
\label{Eq:2-N-convex}
     p(b|a_1\cdots  a_N)&=\sum_\lambda t_\lambda\sum_{i=1}^Np_{i|\lambda}\sum_{\mathclap{m=0,\mbf{e}_i}}d(b|m,\lambda)q_i(m|a_i,\lambda)\notag\\
     &=\sum_{i=1}^Np_{i}\sum_\lambda t_{\lambda|i}\sum_{\mathclap{m=0,\mbf{e}_i}}d(b|m,\lambda)q_i(m|a_i,\lambda). 
\end{align}
One way to interpret the difference between Eqns. \eqref{Eq:2-N-convex} and \eqref{Eq:2-N-source} is that Eq.~\eqref{Eq:2-N-source} still allows for some private randomness in the local encoding $q_i(m|a_i)$.  This is no longer the case for MACs in $\conv[\mc{C}_N(\bcA,\mc{B})]$, and without loss of generality it can be assumed that the $q_i(m|a_i,\lambda)$ are deterministic functions in Eq.~\eqref{Eq:2-N-convex} since any randomness can be absorbed into the distributions $t_{\lambda|i}$.  

\subsubsection{Separable MACs}
A final classical model we analyze is simply a convex combination of point-to-point channels between senders $\msf{A}_i$ and receiver $\msf{B}$.  This scenario removes the one-particle constraint entirely while still maintaining locality.  Since we provide a full characterization of these MACs in Section \ref{Sect:N-local-separable}, we highlight them in the following definition.
\begin{definition}
Any $N$-party MAC whose transition probabilities decompose like 
\begin{equation}
\label{Eq:1-N-separable}
    p(b|a_1\cdots  a_N)=\sum_{i=1}^Np_i g_i(b|a_i),
\end{equation}
will be called \textbf{separable}.  We denote the set of separable MACs by $\mc{C}_{N}^{(\text{sep})}(\bcA;\mc{B})$, or simply $\mc{C}_{N}^{(\text{sep})}$ when the input/output sets are clear.
\end{definition}
\noindent Physically, any MAC in $\mc{C}_{N}^{(\text{sep})}(\bcA;\mc{B})$ can be implemented by point-to-point communication between party $\msf{A}_i$ and $\msf{B}$ with probability $p_i$ and using a noisy channel with transition probabilities $g_i(b|a_i)$.  Such communication could be facilitated by introducing more particles to the system or using internal degrees of freedom of the single particle.  

By examining Eqns. \eqref{Eq:2-N-non-convex} -- \eqref{Eq:1-N-separable}, we see that
\begin{equation}
\label{eq:inclusion N-local}
    \mc{C}_{N}(\bcA;\mc{B})\subseteq \mc{C}'_{N}(\bcA;\mc{B})\subseteq\conv[\mc{C}_{N}(\bcA;\mc{B})]\subseteq\mc{C}_{N}^{(\text{sep})}(\bcA;\mc{B}).
\end{equation}
%However, for binary-output MACs (i.e. $\mc{B}=[2]$) all these sets are equivalent. More detailed discussions on the relationships between these sets can be found in Section \ref{Sect:Beyond-binary}
\subsection{(N,K)-local MACs with a single classical particle}
\label{Sect:N-K Macs}
In both the quantum and classical scenarios considered thus far, spatial separation is enforced between the parties and they are unable to communicate with each other.  However, we can consider relaxations to the locality constraint and again compare the powers of classical and quantum MACs \cite{Horvat-2019a}.  The construction is as follows (See also Fig \ref{Fig:N,K-Local}).  Let $S\subseteq\{1,2,\cdots,N\}$ represent a subset of $|S|$ paths with $|S|<N$.  We now suppose that all parties belonging to these paths can coordinate their signal to the receiver.  This means that for each choice of messages, the parties can map a particle traveling along any path in $S$ to any other path in $S$.  Like before, we represent this by a stochastic encoding map $q_S:\bigtimes_{s\in S}\mc{A}_s\to \mc{M}_S$ where $\mc{M}_S=\{0,\mbf{e}_s:s\in S\}$.  
\begin{figure}[t]
  \centering
    \includegraphics[width=0.5\textwidth]{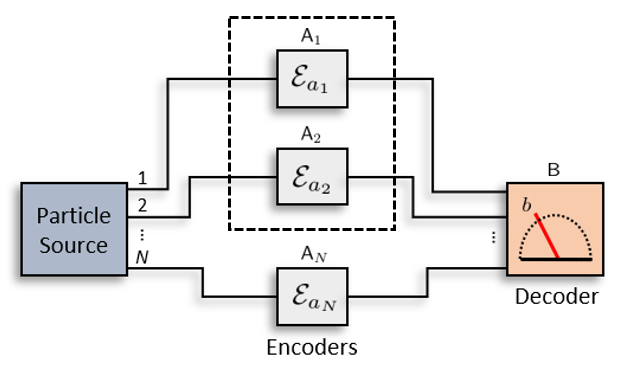}
    \caption{A physical implementation of the $(N,K)$-local MAC $\mbf{p}_{B|\bm{A}}$ using a single particle, where at most $K$ parties are allowed to encode jointly ({the case $K=2$ is depicted}). Classically, the joint encoding will be simply implemented as a permutation between $K$ paths and the vacuum ancilla port. The joint encoding will be represented by stochastic encoding maps $q_S:\bigtimes_{s\in S}\mc{A}_s\to \mc{M}_S$ where $\mc{M}_S=\{0,\mbf{e}_s:s\in S\}$}
     \label{Fig:N,K-Local}
\end{figure}
\begin{definition}
For non-negative integer $K$, a classical MAC will be called \textbf{$(N,K)$-local} if it can be decomposed as
\begin{equation}
\label{Eq:N,K-local-general}
    p(b|a_1\cdots  a_N)=\sum_{\substack{S\\|S|=K}}p_S\sum_{m\in\mc{M}_S}[d(b|m)q_S(m|(a_s)_{s\in S})],
\end{equation}
where the outer sum is over all subsets of $\{1,2,\cdots, N\}$ whose cardinality is $K$, and $p_S$ is the probability that the particle is initially prepared in one of the paths belonging to $S$.  The collection of all such MACs will be denoted by $\mc{C}_{N,K}(\bcA;\mc{B})$, or simply $\mc{C}_{N,K}$.  Note that $\mc{C}_{N}=\mc{C}_{N,1}$.
\end{definition}
For the sets $\mc{C}_{N,K}$ with $K>1$, even more models can be considered under the introduction of shared randomness.  This is due to the fact that the receiver need not know which subset of $K$ parties are jointly encoding, even if the initial particle is known.  A somewhat less complex scenario is when the decoder knows both the initial particle as well as the jointly encoding parties.  MACs generated in this way have the form
\begin{equation}
\label{eq: with rand}
    p(b|a_1\cdots  a_N)=\sum_{\mathclap{\substack{S\\|S|=K}}}p_S\sum_{\mathclap{m\in\mc{M}_S}}[d_S(b|m)q_S(m|(a_s)_{s\in S})],
\end{equation}
with $\mc{M}_s$ being an encoding set consisting of $K+1$ elements $\{0,\mbf{e}_s:s\in S\}$.  We denote the collection of these MACs by $\mc{C}_{N,K}'(\bcA,\mc{B})$.  Compare this with the set $\mc{C}_{N,K}(\bcA,\mc{B})$ in Eq.~\eqref{Eq:N,K-local-general}, which consists of MACs in which the decoder $d(b|m)$ does not depend on knowledge of the encoding parties $S$. We can also consider the convex hull  $\conv[\mc{C}_N(\bcA,\mc{B})]$ and the set of $(N,K)$-separable MACs, $\mc{C}_{N,K}^{(\text{sep})}(\bcA,\mc{B})$.  The latter consists of MACs having the form  
\begin{equation}
\label{Eq:N,K-local-binary-out}
 p(b|a_1\cdots  a_N)=\sum_{\substack{S\\|S|=K}}p_Sg_S(b|(a_s)_{s\in S}).
\end{equation} 
Similar to Eq.~\eqref{eq:inclusion N-local}, we have an inclusion relationship between different $(N,K)$-local MACs as $\mc{C}_{N,K}(\bcA;\mc{B})\subseteq \mc{C}'_{N,K}(\bcA;\mc{B})\subseteq\conv[\mc{C}_{N,K}(\bcA;\mc{B})]\subseteq\mc{C}_{N,K}^{(\text{sep})}(\bcA;\mc{B})$. More discussion on the relationships between these classes will be given in Section \ref{Sect:Beyond-binary}.  However, we close this section by making an important observation for binary-output MACs.
\begin{proposition}
\label{Prop:binary-out}
$\mc{C}_{N,K}(\bcA;[2])=\mc{C}_{N,K}^{(\text{sep})}(\bcA;[2])$ for arbitrary input set $\bcA$ and any $K\geq 1$. 
\end{proposition}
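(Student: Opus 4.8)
The plan is to lean entirely on the inclusion chain already recorded just above the statement,
\[\mc{C}_{N,K}(\bcA;\mc{B})\subseteq \mc{C}'_{N,K}(\bcA;\mc{B})\subseteq\conv[\mc{C}_{N,K}(\bcA;\mc{B})]\subseteq\mc{C}_{N,K}^{(\text{sep})}(\bcA;\mc{B}),\]
so that it suffices to establish the one reverse inclusion $\mc{C}_{N,K}^{(\text{sep})}(\bcA;[2])\subseteq\mc{C}_{N,K}(\bcA;[2])$; proving this collapses all four classes at once. Concretely, I would take an arbitrary separable MAC written as in Eq.~\eqref{Eq:N,K-local-binary-out}, namely $p(b|a_1\cdots a_N)=\sum_{S:\,|S|=K}p_S\,g_S(b|(a_s)_{s\in S})$ with $b\in\{0,1\}$, and exhibit an explicit decomposition of the form Eq.~\eqref{Eq:N,K-local-general}.

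The key observation for binary output is that one may take $d$ to be the ``particle-detection'' decoder: $d(1|\mbf{e}_i)=1$ for every path $i$ and $d(1|0)=0$ (equivalently $d(0|0)=1$, $d(0|\mbf{e}_i)=0$). This is a legitimate stochastic map on $\{0,\mbf{e}_1,\dots,\mbf{e}_N\}\supseteq\mc{M}_S$, and — crucially — it does not depend on $S$, which is precisely the feature separating $\mc{C}_{N,K}$ from $\mc{C}'_{N,K}$. For the joint encoders, fix for each size-$K$ subset $S$ (nonempty since $K\ge 1$) a representative $s_0=s_0(S)\in S$ and set $q_S(\mbf{e}_{s_0}|(a_s)_{s\in S})=g_S(1|(a_s)_{s\in S})$, $q_S(0|(a_s)_{s\in S})=g_S(0|(a_s)_{s\in S})$, and $q_S(\mbf{e}_s|\cdot)=0$ for the remaining $s\in S\setminus\{s_0\}$. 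Each $q_S$ is nonnegative and normalized on $\mc{M}_S$ because $g_S(\cdot|\cdot)$ is, so it is a valid encoder. Substituting into Eq.~\eqref{Eq:N,K-local-general}, the $b=1$ term reduces to $\sum_S p_S\,q_S(\mbf{e}_{s_0}|\cdot)=\sum_S p_S\,g_S(1|\cdot)$ and the $b=0$ term to $\sum_S p_S\,q_S(0|\cdot)=\sum_S p_S\,g_S(0|\cdot)$, reproducing the original MAC and placing it in $\mc{C}_{N,K}(\bcA;[2])$.

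Honestly, there is no heavy computation here; the entire content is the realization that a binary output lets the decoder be a mere presence-or-absence detector, which is automatically $S$-independent and thereby trivializes the joint encoders. The only points that genuinely need care are (i) verifying that one and the same fixed $d$ works simultaneously for all subsets $S$ — so that the constructed MAC lands in $\mc{C}_{N,K}$ and not merely in $\mc{C}'_{N,K}$ — and (ii) confirming the $q_S$ are bona fide stochastic maps. It is also worth noting the slightly degenerate case $K=N$, where there is a single subset $S=\{1,\dots,N\}$ and the same construction applies verbatim.
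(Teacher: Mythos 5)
Your construction is exactly the paper's own proof: the single particle-detection decoder $d(0|0)=1$, $d(1|\mbf{e}_i)=1$ for all $i$, together with encoders that route all of $g_S(1|\cdot)$ onto one fixed representative path of each subset $S$, is precisely what the paper uses (there the representative is called ``$\mbf{e}_S$, any fixed element of $\mc{M}_S$''). Your write-up is correct and merely spells out the verification and the collapse of the intermediate classes in more detail.
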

\begin{proof}
Clearly $\mc{C}_{N,K}(\bcA;[2])\subseteq\mc{C}_{N,K}^{(\text{sep})}(\bcA;[2])$, conversely Let $p(b|\mbf{a})=\sum_{s=1}^Np_Sg_S(b|(a_s)_{s\in S})$ be an arbitrary separable MAC with $b\in\{0,1\}$.   Consider the deterministic decoder $d(0|0)=1$ and $d(1|\mbf{e}_i)=1$ for all $i$, and encoders $q_S(0|(a_s)_{s\in S})=g_S(0|(a_s)_{s\in S})$ and $q_S(\mbf{e}_S|(a_s)_{s\in S})=g_S(1|(a_s)_{s\in S})$, where $\mbf{e}_S$ can be any fixed element in $\mc{M}_S$. With these choices, we can construct any point in $\mc{C}_{N,K}^{(\text{sep})}(\bcA;[2])$ with point in $\mc{C}_{N,K}(\bcA;[2])$ 
\end{proof}
{The above proposition can not be generalized to arbitrary classical MACs. For example, with $N=K=1$ and $|\mc{B}|>2$, the MACs in $\mc{C}_{1,1}(\bcA;\mc{B})$ can generate at most one bit of information since there are only two different outputs (one particle/no particle), however, Macs in $\mc{C}_{1,1}^{(\text{sep})}(\bcA;\mc{B})$ doesn't have such limitation. A more general discussion will be given in the next section.}
\section{Classical MACs}
\label{Sect:classical MACs}
Let us now examine in more detail the structure of the classical sets $\mc{C}_{N}(\bcA;\mc{B})$ and $\mc{C}_{N,K}(\bcA;\mc{B})$.  This will allow us to make a comparison with quantum MACs in the next section.  Following the standard approach \cite{Brunner-2014a}, we envision each MAC in $\mc{C}_{N,K}(\bcA;\mc{B})$ as a point $\mbf{p}_{B|\bA}$ in ($|\mc{B}|\times\prod_{k=1}^N|\mc{A}_k|$)-dimensional Euclidean space with coordinates $(p(b|a_1\cdots  a_N))_{b\in\mc{B},a_i\in\mc{A}_i}$.  To proceed with our analysis, the question of whether or not $\mc{C}_{N,K}(\bcA;\mc{B})$ is convex is vitally important.  %\todo{Proposition \ref{Prop:binary-out} in section \ref{Sect:Beyond-binary} shows that $\mc{C}_{N,K}$ is convex whenever $|\mc{B}|=2$.  Let us focus on this case now, and $(N,K)$-local MACs with larger output sets will be briefly considered in Section \ref{Sect:Beyond-binary}.}
As a consequence of Proposition \ref{Prop:binary-out} above, when $|\mc{B}|=2$ the set $\mc{C}_{N,K}(\bcA;[2])$ is a convex polytope whose vertices are deterministic $K$-party MACs, which are those satisfying $p(b|a_1\cdots  a_N)=p(b|a_{i_1},\cdots,a_{i_K})\in\{0,1\}$ for $\{i_1,\cdots,i_K\}\subseteq\{1,\cdots,N\}$.  Considering all possible groupings of $K$ parties with $M=\max\{|\mc{A}_1|,\cdots,|\mc{A}_N|\}$, we see there are no more than $\binom{N}{K}2^{M^K}$ such vertices, denoted by $\mbf{v}_{\lambda}$, and $\mc{C}_{N,K}$ is the polytope contained in their convex hull; i.e. 
\begin{equation}
\label{Eq:polytope-vertex}
\mbf{p}_{B|\bA}\in\mc{C}_{N,K} \quad\Leftrightarrow\quad \mbf{p}_{B|\bA}=\sum_{\lambda} p_\lambda \mbf{v}_{\lambda}
\end{equation}
for a valid probability distribution $(p_\lambda)_{\lambda}$.  

Before moving forward, let us first recall a few general facts about convex and affine sets (see Ref. \cite{Barvinok-2002a} for the details).  An affine subspace $\mc{A}\subset\mbb{R}^d$ is a collection of points that is closed under affine combinations, i.e. linear combinations of the form $\sum_i \lambda_i\mbf{v}_i$ with $\sum_i\lambda_i=1$.  A collection of points $\{\mbf{v}_0,\mbf{v}_1,\cdots\mbf{v}_l\}$ in $\mbb{R}^d$ are called affinely independent if the $l$ vectors $\{\mbf{v}_1-\mbf{v}_0,\cdots,\mbf{v}_l-\mbf{v}_0\}$ are linearly independent.  An affine subspace $\mc{A}$ is said to have dimension $\dim\mc{A}$ if the maximum number of affinely independent points it contains is $\dim\mc{A}+1$.   For an arbitrary collection of points $\mc{P}\subset\mbb{R}^d$, its dimension is the dimension of the smallest affine subspace that contains $\mc{P}$.  If all the vectors in $\mc{P}$ are known to satisfy a system of $k$ linearly independent equations, then Gaussian elimination shows that $\mc{P}$ is contained in an affine subspace of dimension $d-k$.  If, further, $d-k+1$ affinely independent vectors are shown to exist in $\mc{P}$, then $\mc{P}$ has dimension $d-k$. 

Let $\mc{P}$ be a convex polytope.  For a fixed $\mbf{r}\in\mbb{R}^d$ and $s\in\mbb{R}$, we say that an inequality $\mbf{v}\cdot\mbf{r}\leq s$ is ``valid'' for a  $\mc{P}$ if it is satisfied by every $\mbf{v}\in\mc{P}$.  %Clearly the inequality is valid if and only if it is satisfied by all the verticies of $\mc{P}$.  
Valid inequalities of a polytope are useful when trying to certify that some element $\mbf{v}$ is not a member of $\mc{P}$ since it suffices to show that $\mbf{v}$ does not satisfy the particular relation $\mbf{v}\cdot\mbf{r}\leq s$, a simple linear calculation.  On the other hand, if $\mbf{v}$ satisfies the inequality, then in general one cannot conclude that $\mbf{v}\in\mc{P}$; additional conditions are needed to certify membership.  Specifically, the Weyl-Minkowski Theorem states that any $d$-dimensional polytope $\mc{P}$ can be characterized by a finite family of valid and ``tight'' inequalities.  In other words,
\begin{equation}
\label{Eq:polytope-polyhedron}
\mbf{v}\in\mc{P} \Leftrightarrow \mc{P}=\{\mbf{v}\in\mbb{R}^d\;|\:\mbf{v}\cdot\mbf{r}_\lambda\leq s_\lambda,\;\forall \lambda=1,\cdots,n\}.
\end{equation}
Associated with each inequality $\mbf{v}\cdot\mbf{r}_\lambda\leq s_\lambda$ is the hyperplane $H_\lambda=\{\mbf{v}\;|\;\mbf{v}\cdot\mbf{r}_\lambda= s_\lambda\}$, and since $\mc{P}$ is $d$-dimensional, its intersection with $H_\lambda$ forms a $(d-1)$-dimensional affine subspace.  The property $\dim H_\lambda\cap\mc{P}=d-1$ is what it means for the corresponding inequality to be tight, and hence a point $\mbf{v}$ is an element of $\mc{P}$ if and only if it satisfies a finite family of valid and tight inequalities.  In the language of quantum information science, these are known as tight Bell Inequalities \cite{Masanes-2002a, Pironio-2005a}.

Returning to the problem at hand, we aim to compute tight Bell Inequalities for the polytope $\mc{C}_{N,K}(\bcA;\mc{B})$. Moving from the V-representation of $\mc{C}_{N,K}(\bcA;\mc{B})$ given in Eq.~\eqref{Eq:polytope-vertex} to an H-representation in the form of Eq.~\eqref{Eq:polytope-polyhedron} can be a formidable task.  We rely mainly on numerical software such as PORTA, which for small dimensions performs the calculation using the Fourier-Motzkin elimination method in a short amount of time \cite{porta-1997}.  However, we are able to analytically prove that the so-called fingerprinting inequality (see Eq.~\eqref{eq:fpg}) is a tight Bell Inequality for arbitrary $N$ and $K$ when dealing with binary inputs and output.  This is done in Section \ref{Sect:Generalize-fingerprinting} by first calculating the dimension of $\mc{C}_{N,K}([2]^N;[2])$ and then showing that a sufficiently large number of affinely independent points saturate the inequality.  The case of $N$-local MACs (i.e. $K=1$) is special in that its dimension is $N+1$ and the only inequalities are positivity constraints. We turn next to establishing this result.

\subsection{$N$-Local and Separable MACs}

\label{Sect:N-local-separable}

In this section, we aim to characterize the sets of $N$-local MACs $\mc{C}_{N}(\bcA;\mc{B})$.  As indicated above, $\mc{C}_{N}(\bcA;\mc{B})$ is a convex polytope when $|\mc{B}|=2$, whereas this fails to be the case when $|\mc{B}|>2$ (see Section \ref{Sect:Beyond-binary}). To provide a unified treatment based on convexity, we will therefore consider $\conv[\mc{C}_N(\bcA;\mc{B})]$ and the more general $\mc{C}_N^{(\text{sep})}(\bcA;\mc{B})$ when $|\mc{B}|>2$.  
 
 Important quantities in our study of $N$-local polytopes are the so-called second-order interference \cite{Born-1926,Grangier-1986,Jacques-2005,Sorkin-1994a}.  The physical meaning of this name originates from double-slit experiments, as will be described more in Section \ref{Sect:N-local-quantum}.  However, for the moment we simply identify it as a particular linear functional on the set of $N$-partite MACs \cite{Biswas-2017a}.  For two parties with binary inputs, the second-order interference is given by
\begin{equation}
\label{Eq:I2}
    I_2=p(0|0,0)+p(0|1,1)-p(0|0,1)-p(0|1,0).
\end{equation}
As we add more parties and more inputs/outputs, we consider this basic linear combination for certain pairs of two-party inputs. For a given MAC $\mbf{p}_{B|\bA}$, let $\mc{M}^{(i,j)}(\mbf{p}_{B|\bA})$ denote the set of all two-sender MACs obtained by fixing different inputs for $a_k$, $k\not=i,j$.  Then define $ I_2^{(i,j)}(\mbf{p}_{B|\bA})$ as:
\begin{align}
  \max 
    \;&\left|p(b|a_i,a_j)+p(b|a_i',a_j')-p(b|a_i,a_j')-p(b|a_i',a_j)\right|\notag\\
    \text{s.t}&\;\; \mbf{p}_{B|A_iA_j}\in\mc{M}^{(i,j)}(\mbf{p}_{B|\bA})\notag\\
    &\;\;b\in\mc{B},\; \;a_i,a_i'\in\mc{A}_i,\;\;a_j,a_j'\in\mc{A}_j.
\end{align}
% \begin{align}
%     I_2^{(i,j)}(\mbf{p}_{B|\bA}):=
%   \max 
%     \;&\left|p(b|a_i,a_j)+p(b|a_i',a_j')-p(b|a_i,a_j')-p(b|a_i',a_j)\right|\notag\\
%     \text{s.t}&\;\; \mbf{p}_{B|A_iA_j}\in\mc{M}^{(i,j)}(p_{B|\bA})\notag\\
%     &\;\;b\in\mc{B},\; \;a_i,a_i'\in\mc{A}_i,\;\;a_j,a_j'\in\mc{A}_j.
% \end{align}
If $I_2^{(i,j)}(\mbf{p}_{B|\bA})=0$, then parties $\msf{A}_i$ and $\msf{A}_j$ cannot demonstrate any nonzero second-level interference using $\mbf{p}_{B|\bA}$, regardless of what inputs the other parties choose and what output is considered.  
\begin{definition}
The set of all $N$-sender MACs $\bcA\to\mc{B}$ whose second-order interference vanish for all $i,j$ will be denoted by $\mf{I}_N(\bcA;\mc{B})$; i.e. \begin{equation}
     \mbf{p}_{B|\bA}\in \mf{I}_N(\bcA;\mc{B}) \Leftrightarrow I_2^{(i,j)}(\mbf{p}_{B|\bA})=0\;\;\;\forall (i,j).
 \end{equation}
\end{definition}

The main result of this section is that $\mf{I}_N(\bcA;\mc{B})$ corresponds precisely to the set of $N$ separable MACs. For binary input/output MACs, this fact has also been independently established in the Masters Thesis of Horvat Ref.\cite{Horvat-2019b}. Here we prove the relationship for arbitrary inputs and outputs.
\begin{theorem}
\label{Thm:classical-I2-N}
\begin{equation}
\mc{C}^{(\text{sep})}_N(\bcA;\mc{B})=\mf{I}_N(\bcA;\mc{B})
\end{equation}
for any $\bcA$ and $\mc{B}$.
\end{theorem}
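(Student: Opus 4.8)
The plan is to prove the two inclusions separately; the reverse inclusion $\mf{I}_N(\bcA;\mc{B})\subseteq\mc{C}^{(\text{sep})}_N(\bcA;\mc{B})$ carries essentially all of the work, and the forward inclusion is a routine check.

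For $\mc{C}^{(\text{sep})}_N\subseteq\mf{I}_N$, I would substitute $p(b|\mbf{a})=\sum_{i=1}^N p_i g_i(b|a_i)$ into the alternating combination defining $I_2^{(k,l)}$, holding $a_m$ fixed for $m\ne k,l$. Every summand with $i\ne k,l$ is then a constant that cancels in the alternating sum, and the $i=k$ and $i=l$ summands cancel because each depends on only one of the two varied inputs. Hence $I_2^{(k,l)}=0$ for all pairs, so $\mbf{p}_{B|\bA}\in\mf{I}_N(\bcA;\mc{B})$.

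For the reverse inclusion I would argue in two stages. First, fix a reference input $\mbf{a}^0=(a_1^0,\dots,a_N^0)$, write $f_i(b|a_i)$ for the transition probability obtained by setting input $i$ to $a_i$ and every other input to its reference value, and put $\alpha(b):=p(b|\mbf{a}^0)$. I claim that vanishing of all pairwise second-order interferences forces the additive form
\[
 p(b|a_1\cdots a_N)=\sum_{i=1}^N f_i(b|a_i)-(N-1)\alpha(b).
\]
This is proved by induction on the number of coordinates in which $\mbf{a}$ differs from $\mbf{a}^0$. The cases of zero or one non-reference coordinate hold by definition of $f_i$ and $\alpha$, and when $\mbf{a}$ has at least two non-reference coordinates $k,l$, the relation $I_2^{(k,l)}=0$ (with the remaining inputs held at their values in $\mbf{a}$) rewrites $p(b|\mbf{a})$ as a signed sum of three transition probabilities, each with strictly fewer non-reference coordinates; applying the inductive hypothesis to these three and collecting terms reproduces the displayed formula.

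The second stage promotes this affine decomposition to a genuine convex one, and this is where I expect the main obstacle to lie. Each $f_i(\cdot|a_i)$ and $\alpha(\cdot)$ is a probability distribution on $\mc{B}$, but the naive symmetric split of $\alpha$ across the $N$ terms need not keep the pieces nonnegative. Instead, for each output $b$ I would set $m_i(b):=\min_{a_i}f_i(b|a_i)\ge 0$; evaluating $p(b|\mbf{a})\ge 0$ at the coordinatewise minimizers gives $\sum_i m_i(b)\ge(N-1)\alpha(b)$. This slack lets me choose $0\le\alpha_i(b)\le m_i(b)$ with $\sum_i\alpha_i(b)=(N-1)\alpha(b)$ (greedily, say). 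Then $\tilde g_i(b|a_i):=f_i(b|a_i)-\alpha_i(b)\ge 0$ satisfies $\sum_i\tilde g_i(b|a_i)=p(b|\mbf{a})$, the quantity $p_i:=\sum_b\tilde g_i(b|a_i)=1-\sum_b\alpha_i(b)$ is independent of $a_i$ with $\sum_i p_i=N-(N-1)=1$, and normalizing $g_i:=\tilde g_i/p_i$ (discarding any $i$ with $p_i=0$) exhibits $\mbf{p}_{B|\bA}$ as a separable MAC. The first stage is routine if slightly bookkeeping-heavy; the crux is this conversion, where nonnegativity of the MAC must be shown to supply exactly the slack needed to balance positivity of the pieces against their normalization.
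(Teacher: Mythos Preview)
Your proof is correct and takes a genuinely different route from the paper's. The paper proves $\mf{I}_N\subseteq\mc{C}^{(\text{sep})}_N$ by an extreme-point argument: it shows (Lemma~1) that every extreme point of the polytope $\mf{I}_2([2]\times[2];\mc{B})$ is a local deterministic MAC, via a counting of binding constraints and a case analysis on $|\mc{B}|$; it then bootstraps to larger input alphabets and more parties by repeatedly restricting to pairs of binary sub-inputs and peeling off separable pieces. Your argument is instead fully constructive: you first derive the affine identity $p(b|\mbf{a})=\sum_i f_i(b|a_i)-(N-1)\alpha(b)$ directly from the vanishing of $I_2$ (the paper records this identity as Eq.~\eqref{Eq:I2-prob-relationships} only in the binary case, and only as a remark \emph{after} the theorem), and then convert it to a genuine convex combination by the slack argument $\sum_i m_i(b)\ge(N-1)\alpha(b)$. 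This sidesteps both the extreme-point classification and the iterated binary-splitting, handles arbitrary $\bcA$ and $\mc{B}$ in one pass, and produces an explicit separable decomposition rather than a mere existence statement. The paper's approach, on the other hand, yields the structural byproduct that the vertices of $\mf{I}_N$ are exactly the local deterministic channels, which your method does not directly exhibit.
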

\noindent Since $\mc{C}^{(\text{sep})}_N(\bcA;[2])=\mc{C}_N(\bcA;[2])$ (as shown in Proposition \ref{Prop:binary-out}), this theorem immediately implies the following.
\begin{corollary}
\begin{equation}
\mc{C}_N(\bcA;[2])=\mf{I}_N(\bcA;[2])
\end{equation}
for any $\bcA$.
\end{corollary}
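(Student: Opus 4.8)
The plan is to prove the two inclusions of Theorem~\ref{Thm:classical-I2-N} separately, after which the corollary follows immediately from Proposition~\ref{Prop:binary-out}. The inclusion $\mc{C}^{(\text{sep})}_N(\bcA;\mc{B})\subseteq\mf{I}_N(\bcA;\mc{B})$ is a direct computation: if $p(b|\bm{a})=\sum_k p_k g_k(b|a_k)$ and we fix every input except $a_i$ and $a_j$, then every summand with $k\neq i,j$ is constant and contributes $p_kg_k(b|a_k)\cdot(1+1-1-1)=0$ to the alternating combination defining $I_2^{(i,j)}$, while the $k=i$ summand contributes $p_i[g_i(b|a_i)+g_i(b|a_i')-g_i(b|a_i)-g_i(b|a_i')]=0$ and likewise for $k=j$; hence $I_2^{(i,j)}(\mbf{p}_{B|\bA})=0$ for all $(i,j)$.

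For the reverse inclusion, take $\mbf{p}_{B|\bA}\in\mf{I}_N(\bcA;\mc{B})$. Unwinding the definition, for each output $b$, each pair $i\neq j$, and each fixed assignment of the remaining inputs, the function $(a_i,a_j)\mapsto p(b|\bm{a})$ has vanishing alternating differences, i.e.\ it is of the form ``function of $a_i$ plus function of $a_j$''. The crucial step is to promote this pairwise additivity to global additivity: fixing a reference profile $\bm{a}^0$, I will show by induction on $N$ that $p(b|\bm{a})=c_b+\sum_{i=1}^N h_{b,i}(a_i)$, where $c_b=p(b|\bm{a}^0)$ and $h_{b,i}(a_i):=p(b|a_1^0,\dots,a_i,\dots,a_N^0)-c_b$, so that $h_{b,i}(a_i^0)=0$. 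In the inductive step one isolates the last coordinate via $\phi(\bm{a}):=p(b|\bm{a})-p(b|\bm{a}|_{a_N\to a_N^0})$; applying the no-interaction identity to each pair $(i,N)$ with $i<N$ shows $\phi(\bm{a})=\phi(\bm{a}|_{a_i\to a_i^0})$, so $\phi$ depends on $a_N$ alone, and the induction hypothesis handles the restriction $a_N=a_N^0$. I expect this additive decomposition to be the main obstacle, as it is the one genuinely structural step; the remaining work is bookkeeping with normalizations and positivity.

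Next I convert the additive form into a bona fide separable decomposition. Summing over $b$ and using $\sum_b p(b|\bm{a})=1$ for all $\bm{a}$ forces $\sum_b c_b=1$ and $\sum_b h_{b,i}(a_i)=0$ for every $i$ and $a_i$. Evaluating the positivity constraint $p(b|\bm{a})\geq 0$ at the profile that simultaneously minimizes each $h_{b,i}$ yields the key estimate $\sum_i |m_{b,i}|\leq c_b$, where $m_{b,i}:=\min_{a_i}h_{b,i}(a_i)\leq 0$ (the maximum being $\le 0$ because $h_{b,i}(a_i^0)=0$). Consequently the shifted functions $\tilde h_{b,i}:=h_{b,i}-m_{b,i}\geq 0$ give $p(b|\bm{a})=c_b'+\sum_i\tilde h_{b,i}(a_i)$ with $c_b':=c_b-\sum_i|m_{b,i}|\geq 0$ and every summand nonnegative.

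Finally, set $w_i:=\sum_b|m_{b,i}|$, which equals $\sum_b\tilde h_{b,i}(a_i)$ and is constant in $a_i$ by the normalization above, and $w_0:=\sum_b c_b'$; then $w_0+\sum_{i=1}^N w_i=1$. Whenever $w_i>0$, $g_i(b|a_i):=\tilde h_{b,i}(a_i)/w_i$ is a legitimate conditional distribution (nonnegative, summing to $1$ over $b$), and $g_0(b):=c_b'/w_0$ is a distribution on $\mc{B}$, so $p(b|\bm{a})=w_0 g_0(b)+\sum_{i=1}^N w_i g_i(b|a_i)$. Since at least one $w_i$ is positive, the input-independent term $w_0g_0(b)$ can be absorbed into any summand with positive weight (replacing $w_ig_i(b|a_i)$ by $(w_0+w_i)\cdot\frac{w_0g_0(b)+w_ig_i(b|a_i)}{w_0+w_i}$, again a valid channel), producing exactly the form of Eq.~\eqref{Eq:1-N-separable}; hence $\mbf{p}_{B|\bA}\in\mc{C}^{(\text{sep})}_N(\bcA;\mc{B})$. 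The corollary then follows at once: by Proposition~\ref{Prop:binary-out}, $\mc{C}_N(\bcA;[2])=\mc{C}^{(\text{sep})}_N(\bcA;[2])$, which equals $\mf{I}_N(\bcA;[2])$ by the theorem just proved.
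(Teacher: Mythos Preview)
Your argument is correct and complete (with one harmless caveat below), and it takes a genuinely different route from the paper.

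The paper derives the corollary from Theorem~\ref{Thm:classical-I2-N} and Proposition~\ref{Prop:binary-out} just as you do, but its proof of the hard inclusion $\mf{I}_N\subseteq\mc{C}^{(\text{sep})}_N$ goes through extreme-point analysis: Lemma~\ref{Lem:extremal-binary-input} shows, by counting binding constraints in the polytope $\mf{I}_2([2]\times[2];\mc{B})$, that every extremal MAC there is local deterministic; the paper then lifts this to arbitrary input alphabets by binary expansion of each $a_i$, and to $N$ parties by iterated two-party separation. Your approach instead establishes directly, by a telescoping induction on $N$, the additive representation $p(b|\bm{a})=c_b+\sum_i h_{b,i}(a_i)$, and then converts it into a bona fide convex combination by shifting each $h_{b,i}$ by its minimum and reading off the weights $w_i$ from the normalization identities. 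This is more constructive: it produces an explicit separable decomposition from the data, and it avoids both the case analysis of the extreme-point lemma and the binary-splitting bookkeeping. The paper's approach, on the other hand, makes the polytope structure visible and connects naturally to the facet discussion that follows.

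One small slip: the assertion ``at least one $w_i$ is positive'' can fail---a constant MAC has every $h_{b,i}\equiv 0$, hence all $w_i=0$ and $w_0=1$. But in that degenerate case $p(b|\bm{a})=g_0(b)$ is trivially separable (take $p_1=1$ and $g_1(b|a_1)\equiv g_0(b)$), so the conclusion is unaffected.
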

We now turn to the proof of Theorem \ref{Thm:classical-I2-N}.  From a direct evaluation of $I_2^{(i,j)}(\mbf{p}_{B|\bA})$ for any separable $\mbf{p}_{B|\bm{A}}$ given in Eq.~(\ref{Eq:1-N-separable}), it is easy to verify that $\mc{C}^{(\text{sep})}_N(\bm{\mc{A}};\mc{B})\subseteq \mf{I}_N(\bcA;\mc{B})$.  

To prove the converse, the main idea will be to identify the extreme points of $\mf{I}_N(\bcA;\mc{B})$.  Recall that a MAC $\mbf{p}_{B|\bA}\in\mf{I}_N(\bcA;\mc{B})$ is extremal if it cannot be decomposed into a convex combination of other MACs belonging to $\mf{I}_N(\bcA;\mc{B})$, i.e. $\mbf{p}_{B|\bA}\not=\lambda \mbf{p}_{B|\bA}'+(1-\lambda)\mbf{p}_{B|\bA}''$ for distinct $\mbf{p}_{B|\bA}',\mbf{p}_{B|\bA}''\in\mf{I}_N(\bcA;\mc{B})$ and $\lambda\in (0,1)$.  We will show that every extremal MAC in $\mf{I}_N(\bcA;\mc{B})$ is local deterministic.  This means that it has the form
\begin{equation}
\label{Eq:local-deterministic-MAC}
    p(b|a_1\cdots  a_N)=\delta_{bf(a_i)}
\end{equation}
for some party $\msf{A}_i$ and function $f:\mc{A}_i\to\mc{B}$.  Since these MACs belong to $\mc{C}^{(\text{sep})}_N(\bcA;\mc{B})$, it follows that $\mf{I}_N(\bcA;\mc{B})\subseteq \mc{C}^{(\text{sep})}_N(\bcA;\mc{B})$.  We first prove the case when $\bcA=[2]\times[2]$.
\begin{lemma}
\label{Lem:extremal-binary-input}
Every extremal MAC in $\mf{I}_2([2]\times[2];\mc{B})$ is local deterministic, and therefore $\mf{I}_N([2]\times[2];\mc{B})\subseteq \mc{C}^{(\text{sep})}_N([2]\times[2];\mc{B})$.
\end{lemma}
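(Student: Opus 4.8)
The plan is to prove the set equality $\mf{I}_2([2]\times[2];\mc{B})=\mc{C}^{(\text{sep})}_2([2]\times[2];\mc{B})$; the claim about extremal MACs then follows for free. Indeed, $\mc{C}^{(\text{sep})}_2([2]\times[2];\mc{B})$ is the convex hull of the two polytopes of single-sender channels $\msf{A}_1\to\mc{B}$ and $\msf{A}_2\to\mc{B}$ (each embedded as a MAC that ignores the other input), so its extreme points are the deterministic single-sender channels, which are precisely the local deterministic MACs of the form \eqref{Eq:local-deterministic-MAC}; and any local deterministic MAC is a vertex of $\Delta(\mc{B})^{\times 4}$, hence extremal in $\mf{I}_2([2]\times[2];\mc{B})$ as well. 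Since $\mc{C}^{(\text{sep})}_2\subseteq\mf{I}_2$ is the easy inclusion already noted above, the whole lemma comes down to showing $\mf{I}_2([2]\times[2];\mc{B})\subseteq\mc{C}^{(\text{sep})}_2([2]\times[2];\mc{B})$.

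For this I would not analyze extreme points at all, but instead write down the separable decomposition explicitly. Let $\mbf{p}_{B|\bA}\in\mf{I}_2([2]\times[2];\mc{B})$. Since $I_2^{(1,2)}(\mbf{p}_{B|\bA})=0$ for every output $b$, the quantity $p(b|a_1,a_2)$ is affine in $(a_1,a_2)$ and satisfies the identity $p(b|a_1,a_2)=p(b|a_1,0)+p(b|0,a_2)-p(b|0,0)$. I would split the right-hand side as $u(b|a_1)+w(b|a_2)$ with $u(b|a_1):=p(b|a_1,0)-c(b)$ and $w(b|a_2):=p(b|0,a_2)-p(b|0,0)+c(b)$, taking $c(b):=\max\{0,\,p(b|0,0)-p(b|0,1)\}$, which is the smallest shift making every $w(b|\cdot)$ non-negative.

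What is left is routine verification in three steps. (i) $u(b|a_1)\ge 0$, i.e.\ $c(b)\le\min\{p(b|0,0),p(b|1,0)\}$: the only non-obvious inequality here, $p(b|0,0)-p(b|0,1)\le p(b|1,0)$, is exactly $I_2^{(1,2)}=0$ rewritten, namely $p(b|0,0)-p(b|0,1)=p(b|1,0)-p(b|1,1)\le p(b|1,0)$. (ii) $w(b|a_2)\ge 0$, which is immediate from the choice of $c(b)$. (iii) Normalization: $\sum_b u(b|a_1)=1-C$ and $\sum_b w(b|a_2)=C$, where $C:=\sum_b c(b)$ is independent of the inputs and satisfies $0\le C\le\sum_b p(b|0,0)=1$. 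Setting $p_1=1-C$, $p_2=C$, $g_1=u/p_1$, $g_2=w/p_2$ (dropping a weight-zero term if $p_i=0$) then exhibits $\mbf{p}_{B|\bA}=p_1\, g_1(b|a_1)+p_2\, g_2(b|a_2)$ as an element of $\mc{C}^{(\text{sep})}_2([2]\times[2];\mc{B})$, as in \eqref{Eq:1-N-separable}.

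The one step I expect to be a genuine obstacle is guessing the decomposition, in particular the shift $c(b)$: there is a real tension, since making $c(b)$ larger helps $w(b|\cdot)$ but hurts $u(b|\cdot)$, and it is not a priori clear that an admissible value exists. The fact that the feasible interval $\big[\max\{0,\,p(b|0,0)-p(b|0,1)\},\ \min\{p(b|0,0),p(b|1,0)\}\big]$ is non-empty is precisely where the hypothesis $I_2=0$ is used, and is the only inequality that is not pure bookkeeping. (If one instead follows the extreme-point route suggested by the surrounding text, the corresponding hard point is ruling out spurious vertices when $|\mc{B}|=3$ and every output has nonzero probability on exactly two of the four inputs; the explicit construction above bypasses that case analysis entirely.)
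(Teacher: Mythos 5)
Your construction is correct, and it proves the lemma by a genuinely different route than the paper. The paper argues via extreme points: it represents $\mf{I}_2([2]\times[2];\mc{B})$ by the linear system \eqref{Eq:I2-cons-1}--\eqref{Eq:I2-cons-3}, counts how many constraints must be binding at an extremal point, and then does a case analysis ($|\mc{B}|=2$ directly, $|\mc{B}|=3$ by ruling out the spurious two-nonzero-entries-per-output configuration, and $|\mc{B}|>3$ by induction on $|\mc{B}|$), concluding that every extremal MAC is local deterministic and hence separable. You instead exhibit the separable decomposition directly: from $I_2=0$ you get $p(b|a_1,a_2)=p(b|a_1,0)+p(b|0,a_2)-p(b|0,0)$, split it as $u(b|a_1)+w(b|a_2)$ with the shift $c(b)=\max\{0,p(b|0,0)-p(b|0,1)\}$, and verify nonnegativity and normalization; the only place $I_2=0$ is needed beyond the splitting identity is the inequality $p(b|0,0)-p(b|0,1)=p(b|1,0)-p(b|1,1)\le p(b|1,0)$, which you correctly identify, and the weights $p_1=1-C$, $p_2=C$ with $C=\sum_b c(b)\in[0,1]$ are input-independent as required. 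Your version of the extremality claim is then an immediate corollary of the set equality, since the extreme points of $\mc{C}^{(\text{sep})}_2$ (a convex hull of two products of simplices) are deterministic single-sender channels. What each approach buys: the paper's argument explicitly classifies the extreme points of $\mf{I}_2$, which is structural information in its own right, whereas yours is shorter, constructive, handles all output alphabets $\mc{B}$ uniformly without the $|\mc{B}|=3$ case analysis or induction, and produces the separable decomposition that the subsequent binary-splitting argument for general $\bcA$ and general $N$ actually consumes. Both proofs are compatible with the rest of Section \ref{Sect:N-local-separable}; no gap in yours.
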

\begin{proof}
We begin by expressing $\mf{I}_2([2]\times[2];\mc{B})=\{p_{B|\bA}\;|\;\text{such that Eqns. \eqref{Eq:I2-cons-1} -- \eqref{Eq:I2-cons-3} hold}\}$:
\begin{subequations}
\begin{align}
    1&=\sum_{b\in\mc{B}} p(b|a_1,a_2),\forall (a_1,a_2)\ne (1,1)\label{Eq:I2-cons-1}\\
    0&=p(b|0,0)+p(b|1,1)-p(b|0,1)-p(b|1,0)\label{Eq:I2-cons-2}\\
    0&\leq p(b|a_1,a_2)\label{Eq:I2-cons-3}.
\end{align}
\end{subequations}
Note that $1=\sum_bp(b|1,1)$ is also implied when Eqns. \eqref{Eq:I2-cons-1} and \eqref{Eq:I2-cons-2} hold.  If we consider MACs in $\mf{I}_2([2]\times[2];\mc{B})$ as elements of $\mbb{R}^{4|\mc{B}|}$ (with components $p(b|a_1,a_2)$), then $\hat{p}_{B|\bA}$ is extremal if and only if $4|\mc{B}|$ linearly independent constraints are binding (i.e. tight) among those listed in Eqns. \eqref{Eq:I2-cons-1}--\eqref{Eq:I2-cons-3}.  We see that Eqns. \eqref{Eq:I2-cons-1} and \eqref{Eq:I2-cons-2} represent $3+|\mc{B}|$ equality constraints. Thus, the remaining binding constraints must come from \eqref{Eq:I2-cons-3}.  Hence if $\hat{p}_{B|\bA}$ is extremal, then it has at least $4|\mc{B}|-(3+|\mc{B}|)=3|\mc{B}|-3$ vanishing probabilities $\hat{p}(b|a_1,a_2)$.

For the case when $|\mc{B}|=2$, an extremal $\hat{p}_{B|\bA}$ must have at least $3$ vanishing probabilities.  Equations \eqref{Eq:I2-cons-1} and \eqref{Eq:I2-cons-2} then imply that the only possibilities for $(\hat{p}(0|0,0),\hat{p}(0|0,1),\hat{p}(0|1,0),\hat{p}(0|1,1))$ are $(0,0,0,0)$, $(1,1,1,1)$, $(0,1,0,1)$, $(1,0,1,0)$, $(0,0,1,1)$, and $(1,1,0,0)$.  Each of these corresponds to a deterministic local MAC (i.e having the form of Eq.~\eqref{Eq:local-deterministic-MAC}).  

For the case when $|\mc{B}|=3$, an extremal $\hat{p}_{B|\bA}$ must have at least $6$ vanishing probabilities, and let us consider these for each output $b\in\{0,1,2\}$.  If there is some $b'$ such that $\hat{p}(b'|a_i,a_j)=0$ for three (or more) distinct input pairs, then Eq.~\eqref{Eq:I2-cons-2} implies that it also vanishes for the fourth input pair.  In this event, outcome $b'$ occurs with zero probability and $\hat{p}_{B|\bA}$ reduces to an extremal MAC with $|\mc{B}|=2$; i.e. it is a local deterministic MAC.  The only other alternative is that for each $b\in\{0,1,2\}$, the probabilities $\hat{p}(b|a_i,a_j)$ vanish for exactly two input pairs $(a_i,a_j)$.  Hence only two nonzero terms can appear on the right-hand side of Eq.~\eqref{Eq:I2-cons-2} for each $b\in\{0,1,2\}$.  This means, up to relabeling, we must have relationships of the form $1=\hat{p}(0|0,0)=\hat{p}(0|0,1)$, $\hat{p}(1|1,1)=\hat{p}(1|1,0)>0$, and $\hat{p}(2|1,1)=\hat{p}(2|1,0)>0$.  However, Eq.~\eqref{Eq:I2-cons-1} then does not represent three additional linearly independent constraints, and so $\hat{p}_{B|\bA}$ cannot be extremal.

For $|\mc{B}|>3$, the lemma is easily proven by induction.  Indeed, suppose that every extremal MAC is local deterministic for $|\mc{B}|=N\geq 3$.  Then consider an extremal MAC $\hat{p}_{B|\bA}$ in $\mf{I}_2([2]\times[2];[N+1])$.  By the above observation, there must be at least $3|\mc{B}|-3$ vanishing probabilities.  Consequently, there will be at least one outcome $b'$ such that $\hat{p}(b'|a_i,a_j)=0$ for three (or more) distinct input pairs provided $3|\mc{B}|-3>2|\mc{B}|$, which is true since $|\mc{B}|=N+1\geq 4$.  Thus, Eq.~\eqref{Eq:I2-cons-2} requires that $\hat{p}(b'|a_i,a_j)$ vanishes for the fourth input pair as well.  Like before, this means that outcome $b'$ occurs with zero probability.  Hence $\hat{p}_{B|\bA}$ is extremal in $\mf{I}_2([2]\times[2];[N])$, which by our inductive assumption means that $\hat{p}_{B|\bA}$ is local deterministic.

\end{proof}

We next consider the set $\mf{I}_2(\mc{A}_1\times\mc{A}_2;\mc{B})$ for arbitrary $\mc{A}_1$ and $\mc{A}_2$.  It follows as a corollary of Lemma \ref{Lem:extremal-binary-input}  that $\mf{I}_2(\mc{A}_1\times\mc{A}_2;\mc{B})\subseteq \mc{C}^{(\text{sep})}_2(\mc{A}_1\times\mc{A}_2;\mc{B})$.  This can be seen by considering binary representations of the inputs $a_1=a_{11}\cdots a_{1m}$ and $a_2=a_{21}\cdots a_{2n}$, where $a_{1i},a_{2j}\in [2]$.  For simplicity, suppose that $m=n=2$, but the general case follows by the same reasoning.  Suppose that $p(b|a_{11}a_{12},a_{21}a_{22})$ are transition probabilities for a MAC in $\mf{I}_2([4]\times[4];[2])$.  The key observation is that for each fixed values of $a_{12}$ and $a_{22}$, the probabilities $p(b|a_{11}a_{12},a_{21}a_{22})$ describe a MAC in $\mf{I}_2([2]\times[2];\mc{B})$ with inputs $a_{11}$ and $a_{21}$.  By Lemma \ref{Lem:extremal-binary-input}, we have a separable decomposition of $p(b|a_{11}a_{12},a_{21}a_{22})$ as:
\begin{align*}
\lambda p(b|a_{11}a_{12},a_{22})+(1-\lambda)p(b|a_{12},a_{21}a_{22}).
\end{align*}
Next, we let only $a_{11}$ and $a_{22}$ vary.  In doing so we observe that $p(b|a_{11}a_{12},a_{22})\in\mf{I}_2([2]\times[2];\mc{B})$ 
so that
\begin{align*}
   p(b|a_{11}a_{12},a_{22})=\lambda_2 p(b|a_{11}a_{12})+(1-\lambda_2)p(b|a_{12},a_{22}).
\end{align*}
Similar reasoning for a varying $a_{12}$ and $a_{21}$ allows us to decompose
\begin{align*}
    p(b|a_{12},a_{21}a_{22})&=\lambda_3 p(b|a_{12},a_{22})+(1-\lambda_3)p(b|a_{21}a_{22}).
\end{align*}
Thus $p(b|a_{11}a_{12},a_{21}a_{22})$ becomes
\begin{align*}
\mu_1p(b|a_{12},a_{22}) +\mu_2 p(b|a_{11}a_{12})+\mu_3p(b|a_{21}a_{22})
\end{align*}
for some probabilities $\mu_i$.  Finally, by letting just the inputs $a_{12}$ and $a_{22}$ vary, we obtain the decomposition
\begin{align*}
     p(b|a_{11}a_{12},a_{21}a_{22})&=\mu_1(\nu p(b|a_{12})+(1-\nu)p(b|a_{22}))\notag\\ &+\mu_2 p(b|a_{11}a_{12}) +\mu_3p(b|a_{21}a_{22}). 
\end{align*}
Clearly this is an element of $\mc{C}^{(\text{sep})}_2(\mc{A}_1\times\mc{A}_2;\mc{B})$.  In general, for inputs $a_1=a_{11}\cdots a_{1m}$ and $a_2=a_{21}\cdots a_{2n}$, one considers varying each pair of binary inputs $(a_{1i},a_{2j})$ to construct a separable decomposition.

This binary splitting technique also works to scale up the number of parties.  Specifically if $p(b|\mbf{a})$ are transition probabilities for an $N$-sender MAC in $\mf{I}_N(\bcA;\mc{B})$, then one fixes the inputs for all but two senders.  What remains is a two-party MAC in $\mf{I}_2(\mc{A}_1\times\mc{A}_2;\mc{B})$ that can be separated between the two non-fixed senders.  On each of the remaining branches, again fix all but two of the senders, and a further separation can be performed.  Reiterating this procedure, one arrives at an $N$-party separable MAC.  These arguments thus establish $\mf{I}_N(\bcA;\mc{B})\subseteq\mc{C}^{(\text{sep})}_N(\bcA;\mc{B})$.
\\
\subsection{Binary $(N,K)$-Local Channels}
\label{Sect:N,K binary}

We next turn to $(N,K)$-local MACs.  In this section we restrict attention to binary inputs and output (i.e. $\mc{A}_i=\mc{B}=[2]$),  and to simplify the notation, we write $\mc{C}_{N,K}=\mc{C}_{N,K}([2]^N;[2])$. A more general discussion on MACs $\mc{C}_N(\bcA;\mc{B})$ with arbitrary inputs and outputs will be given in Appendix~\ref{sec:A2 dimension}.  In light of Proposition \ref{Prop:binary-out}, every MAC in $\mc{C}_{N,K}$ is separable and therefore has a decomposition like Eq.~(\ref{Eq:N,K-local-binary-out}).

Much of our analysis will involve generalizing the $I_2$ quantity given in Eq.~\eqref{Eq:I2}.  This can be done by considering the expression
\begin{align}
\label{eq:IK equality}
    I_{K}=\sum_{a_1,\cdots,a_{K}\in\{0,1\}}\prod_{i=1}^{K}(-1)^{a_i}p(0|a_1\cdots a_{K}),
\end{align}
which is the $K^{\text{th}}$-order interference \cite{Sorkin-1994a}.  For $N$ parties, take $S\subseteq\{1,\cdots,N\}$ with $|S|=K$, and let $\mc{M}^{(S)}(\mbf{p}_{B|\bA})$ be the collection of $K$-sender MACs obtained by fixing different inputs for $a_i\in\{0,1\}$ with $i\not\in S$.  Then define the quantity
\begin{widetext}
\begin{align}
\label{eq:I3 equality}
I_{K}^{(S)}(\mbf{p}_{B|\bA}):=\max\left\{\sum_{a_1,\cdots,a_{K}\in\{0,1\}}\prod_{i=1}^{K}(-1)^{a_i}p(0|a_1\cdots a_{K})\;\bigg|\; \mbf{p}_{B|A_1,\cdots,A_{K}}\in \mc{M}^{(S)}(\mbf{p}_{B|\bA})\right\}.
\end{align}
\end{widetext}
The condition $I_{K}^{(S)}(\mbf{p}_{B|\bA})=0$ means that parties $S$ have no $K^{\text{th}}$-order interference in the MAC $\mbf{p}_{B|\bA}$.  We let $\mf{I}_{N,K}$ denote the collection of MACs such that $I^{(S)}_{K+1}(\mbf{p}_{B|\bA})=0$ for all $S\subseteq\{1,\cdots,N\}$ with $|S|=K+1$. One can simply observe that $\mf{I}_{N,1}\subset\mf{I}_{N,2}\subset\cdots\subset\mf{I}_{N,N}$ by definition. Notice that $\mf{I}_{N,1}=\mf{I}_N([2]^N;[2])$, with the latter introduced in the previous section.  

It is easy to see that $\mc{C}_{N,K}\subseteq\mf{I}_{N,K}$.  Indeed by convexity it suffices to observe that 
$\sum_{a_{K+1}\in\{0,1\}}(-1)^{a_{K+1}}p(0|a_1,\cdots,a_K)=0$ 
for all choices of $a_1,\cdots, a_K$.  However, unlike the case for $\mc{C}_{N,1}$, the converse inclusion is not true for $K>1$ and more constraints are needed to characterize $\mc{C}_{N,K}$ than just vanishing $(K+1)^{\text{th}}$-order interference. Nevertheless, $\mc{C}_{N,K}$ and $\mf{I}_{N,K}$ have the same dimension, as we demonstrate next.  

\subsubsection{The Dimension of $\mc{C}_{N,K}$}
\label{Sect:dim-binary-N,K}

As explained above, $\mc{C}_{N,K}$ is a polytope in $\mbb{R}^{2^{N+1}}$.  To compute the dimension of $\mc{C}_{N,K}$, we will first show that its elements satisfy $2^N+\sum_{k=K+1}^N\binom{N}{k}$ linearly independent equations.  We then show that $2^{N+1}-2^N-\sum_{k=K+1}^N\binom{N}{k}+1$ affinely independent points belong to $\mc{C}_{N,K}$.  From this, we obtain the following theorem
\begin{theorem}
\label{thm:Rank Ck}
 $\dim\mathcal{C}_{N,K}=\sum_{k=0}^K\binom{N}{k}$.
\end{theorem}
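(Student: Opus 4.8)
The plan is to bound $\dim\mc{C}_{N,K}$ from above and below by the method recalled just above the theorem (exhibit enough linearly independent equations, then enough affinely independent points), carrying out the whole computation in the multilinear representation of Boolean-input functions. Since $\mc{C}_{N,K}\subseteq\mbb{R}^{2^{N+1}}$ with coordinates $p(b|\mbf{a})$, $b\in\{0,1\}$, $\mbf{a}\in\{0,1\}^N$, and every MAC satisfies the $2^N$ normalization relations $p(0|\mbf{a})+p(1|\mbf{a})=1$, all of the geometry is carried by the $2^N$ numbers $p(0|\mbf{a})$, which I would expand uniquely as $p(0|\mbf{a})=\sum_{T\subseteq[N]}c_T\prod_{i\in T}a_i$; here I use the standard fact that the $2^N$ multilinear monomials $\{\prod_{i\in T}a_i\}_{T}$ form a basis for the space of real functions on $\{0,1\}^N$, so the coefficient maps $\mbf{p}\mapsto c_T$ are linearly independent functionals of the $p(0|\cdot)$ block.

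For the upper bound: by Proposition \ref{Prop:binary-out} and the vertex description \eqref{Eq:polytope-vertex}, every vertex of $\mc{C}_{N,K}$ is a deterministic MAC of the form $p(0|\mbf{a})=g(a_{i_1},\dots,a_{i_K})$, which involves only $K$ of the input variables and hence has $c_T=0$ for all $|T|>K$; by convexity this holds throughout $\mc{C}_{N,K}$. These $\sum_{k=K+1}^{N}\binom{N}{k}$ equations are linearly independent (the $c_T$ are), and they involve only the $p(0|\cdot)$ coordinates, while the $2^N$ normalization equations are the only ones touching the $p(1|\cdot)$ coordinates; so the two families together give $2^N+\sum_{k=K+1}^{N}\binom{N}{k}$ linearly independent equations. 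Hence $\dim\mc{C}_{N,K}\le 2^{N+1}-2^N-\sum_{k=K+1}^{N}\binom{N}{k}=\sum_{k=0}^{K}\binom{N}{k}$, using $\sum_{k=0}^{N}\binom{N}{k}=2^N$.

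For the lower bound I would write down $\sum_{k=0}^{K}\binom{N}{k}+1$ affinely independent points of $\mc{C}_{N,K}$. For each $T\subseteq[N]$ with $|T|\le K$ let $\mbf{v}_T$ be the MAC with $p(0|\mbf{a})=\prod_{i\in T}a_i$; it is $\{0,1\}$-valued and reads off at most $K$ input bits, so it is a deterministic MAC local to any $K$-subset containing $T$ and therefore lies in $\mc{C}_{N,K}$. Let $\mbf{v}_\star$ be the constant MAC with $p(0|\mbf{a})\equiv 0$, which is likewise in $\mc{C}_{N,K}$. The $\sum_{k=0}^{K}\binom{N}{k}$ differences $\mbf{v}_T-\mbf{v}_\star$ have $p(0|\cdot)$-blocks equal to the pairwise distinct monomials $\prod_{i\in T}a_i$ ($|T|\le K$), which are linearly independent; so $\{\mbf{v}_\star\}\cup\{\mbf{v}_T:|T|\le K\}$ is affinely independent, giving $\dim\mc{C}_{N,K}\ge\sum_{k=0}^{K}\binom{N}{k}$. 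Combined with the upper bound this proves the theorem, and since $\mf{I}_{N,K}$ is exactly the intersection of the normalization hyperplanes with $\{c_T=0:|T|>K\}$, the same computation simultaneously gives $\dim\mf{I}_{N,K}=\dim\mc{C}_{N,K}$, as anticipated in the text.

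The genuinely creative step — and the one I expect to be the main obstacle — is the lower-bound construction: one must find a family of deterministic $(N,K)$-local MACs whose affine hull is full-dimensional. The monomial family makes this painless, the only points needing a line of verification being (i) each $\mbf{v}_T$ is $(N,K)$-local, which is immediate since it depends on at most $K$ inputs, and (ii) that differencing against $\mbf{v}_\star$ upgrades the linear independence of the monomials to affine independence of the vertices. Everything else — the upper-bound equations, the combinatorial identity, and the joint independence of the normalization and degree-truncation constraints — is routine once one commits to the multilinear-monomial basis; sidestepping that algebraic input would force explicit coordinate bookkeeping and make the independence claims considerably more tedious.
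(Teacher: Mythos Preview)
Your proof is correct and essentially identical to the paper's: the ``interference coordinates'' $q(S)$ the paper introduces are exactly your multilinear monomial coefficients $c_T$ (one checks that $q(S)=c_S$ by evaluating $q(S)$ on a single monomial $\prod_{i\in T}a_i$), and the affinely independent points $\mbf{v}_T$ you construct, including the two constant MACs, coincide with the paper's construction in Eqns.~\eqref{Eq:constant-1}--\eqref{Eq:constant-0}. Your phrasing via the standard multilinear-polynomial basis for Boolean functions is arguably cleaner than the paper's direct combinatorial justification of the independence of the $q(S)$, but the content is the same.
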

\begin{proof}

Each $\mbf{p}_{B|\bA}\in\mc{C}_{N,K}$ satisfies $2^N$ normalization conditions $1=p(0|a_1 \cdots a_N)+p(1|a_1,\cdots,a_N)$, with $a_i\in\{0,1\}$. To facilitate our analysis, instead of working in the previous transition coordinates $p(0|a_1 \cdots a_N)$, we introduce interference coordinates, whose definition is based on the $I_K$ equalities in Eq.~(\ref{eq:IK equality}):
\begin{widetext}
\begin{align}
\label{eq:full coordinates}
\bigcup_{k\in\{0,\cdots,N\}}\bigcup_{\substack{S\subseteq\{1,\cdots,N\}\\|S|=k}}\left\{q(S):=(-1)^k\sum_{\substack{a_s\in\{0,1\}\\s\in S}}\prod_{i\in S}(-1)^{a_i}p(0|a_1 \cdots a_N) \;\bigg|\;\text{$a_j=0$ for $j\not\in S$}\right\}.
\end{align}
\end{widetext}
There are $\sum_{k=0}^N{{N}\choose{k}}=2^N$ elements in this set, which is the same as the original set of transition coordinates $p(0|a_1 \cdots a_N)$. Now we show that they are also linearly independent.  For $|S|=k$, consider the specific interference coordinate with $S=\{1,\cdots,k\}$
\begin{equation}
q(S)=(-1)^k\sum_{\substack{a_s\in\{0,1\}\\ s\in S}} \prod_{i\in S}(-1)^{a_i}p(0|a_1\cdots a_{k},\overbrace{0\cdots0}^{N-k}).
\end{equation}
Focus on the term $p(0|\overbrace{1\cdots1}^{k},\overbrace{0\cdots0}^{N-k})$ which appears in this term.  Different interference coordinates $q(S')$ with $|S'|=k$ can be obtained by permuting the $0$'s and $1$'s, yet, for each permutation there will be exactly one and only one distinct probability term in the sum that is conditioned on $k$ values of $1$.  Hence, the different permutations generate $\binom{N}{k}$ linearly independent coordinates.  By the same reasoning, as we consider different $k'\in \{0,\cdots,N\}$, the coordinates $q(S')$ with $|S'|=k'$ will be linearly independent from the coordinates $q(S)$ with $|S|=
k<k'$.  In total, we obtain $\sum_{k=0}^N{{N}\choose{k}}=2^N$ linearly independent coordinates.\par
We can thus represent each point $\mbf{p}_{B|\bA}\in \mc{C}_{N,K}$ in the interference coordinates.  In doing so, we use the above observation that $I_{k}^{(S)}(\mbf{p}_{B|\bA})=0$ for all $k=|S|\ge K+1$. This indicates that $\mbf{p}_{B|\bA}$ has coordinates $q(S)=0$ for all $|S|\ge K+1$, and therefore the dimension of $(N,K)$-local MAC $\mc{C}_{N,K}$ can be upper bounded by:
\begin{equation}
\dim\mc{C}_{N,K}\leq \sum_{k=0}^K \binom{N}{k}.
\end{equation}

To compute a lower bound on $\dim\mc{C}_{N,K}$, we are going to form a family of affinely independent points.  Consider first the constant MAC that has transition probabilities
\begin{align}
\label{Eq:constant-1}
    p(0|a_1 \cdots a_N)&=0\qquad \forall a_1,\cdots,a_N
\end{align}
which has interference coordinates $q(S)=0$ for all $S\subseteq\{1,\cdots,N\}$.  Next, for each $S\subseteq \{1,\cdots, N\}$ with $1\leq |S|\leq K$, consider the $(N,K)$-local MAC $\mbf{p}_{B|\bA}^{(S)}$ defined by transition probabilities
\begin{align}
\label{Eq: affine points}
    p(0|a_1 \cdots a_N)=g_S(0|(a_s)_{s\in S})=\prod_{s\in S}\delta_{1,a_s}.
\end{align} 
When $S=\emptyset$, we define $\mbf{p}_{B|\bA}^{(\emptyset)}$ as the constant MAC with 
\begin{equation}
\label{Eq:constant-0} 
       p(0|a_1 \cdots a_N)=1\qquad \forall a_1,\cdots,a_N.
\end{equation}
It is straightforward to verify that $\mbf{p}_{B|\bA}^{(S)}$ has interference coordinates $q(S')=1$ if $S'=S$ and $q(S')=0$ if $S'\not=S$.  In other words, in interference coordinates, the MACs we have constructed in Eqns. \eqref{Eq: affine points} and \eqref{Eq:constant-0} correspond precisely to the standard unit vectors, while the MAC in Eq. \eqref{Eq:constant-1} corresponds to the all zero vector.  Clearly, this generates a set of $\sum_{k=0}^K\binom{N}{k}+1$ affinely independent points belonging to $\mc{C}_{N,K}$.  In summary, the dimension of the $\mc{C}_{N,K}$ polytope is $\sum_{k=0}^K\binom{N}{k}$ for any $1\le K\le N$.\par 
Note that we can use the same coordinates and affinely independent points (Eqs. \eqref{eq:full coordinates} and \eqref{Eq: affine points}) to arrive at $\dim\mf{I}_{N,K}=\sum_{k=0}^K\binom{N}{k}$.
\end{proof}

\begin{remark}
For $K=1$, Theorem \ref{thm:Rank Ck} says that the dimension of $\mc{C}_{N}([2]^N;[2])$ is $N+1$ and the interference coordinates we introduced in Eq.~(\ref{eq:full coordinates}) are:
\begin{align}
\label{Eq:I2-coords}
\{p(b|\mbf{e}_0), p(b|\mbf{e}_1)-p(b|\mbf{e}_0),\cdots, p(b|\mbf{e}_N)-p(b|\mbf{e}_0)\}.
\end{align}
with $p(b|\mbf{e}_i)=p(b|0,\cdots,1_i,\cdots,0)$ for $i\ne 0$ and $p(b|\mbf{e}_0)=p(b|0,\cdots,0)$.  In this case, it is more natural to work in the transition coordinates
\begin{align}
\label{Eq:I2-coords-transition}
\{p(b|\mbf{e}_0), p(b|\mbf{e}_1),\cdots, p(b|\mbf{e}_N)\}.
\end{align}
For any $L\in\{1,\cdots,N\}$ and any permutation on the input set, one can show that
\begin{align}
\label{Eq:I2-prob-relationships}
    p(b|\overbrace{1\cdots1}^{L},\overbrace{0\cdots0}^{N-L})=-(L-1)p(b|\mbf{e}_0)+\sum_{i=1}^L p(b|\mbf{e}_i)
\end{align}
for every $b\in\mc{B}$.  Hence according to Theorem \ref{Thm:classical-I2-N}, the only constraints on these coordinates for them to define a MAC in $\mc{C}_{N}([2]^N;[2])$ is positivity:
\begin{align}
0\leq -(|S|-1)p(b|\mbf{e}_0)+\sum_{s\in S} p(b|\mbf{e}_s)\leq 1
\end{align}
for all $S\subseteq\{1,\cdots,N\}$.

The proof of Eq.~\eqref{Eq:I2-prob-relationships} follows by induction on $N$.  Clearly it is true when $N=1$, so suppose it holds for arbitrary $N\geq 1$.  Consider an $(N+1)$-sender MAC $\mbf{p}_{B|A_1,\cdots A_{N+1}}\in\mf{I}_{N+1}([2]^{N+1};\mc{B})$.  If we fix the input value of any party to be zero, then what remains is an $N$-sender MAC.  Hence by inductive assumption, for any $L=1,\cdots,N$ and any permutation, it holds that
\begin{align}
\label{Eq:lem-induction}
    p(b|\overbrace{1,\cdots,1,}^{L}\overbrace{0,\cdots,0}^{N+1-L})&=-(L-1)p(b|\mbf{e}_0)+\sum_{i=1}^L p(b|\mbf{e}_i).
\end{align}
It just remains to prove the case when $L=N+1$.  Since $\mbf{p}_{B|A_1,\cdots,A_{N+1}}\in\mf{I}_{N+1}([2]^{N+1};\mc{B})$, we have
\begin{align}
    p(b|\overbrace{1\cdots1}^{N+1})=&p(b|\overbrace{1\cdots1}^{N-1},0,1)+p(b|\overbrace{1\cdots1}^{N-1},1,0)\notag\\
    &-p(b|\overbrace{1\cdots1}^{N-1},0,0)\notag\\
    =&-(L-1)p(b|\mbf{e}_0)+\sum_{i=1}^{N+1}p(b|\mbf{e}_i),
\end{align}
where the second equation follows from Eq.~\eqref{Eq:lem-induction} and its parties' permutation.  This proves the claim.

%We can also write the coordinates in Eq.~(\ref{Eq:I2-coords}) as:
%\begin{align}
%\label{Eq:I2-coords}
%\{p(b|\mbf{e}_0), p(b|\mbf{e}_1),\cdots, p(b|\mbf{e}_N)\}.
%\end{align}
%and according to Theorem \ref{Thm:classical-I2-N}, for $\mc{C}_N([2]^N;[2])$ the only restriction on this coordinates in Eq.~\eqref{Eq:I2-coords} is that they each lie on the interval $[0,1]$ \todo{(I didn't understand this part here, I think this is not ture: e.g. $p(0|01)=p(0|10)=1$ and $p(0|00)=0$ is not allowed)}.  To recover the full MAC, one just needs to apply Eq.~\eqref{Eq:I2-prob-relationships}.  When $K>1$, additional constraints are needed on the $\sum_{k=0}^K\binom{N}{k}$ coordinates than just them belonging to the interval $[0,1]$.  We study some of these additional constraints in the next sections.
\end{remark}

\subsubsection{$(3,2)$-Local MACs}
The $\mathcal{C}_{3,2}$ polytope is formed by the set of correlations generated according to Eq.~\eqref{Eq:N,K-local-binary-out} with $N=3$ and $K=2$. From Theorem~\ref{thm:Rank Ck}, our analysis involves a seven-dimensional affine subspace with 38 vertices.  Upon running the PORTA software, we find there are 96 facet inequalities characterizing the polytope $\mathcal{C}_{3,2}$, with 16 of them being positivity constraints. By removing equivalent inequalities that can be obtained by relabelling the inputs and output, we find the following three nontrivial inequalities (see also \cite{Horvat-2019b}): 
    \begin{align}
    \label{eq:fp}
    &p(0|000)+p(1|001)+p(1|010)+p(1|100)\le3\\
    &p(0|000)+p(1|001)+p(1|010)+p(0|101)\le3 \label{Eq:ineq2}\\
    &p(0|000)+p(1|001)+p(1|010)+p(1|011)\notag\\
    &\textcolor{white}{p(0|000)+p(1|010)+p(1|011)}+p(0|111)\le4.\label{Eq:ineq3}
    \end{align}
Among these, only the first one is symmetric among the senders.  It is of particular interest in our investigation, and we will refer to it as the fingerprinting inequality in what follows. The reference to ``fingerprinting"  is due to the fact that Eq. \eqref{eq:fp} can be interpreted in terms of a task that loosely resembles the two-party fingerprinting task described in the introduction. {To see that, when relabeling all the inputs in Eq. \eqref{eq:fp}, we will obtain an inequality of the form:
$$p(0|111)+p(1|110)+p(1|101)+p(1|001)\le3.$$
If we assume the input bit of each party is given uniformly at random, Eqs. \eqref{eq:fp} and its relabled counterpart imply that the probability the decoder can decide whether $a_1=a_2=a_3$ is at most $3/4$, classically.}

%\textcolor{red}{The fingerprinting inequality above can be seen as a generalization of quantum fingerprinting, which can be interpreted as follow: If the receive would like to distinguish between the case where all senders input with 0 and the case that one and only one sender inputs one, classically the maximal successful rate is $3/4$ even when two senders are allowed to collaborate}

Moving beyond $\mathcal{C}_{3,2}$ to general $\mathcal{C}_{N,K}$ polytopes can be done by working in a larger affine subspace.   However, it becomes increasingly more resource-intensive to perform the numerical analysis since the number of vertices increases exponentially.  Therefore, instead of trying to fully characterize the polytopes $\mc{C}_{N,K}$, we will consider just a few useful facets for each of them.\par

\begin{table*}
  \begin{tabular}{ l | c| c}
  \textrm{$N$-local MACs $\mc{C}_N{(\bcA,\mc{B})}$} &  \textrm{Inclusion relations} & \textrm{Propositions}\\
  \hline
    Arbitrary inputs and binary output & $\mc{C}_N=\mc{C}_N'=\conv[\mc{C}_N]=\mc{C}_{N}^{(\text{sep})}$ & Prop. \ref{Prop:binary-out}  \\ 
    Binary inputs and arbitrary output & $\mc{C}_N\subset^{\dagger} \mc{C}_N'=\conv[\mc{C}_N]=\mc{C}_{N}^{(\text{sep})}$ & Prop. \ref{Prop:non-convex}, Prop. \ref{Prop:binary-int}\\
     Arbitrary inputs and arbitrary output & $\mc{C}_N\subset^{\dagger} \mc{C}'_N\subset\conv[\mc{C}_N]\subset \mc{C}_{N}^{(\text{sep})}$ & Prop. \ref{Prop:non-convex2} \\
  \end{tabular}
  \caption{A comparison of $N$-party classical MACs and the corresponding propositions.  The inclusion $\subset$ is proper for all $N\geq 1$.  The inclusion $\subset^{\dagger}$ is proper only for $N>1$ whereas it becomes a trivial equivalence when $N=1$.}
  \label{tab:N-local-comparision}
\end{table*}
\subsubsection{The Generalized Fingerprinting Inequality}

\label{Sect:Generalize-fingerprinting}

To identify facet inequalities for higher-dimensional $\mathcal{C}_{N,K}$, we invoke the idea of ``lifting'' a Bell inequality, which has been previously used in the study of nonlocality \cite{Svetlichny-1987a}\cite{Pironio-2005a}.  The basic idea is to identify a facet inequality for $\mc{C}_{N-1,K}$ and then generalize its structure by the addition of one more party.  To verify that this constructed inequality indeed defines a valid facet of $\mc{C}_{N,K}$, it suffices to show to that it is satisfied by $\dim\mathcal{C}_{N,K}$ affinely independent points \cite{Pironio-2005a}.  

Having computed the dimension of $\dim\mathcal{C}_{N,K}$ in Section \ref{thm:Rank Ck}, we will specifically apply this procedure to lift inequality \eqref{eq:fp} to higher-dimensional polytopes. Let us consider the $(N,K)$-local generalized fingerprinting inequalities, which, up to a relabeling of inputs and permutation of parties, have the form
\begin{equation}
p(0|\overbrace{0\cdots0}^N)+\sum_{i=1 }^{K+1}p(1|\overbrace{0\cdots,1_i,\cdots0}^{K+1},\overbrace{0\cdots0}^{N-K-1})\le K+1.
\label{eq:fpg}
\end{equation}
Here, $1_i$ means that party $\msf{A}_i$ has input $1$. When $N=3$ and $K=2$, we obtain the fingerprinting inequality studied above; when $N=2$, and $K=1$, the corresponding inequality can be derived from $I_2$ equality Eq.~\ref{Eq:I2} along with a positivity constrains $p(0|1,1)\ge 0$. These inequalities has been previously studied in Ref. \cite{Horvat-2019a}, although its tightness as a facet inequality of $\mc{C}_{N,K}$ was not considered.  One can easily check that these inequalities cannot be violated by any correlation with the decomposition Eq.~(\ref{Eq:N,K-local-binary-out}) for a given $N$ and $K$. Furthermore based on Theorem~\ref{thm:Rank Ck}, the dimension of $\mathcal{C}_{N,K}$ is $\sum_{k=0}^K{N\choose{k}}$. Therefore, to show that Eq.~(\ref{eq:fpg}) is a valid facet of the $\mathcal{C}_{N,K}$ polytope, it is sufficient to find $\sum_{k=0}^K{N\choose{k}}$ affinely independent MACs that saturate it. \par

\begin{proposition}
The $(N,K)$-party generalized fingerprinting inequalities are valid inequalities of the $\mathcal{C}_{N,K}$ polytope.
\label{prop:gfp}
\end{proposition}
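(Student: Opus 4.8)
The plan is to verify Eq.~\eqref{eq:fpg} directly on an arbitrary member of $\mc{C}_{N,K}$ and reduce to a short normalization-plus-counting estimate. Since the output alphabet is binary, Proposition~\ref{Prop:binary-out} gives $\mc{C}_{N,K}=\mc{C}_{N,K}^{(\text{sep})}$, so any MAC in $\mc{C}_{N,K}$ can be written in the separable form of Eq.~\eqref{Eq:N,K-local-binary-out}, namely $p(b|a_1\cdots a_N)=\sum_{S:|S|=K}p_S\,g_S\big(b|(a_s)_{s\in S}\big)$ for a probability distribution $(p_S)$ over the $K$-element subsets $S\subseteq\{1,\dots,N\}$ and stochastic maps $g_S$. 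The left-hand side of Eq.~\eqref{eq:fpg} is a linear functional of $\mbf{p}_{B|\bA}$, so it suffices to bound the contribution of each fixed $S$ separately and then recombine with the weights $p_S$.

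First I would, for a fixed $S$ with $|S|=K$, record how each of the input strings in Eq.~\eqref{eq:fpg} restricts to the coordinates in $S$: the all-zeros string restricts to the all-zeros string $\mbf{0}$ on $S$, while the string carrying a single $1$ in position $i$ restricts to $\mbf{0}$ if $i\notin S$ and to the indicator string $\mbf{e}_i$ (supported at $i$, as seen by $S$) if $i\in S$. Setting $m:=|\{1,\dots,K+1\}\cap S|$, the contribution of $S$ to the left-hand side is
\begin{equation*}
g_S(0|\mbf{0})+\sum_{i\in\{1,\dots,K+1\}\cap S}g_S(1|\mbf{e}_i)+(K+1-m)\,g_S(1|\mbf{0}).
\end{equation*}
The crucial point is a pigeonhole count: since $|S|=K<K+1$ we have $m\le K$, hence $K+1-m\ge 1$ — a coalition of only $K$ parties cannot ``see'' all $K+1$ of the distinguished single-$1$ inputs. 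I would then split off one copy of $g_S(1|\mbf{0})$ and combine it with $g_S(0|\mbf{0})$ via the binary normalization $g_S(0|\mbf{0})+g_S(1|\mbf{0})=1$; the remaining $m$ terms $g_S(1|\mbf{e}_i)$ and the remaining $K-m$ copies of $g_S(1|\mbf{0})$ are each at most $1$, so the contribution is at most $1+m+(K-m)=K+1$. Averaging over $S$ against $(p_S)$ yields the bound $\le K+1$, establishing validity.

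The hard part will not be validity — the argument above is essentially bookkeeping, the only subtlety being that $S$ need not lie inside $\{1,\dots,K+1\}$, which is exactly why one tracks $m$ rather than assuming $S\subseteq\{1,\dots,K+1\}$. The genuinely substantial step, flagged in the surrounding discussion, is the subsequent \emph{tightness} claim: promoting Eq.~\eqref{eq:fpg} to a facet of $\mc{C}_{N,K}$ requires exhibiting $\sum_{k=0}^{K}\binom{N}{k}$ affinely independent MACs in $\mc{C}_{N,K}$ that saturate it, presumably via a lifting argument from the $(3,2)$ fingerprinting facet of Eq.~\eqref{eq:fp} together with the unit-vector MACs of Eq.~\eqref{Eq: affine points} whose interference coordinates were computed in the proof of Theorem~\ref{thm:Rank Ck}; that affine-independence construction is where the real effort lies.
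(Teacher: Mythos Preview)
Your validity argument is correct and, in fact, more explicit than the paper's: the paper dispatches validity in one sentence before the proposition (``One can easily check that these inequalities cannot be violated\ldots'') and devotes the actual proof entirely to \emph{tightness}, i.e.\ to showing that Eq.~\eqref{eq:fpg} defines a facet of $\mc{C}_{N,K}$. So while you have proved precisely what the proposition's wording (``valid inequalities'') literally asserts, the paper's proof establishes the stronger claim that these are facet inequalities, and this is the content you correctly flag as ``the hard part'' but do not carry out.

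On that harder part, your sketch slightly misidentifies the mechanism. The paper does not lift from the $(3,2)$ facet. Instead it works directly: among the $\sum_{k=0}^K\binom{N}{k}+1$ affinely independent MACs from Theorem~\ref{thm:Rank Ck} (the constant MACs and the ``unit-vector'' MACs $p(0|\mbf{a})=\prod_{s\in S}\delta_{1,a_s}$), all but $K+2$ already saturate Eq.~\eqref{eq:fpg}. The non-saturating ones are the all-ones constant MAC and the $K+1$ single-party MACs $p(0|\mbf{a})=\delta_{1,a_i}$ with $i\in\{1,\dots,K+1\}$. The paper replaces $K+1$ of these by the new $(N,K)$-local MACs
\[
p(0|a_1\cdots a_N)=\prod_{\substack{j=1\\j\neq i}}^{K+1}\delta_{0,a_j},\qquad i=1,\dots,K+1,
\]
each of which depends on only $K$ inputs, saturates the inequality, and is affinely independent from the retained points (they are the only points taking the value $1$ on inputs with at most one $1$ among the first $K+1$ coordinates). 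This yields the required $\sum_{k=0}^K\binom{N}{k}$ affinely independent saturating points. Your pigeonhole validity proof combines cleanly with this construction; you would just need to supply these replacement MACs and check their affine independence rather than invoke a lifting.
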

\begin{proof}
%Since we have already established in Theorem~\ref{thm:Rank Ck} that $\mathcal{C}_{N,K}$ dimension $\sum_{k=0}^{K}{N\choose{k}}$, to verify the validity of a generalized fingerprinting inequality, we need to construct a list of $\sum_{k=0}^{K}{N\choose{k}}$ affinely independent points $\{\mbf{p}_{B|\bA}\in \mathcal{C}_{N,K}\}$ that saturate the inequality in Eq.~\eqref{eq:fpg}. 

The list of affinely independent points saturating the inequality in Eq. \eqref{eq:fpg} is not unique, and here we construct one using MACs similar to those in the proof of Theorem \ref{thm:Rank Ck}.   Among the $\sum_{k=0}^K{N\choose{k}}+1$ affinely independent points constructed in Theorem \ref{thm:Rank Ck} (Eqns. \eqref{Eq:constant-1} -- \eqref{Eq:constant-0}), a total of $\sum_{k=0}^K{N\choose{k}}-K-1$ will saturate the inequality in Eq.~\eqref{eq:fpg}.  In particular, it will not be saturated by the constant MAC, with $p(0|a_1 \cdots a_N)=1$, as well as the MACs with $p(0|a_1 \cdots a_N)=\prod_{s\in S}\delta_{1,a_s}$ for $|S|=1$ and $S\subset\{1,\cdots,K+1\}$.  We replace $K+1$ of these points by MACs defined by
\begin{equation}
    p(0|a_1 \cdots a_N)=\prod_{\substack{j=1\\j\not=i}}^{K+1}\delta_{0,a_j},
\end{equation}
for each $i\in\{1,\cdots,K+1\}$.
Each of these is an $(N,K)$-local MAC since it only depends on the inputs of $K$ parties, and it can be readily seen to saturate the inequality in Eq.~\eqref{eq:fpg}.  They are also affinely independent from each other and from the $\sum_{k=0}^K{N\choose{k}}-K-1$ previous points since they have probability values $p(0|a_1 \cdots a_N)$ equaling $1$ when only one or fewer parties have an input equaling $1$.  In summary, we have identified $\sum_{k=0}^K{N\choose{k}}$ affinely independent points saturating Inequality \eqref{eq:fpg}. This inequality and all its permutations are thus valid facets of the $\mathcal{C}_{N,K}$ polytope.
\end{proof}

\subsection{Beyond Binary $(N,K)$-Local MACs}
\label{Sect:Beyond-binary}
We now draw some general conclusions about $(N,K)$-local MACs.
\subsubsection{Polytope Dimension and Generalized Fingerprinting Inequalities}
The dimension of $\mc{C}_{N,K}(\bcA,\mc{B})$ for general input/output sets can be computed using the same argumentation as in Section \ref{Sect:dim-binary-N,K}.  For simplicity we assume that $|\mc{A}_i|=|\mc{A}|$ for all parties $\msf{A}_i$.  Then one finds 
\begin{theorem}
\label{thm:Rank Ck general}
 \begin{equation}
 \label{eq:rank CN,K general}
 \dim\mathcal{C}_{N,K}(\bs{\mc{A}},\mc{B})=(|\mc{B}|-1)\sum_{k=0}^K(|\mc{A}|-1)^k\binom{N}{k}
 \end{equation}
\end{theorem}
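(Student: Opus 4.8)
The plan is to run the same two–sided argument used for Theorem~\ref{thm:Rank Ck}, but with the binary ``interference coordinates'' of Eq.~\eqref{eq:full coordinates} replaced by their natural generalization to an alphabet of size $|\mc{A}|$. Fix the reference input $0$ for every party and a reference output $0\in\mc{B}$. For each $b\in\mc{B}$, each $S\subseteq\{1,\dots,N\}$, and each tuple $(a_s)_{s\in S}$ with every $a_s\in\{1,\dots,|\mc{A}|-1\}$, I would define
\begin{equation*}
q_b\bigl(S;(a_s)_{s\in S}\bigr):=\sum_{T\subseteq S}(-1)^{|S\setminus T|}\,p\bigl(b\,\big|\,\mbf{a}_T\bigr),
\end{equation*}
where $\mbf{a}_T$ agrees with $(a_s)$ on the coordinates in $T$ and equals $0$ elsewhere. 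Since $\sum_{k=0}^N(|\mc{A}|-1)^k\binom{N}{k}=|\mc{A}|^N$, for each fixed $b$ there are exactly $|\mc{A}|^N$ of these quantities, and the map $\{p(b|\mbf{a})\}_{\mbf{a}}\mapsto\{q_b(S;\cdot)\}$ is a discrete Möbius inversion over the subset lattice, triangular with unit diagonal, hence invertible. So $\{q_b(S;\cdot)\}_{b,S,(a_s)}$ is a genuine linear coordinate system on $\mbb{R}^{|\mc{B}|\cdot|\mc{A}|^N}$.

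For the upper bound I would observe that every MAC in $\mc{C}_{N,K}(\bcA;\mc{B})$ — in fact every MAC in the larger set $\mc{C}_{N,K}^{(\text{sep})}(\bcA;\mc{B})$ — is a convex combination of channels each depending on only $K$ of the $N$ inputs (for $\mc{C}_{N,K}$, the function $\sum_m d(b|m)q_S(m|(a_s)_{s\in S})$ depends only on $(a_s)_{s\in S}$). A mixed finite difference taken over a set $S$ annihilates any function that is constant in at least one coordinate of $S$; hence $q_b(S;\cdot)=0$ for every $b$ whenever $|S|>K$, giving $|\mc{B}|\sum_{k=K+1}^{N}(|\mc{A}|-1)^k\binom{N}{k}$ homogeneous equations. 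Normalization $\sum_b p(b|\mbf{a})=1$ becomes, since the constant function has vanishing mixed differences of order $\ge 1$, the relations $\sum_b q_b(\emptyset)=1$ and $\sum_b q_b(S;\cdot)=0$ for each nonempty $(S,(a_s))$ with $|S|\le K$ — another $\sum_{k=0}^{K}(|\mc{A}|-1)^k\binom{N}{k}$ equations, which involve coordinates disjoint from the previous batch and so are jointly independent with it. Subtracting both families from the ambient dimension $|\mc{B}|\sum_{k=0}^{N}(|\mc{A}|-1)^k\binom{N}{k}$ yields exactly $\dim\mc{C}_{N,K}(\bcA;\mc{B})\le(|\mc{B}|-1)\sum_{k=0}^{K}(|\mc{A}|-1)^k\binom{N}{k}$.

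For the matching lower bound I would build that many plus one affinely independent MACs lying in $\mc{C}_{N,K}$ itself. Start with the constant MAC $p(b|\mbf{a})=\delta_{b,0}$; varying the constant output distribution $g$ in $p(b|\mbf{a})=g(b)$ supplies $|\mc{B}|-1$ further points (filling the $S=\emptyset$ block). Then, for every $S$ with $1\le|S|\le K$, every $(a_s^*)_{s\in S}\in\{1,\dots,|\mc{A}|-1\}^S$, and every $b^*\in\mc{B}\setminus\{0\}$, take the ``threshold'' MAC that outputs $b^*$ when $a_s=a_s^*$ for all $s\in S$ and outputs $0$ otherwise. This depends on only $|S|\le K$ inputs, and by padding $S$ to a subset of size $K$ and using a two-outcome decoder one checks directly that it admits the decomposition~\eqref{Eq:N,K-local-general}, so it lies in $\mc{C}_{N,K}$. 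Computing its interference coordinates (here one uses $\delta_{0,a_s^*}=0$) shows it differs from the base point only in the block indexed by $(S,(a_s^*))$, where it contributes the vector $e_{b^*}-e_0$; as $b^*$ ranges over $\mc{B}\setminus\{0\}$ these span that block, and the blocks together are precisely the free coordinates of the affine subspace cut out above. Altogether this produces $1+(|\mc{B}|-1)\sum_{k=0}^{K}(|\mc{A}|-1)^k\binom{N}{k}$ affinely independent members of $\mc{C}_{N,K}$, and since $\mc{C}_{N,K}\subseteq\mc{C}'_{N,K}\subseteq\conv[\mc{C}_{N,K}]\subseteq\mc{C}^{(\text{sep})}_{N,K}$, all four sets share this dimension.

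I expect the main obstacle to be bookkeeping rather than anything conceptual: (i) proving cleanly that the $|\mc{A}|^N$ mixed-difference coordinates per output are linearly independent and that the two families of equations above do not overlap — the binary permutation/triangularity argument of Section~\ref{Sect:dim-binary-N,K} must be upgraded to track the $(|\mc{A}|-1)$ choices available at each active coordinate; and (ii) checking that the threshold MACs genuinely satisfy Eq.~\eqref{Eq:N,K-local-general} with a single common decoder $d$, so that the dimension claim is about $\mc{C}_{N,K}$ and not merely about its separable relaxation. Neither step is deep, but both must be carried out carefully; the remainder is manipulation of binomial sums.
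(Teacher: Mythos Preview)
Your proposal is correct and follows essentially the same route as the paper's proof in the supplement: the paper introduces exactly the same mixed finite-difference coordinates $q(b,S,\{\alpha_s\}_s)$ indexed by an output $b$, a subset $S$, and a nonzero input tuple $(\alpha_s)_{s\in S}$, derives the upper bound from the vanishing of all $q$'s with $|S|>K$ together with normalization, and attains the lower bound via the constant MACs $p(b|\mbf{a})=\delta_{b,\beta}$ plus the ``threshold'' MACs $p(b|\mbf{a})=\delta_{b,\beta}\prod_{s\in S}\delta_{\alpha_s,a_s}$. Your explicit identification of the coordinate change as M\"obius inversion and your separate check that the normalization and vanishing-difference constraints are jointly independent are slightly cleaner than the paper's more ad hoc counting, and your flagged concern (ii) is handled exactly as you suggest (each threshold MAC has $p_S=1$ for a single $S$, so a two-value decoder on $\mc{M}_S$ suffices); but there is no substantive difference in strategy.
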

While in general $\mc{C}_{N,K}(\bs{\mc{A}};\mc{B})$ is a proper subset of $\mf{I}_{N,K}(\bs{\mc{A}};\mc{B})$, the two sets have the same dimension:
\begin{align}
&\dim \mf{I}_{N,K}(\bs{\mc{A}};\mc{B})=\dim{\mc{C}_{N,K}(\bs{\mc{A}};\mc{B})}=\dim{\mc{C}'_{N,K}(\bs{\mc{A}};\mc{B})}\notag\\&=\dim{\conv[\mc{C}_{N,K}(\bs{\mc{A}};\mc{B})]}=\dim{\mc{C}_{N,K}^{(\text{sep})}(\bs{\mc{A}};\mc{B})}  .
\end{align}
Details on its proof is provided in Appendix~\ref{sec:A1 separation}.  We can likewise extend the generalized fingerprinting inequality to more outputs.  That is, we have
\begin{equation}
-p(b|\overbrace{0,\cdots,0}^N)+\sum_{i=1 }^{K+1}p(b|\overbrace{0,\cdots,1_i,\cdots,0}^{K+1},\overbrace{0,\cdots,0}^{N-K-1})\le K,
\label{eq:arbitary output fpg}
\end{equation}
for all $b\in\mc{B}$ and all other inequalities obtained by relabeling inputs and permuting parties.  These are valid facets for $\mc{C}_{N,K}^{(\text{sep})}(\bs{\mc{A}};\mc{B})$, as can be shown by following the procedure taken in Proposition \ref{prop:gfp} .
\begin{table*}
  \begin{tabular}{ l | c| c}
  \textrm{$(N,K)$-local MACs $\mc{C}_{N,K}{(\bcA,\mc{B})}$} &  \textrm{Inclusion relations} & \textrm{Propositions}\\
  \hline
$|\mc{B}|=2$& $\mc{C}_{N,K}=\mc{C}_{N,K}'=\conv[\mc{C}_{N,K}]=\mc{C}_{N,K}^{(\text{sep})}$ & Prop. \ref{Prop:binary-out} \\ 
     $K+1\ge|\mc{B}|>2$ & $\mc{C}_{N,K}\subset^{\dagger}\mc{C}_{N,K}'=\conv[\mc{C}_{N,K}]=\mc{C}_{N,K}^{(\text{sep})}$ & Prop. \ref{Prop:non-convex3}, Prop. \ref{prop:N,K convexity}\\
     $|\mc{B}|>K+1$ & $\mc{C}_{N,K}\subset^{\dagger}\mc{C}_{N,K}'\subset\conv[\mc{C}_{N,K}]\subset\mc{C}_{N,K}^{(\text{sep})}$ & Prop. \ref{Prop:non-convex2} 
  \end{tabular}
  \caption{A comparison of $(N,K)$-party classical MACs for $K\ge 2$.  The inclusion notation is the same as in Table \ref{tab:N-local-comparision}.}
  \label{tab:N,K macs}
\end{table*}
\subsubsection{Non-Convexity and Separating Different Classes of $(N,K)$-Local MACs}
In Section \ref{Sect:SR-model}, we introduced four classes of MACs,  $\mc{C}_{N,K}(\bcA,\mc{B})$, $\mc{C}'_{N,K}(\bcA,\mc{B})$,  $\conv[\mc{C}_{N,K}(\bcA,\mc{B})]$, and $\linebreak\mc{C}^{(\text{sep})}_{N,K}(\bcA,\mc{B})$.  We have seen in Proposition \ref{Prop:binary-out} that they are all equivalent when $|\mc{B}|=2$.  This is a special case, however, as shown in Proposition~\ref{Prop:non-convex}(whose proof is given in Appendix~\ref{sec:A1 separation}) they are not equivalent or even convex in general. The inclusion and convexity structure will be discussed in this section.  The results are summarized in Table \ref{tab:N-local-comparision} for $K=1$ and Table \ref{tab:N,K macs} for $K>1$ (with addition information given in Appendix~\ref{sec:A1 separation}).  Here we assume the cardinality of the input sets $\mc{A}_i$ and the output $\mc{B}$ are all greater than one.  
\par

\begin{proposition}
\label{Prop:non-convex}
For $|\mc{B}|>2$ and $N\geq 2$, the set $\mc{C}_{N}(\bcA,\mc{B})$ is non-convex; hence $\mc{C}_{N}(\bcA,\mc{B})\not=\mc{C}'_{N}(\bcA,\mc{B})$.
\end{proposition}
\begin{remark}
%The non-convexity of MACs $\mc{C}_N$ we have shown in proposition \ref{Prop:non-convex} comes from $|\mc{B}|>|\mc{M}_S|$, where $\mc{M}_s$ is the message set $\{0,\mbf{e}_1:s\in S\}$. Hence we cannot simply consider a deterministic decoder absorb it into the encoder as proposition \ref{Prop:binary-out}. For any MAC with the similar structure, we can show their non-convexity similarly.

The convexity of $\mc{C}_N(\bcA;[2])$ is established in Proposition \ref{Prop:binary-out} essentially by absorbing the decoding function $d:\cup_{i=1}^N\mc{M}_i\to \{0,1\}$ into the encoding functions $q_i:\mc{A}_i\to\mc{M}_i$, where $\mc{M}_i=\{0,\mbf{e}_i\}$.  The is possible because $|\mc{B}|\leq|\mc{M}_i|$ in the binary-output case.  The non-convexity shown in Proposition \ref{Prop:non-convex} follows from the fact that $|\mc{B}|=3$ and so $|\mc{B}|>|\mc{M}_i|$; hence it is not possible to absorb the decoder into the encoders.

\end{remark}

While $\mc{C}_N([2]^N;\mc{B})$ is non-convex for $|\mc{B}|>2$, if shared randomness between the particle source and receiver is allowed, then convexity can be restored by absorbing the encoding function into the decoding function.  
\begin{proposition}
\label{Prop:binary-int}
 $\mc{C}'_{N}([2]^N;\mc{B})=\mc{C}_{N}^{(\text{sep})}([2]^N;\mc{B})$ for arbitrary output set $\mc{B}$.
\end{proposition}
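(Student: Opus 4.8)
The plan is to prove the two inclusions separately; one is entirely formal, and the other is a short explicit construction that uses the binary-input hypothesis in an essential way. For $\mc{C}'_N([2]^N;\mc{B})\subseteq\mc{C}_N^{(\text{sep})}([2]^N;\mc{B})$ I would just read off Eq.~\eqref{Eq:2-N-source}: setting $g_i(b|a_i):=\sum_{m=0,\mbf{e}_i}d_i(b|m)q_i(m|a_i)$ exhibits any such channel in the separable form of Eq.~\eqref{Eq:1-N-separable} with the same weights $p_i$, since $\sum_b g_i(b|a_i)=\sum_m q_i(m|a_i)=1$ makes $g_i$ a legitimate stochastic map $\mc{A}_i\to\mc{B}$. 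This direction holds for arbitrary input alphabets and is already recorded in Eq.~\eqref{eq:inclusion N-local}.

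For the reverse inclusion, the task is to show that a separable MAC on binary inputs automatically admits a which-path implementation. Given $p(b|a_1\cdots a_N)=\sum_{i=1}^N p_i\, g_i(b|a_i)$ with $a_i\in[2]$, I would exploit the fact that the encoder target $\mc{M}_i=\{0,\mbf{e}_i\}$ is itself a two-element set, so a deterministic bijective encoder suffices and the path-dependent decoder can carry all of $g_i$. Concretely, set $q_i(0|0)=q_i(\mbf{e}_i|1)=1$ (hence $q_i(\mbf{e}_i|0)=q_i(0|1)=0$) and define $d_i(b|0):=g_i(b|0)$ and $d_i(b|\mbf{e}_i):=g_i(b|1)$, which are valid distributions over $\mc{B}$ because $g_i$ is stochastic. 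Substituting these into the inner sum of Eq.~\eqref{Eq:2-N-source} gives $\sum_{m=0,\mbf{e}_i}d_i(b|m)q_i(m|a_i)=g_i(b|a_i)$ for both $a_i=0$ and $a_i=1$, so summing against $p_i$ reproduces $p(b|a_1\cdots a_N)$ exactly. Hence $p\in\mc{C}'_N([2]^N;\mc{B})$, and together with the first inclusion this yields the claimed equality.

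I do not expect a genuine obstacle here: this is the ``absorb the encoding into the decoding'' observation flagged in the remark preceding the proposition, made precise. The one point worth highlighting is that binarity is exactly what makes it work — with $|\mc{A}_i|>2$ a deterministic encoder cannot separate all input values inside the two-element set $\mc{M}_i$, which is why the inclusion becomes strict for larger input alphabets, consistent with Table~\ref{tab:N-local-comparision}. If desired, one could phrase the argument so that it visibly parallels the proof of Proposition~\ref{Prop:binary-out}, the only difference being that there the structure was pushed into the encoders, whereas here — because $|\mc{B}|$ may exceed $2$ — it must be pushed into the path-dependent decoders instead.
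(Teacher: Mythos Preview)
Your proposal is correct and matches the paper's proof essentially verbatim: the paper gives exactly the same deterministic encoders $q_i(0|0)=q_i(\mbf{e}_i|1)=1$ and path-dependent decoders $d_i(b|0)=g_i(b|0)$, $d_i(b|\mbf{e}_i)=g_i(b|1)$, and relies on the already-established inclusion Eq.~\eqref{eq:inclusion N-local} for the easy direction. Your write-up is somewhat more explicit in spelling out the forward inclusion and in flagging where binarity is used, but the argument is the same.
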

\begin{proof}
For $p(b|\mbf{a})=\sum_{i=1}^Np_ig_i(b|a_i)$ we define decoders with $d_i(b|0)=g_i(b|0)$ and $d_i(b|\mbf{e}_i)=g_i(b|1)$, as well as deterministic encoders $q_i(0|0)=1$ and $q_i(\mbf{e}_i|1)=1$.  With these choices, $\sum_{i=1}^Np_ig_i(b|a_i)$ has the form of Eq.~\eqref{Eq:2-N-source}.
\end{proof}
However, as shown in Proposition~\ref{Prop:non-convex2} and the following remark (whose proof is given in Appendix~\ref{sec:A1 separation}), this no longer holds when the input sets are not binary, i.e. $\mc{C}_N'(\bcA,\mc{B})$ with $|\mc{B}|>2$ and $|\mc{A}_i|> 2$ will no longer be convex.  
\begin{proposition}
\label{Prop:non-convex2}
For $|\mc{B}|>K+1$ and $|\mc{A}_i|>2$ for some party $\msf{A}_i$, the set $\mc{C}_{N,K}'(\bcA,\mc{B})$ is non-convex; hence $\mc{C}_{N,K}'(\bcA,\mc{B})\not=\conv[\mc{C}_{N,K}(\bcA,\mc{B})]$.
\end{proposition}
\begin{remark}
When $|\mc{B}|>K+1$ and $|\mc{A}_i|>2$, one can also have $\conv[\mc{C}_{N,K}(\bcA,\mc{B})]\ne \mc{C}_{N,K}^{(\text{sep})}(\bcA,\mc{B})$\cite{supp}
\end{remark}
Similarly propositions~\ref{Prop:non-convex3} and \ref{prop:N,K convexity} discussing the relation specifically between different $(N,K)$-local MACs with $K\ge2$ can be found below while their proofs and other discussions are also given in detail in Appendix~\ref{sec:A1 separation} and the results are summarized in Table \ref{tab:N,K macs}.
\begin{proposition}
\label{Prop:non-convex3}
For $|\mc{B}|>2$, the set $\mc{C}_{N,K}(\bcA,\mc{B})$ is non-convex; hence $\mc{C}_{N,K}(\bcA,\mc{B})\not=\mc{C}'_{N,K}(\bcA,\mc{B})$.
\end{proposition}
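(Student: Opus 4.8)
The plan is to follow the template of Proposition~\ref{Prop:non-convex}, the new difficulty being that each joint encoder now has $K+1$ ports to route among rather than $2$. First a reduction: restricting each input set to two symbols is a linear surjection from $\mc{C}_{N,K}(\bcA;[3])$ onto $\mc{C}_{N,K}([2]^N;[3])$, and forgetting the outputs outside a fixed three-element subset of $\mc{B}$ identifies $\mc{C}_{N,K}(\bcA;[3])$ with the intersection of $\mc{C}_{N,K}(\bcA;\mc{B})$ with a linear subspace. Since linear images and subspace intersections of a convex set are convex, it suffices to exhibit a non-convexity inside $\mc{C}_{N,K}([2]^N;[3])$ (recall $N>K\ge2$). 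Moreover any single-subset channel is itself $(N,K)$-local when $|\mc{B}|\le K+1$, so $\conv[\mc{C}_{N,K}([2]^N;[3])]=\mc{C}_{N,K}^{(\text{sep})}([2]^N;[3])$, and it is enough to produce one separable $(N,K)$-local MAC $\mbf{p}$ admitting no decomposition $p(b|\mbf{a})=\sum_{|S|=K}p_S\sum_{m\in\mc{M}_S}d(b|m)\,q_S(m|(a_s)_{s\in S})$ with a single globally shared decoder $d$.

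The obstruction I would exploit is a port-colouring argument. In any such decomposition the per-subset channel $\tilde g_S(\cdot|(a_s)_{s\in S}):=\sum_{m\in\mc{M}_S}d(\cdot|m)\,q_S(m|(a_s)_{s\in S})$ takes values in $\conv\{d(\cdot|m):m\in\mc{M}_S\}$, while the columns $d(\cdot|m)$ range over the $N+1$ physical ports $\{0,\mbf{e}_1,\dots,\mbf{e}_N\}$ shared by all subsets, each $\mc{M}_S$ containing only $K+1$ of them. I would build $\mbf{p}$ so that, in every valid decomposition, (i) $\mbf{p}$ equals the three extreme points $\delta_0,\delta_1,\delta_2$ of $\Delta([3])$ at three input strings whose projections onto each $K$-coordinate window are three distinct binary strings --- forcing every active $\tilde g_S$ to attain all three vertices, hence all three vertices appear among $\{d(\cdot|m):m\in\mc{M}_S\}$; and (ii) $I_2^{(i,j)}(\mbf{p})\ne0$ for every pair $(i,j)$ --- which, because a decomposition with no active subset containing both $i$ and $j$ is additively separable between $a_i$ and $a_j$ and hence has vanishing second-order interference there, forces every pair to be covered by an active subset (so when $K=2$, all $\binom{N}{2}$ pair-subsets carry positive weight). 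Items (i) and (ii) together require $\{0\}\cup\{\mbf{e}_s:s\in S\}$ to see all three vertices for every size-$K$ set $S$; for $K=2$ this forces the two vertices assigned to $\mbf{e}_s$ and $\mbf{e}_t$ to be the same unordered pair for every $\{s,t\}$, which is impossible once $N\ge3$, so $\mbf{p}\notin\mc{C}_{N,K}$. An explicit witness for $N=3,K=2$ is $\mbf{p}=\tfrac13(g_{12}+g_{13}+g_{23})$ with each $g_S:\{0,1\}^2\to[3]$ deterministic on three inputs (returning $\delta_0,\delta_1,\delta_2$ at the inputs dictated by $000,110,011$) and equal to the barycentre $(\tfrac13,\tfrac13,\tfrac13)$ on the fourth; one checks directly that $I_2^{(i,j)}(\mbf{p})\ne0$ for all three pairs and that the column constraints $\{D_0,D_1,D_2\}=\{D_0,D_1,D_3\}=\{D_0,D_2,D_3\}=\{\delta_0,\delta_1,\delta_2\}$ are then inconsistent.

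The main obstacle is the general pair $N,K$. The clean $K=2$ colouring contradiction persists only as long as the $K$-subsets of $[N]$ are combinatorially dense enough relative to the ports: the three-vertex-valued-input witness is available whenever three length-$N$ strings with pairwise agreement $\le K-1$ exist (roughly $N\lesssim 3K$), while for larger $N$ (from about $2K$ on) one instead lets $\mbf{p}$ be a uniform mixture of full-range ternary channels over all $\binom{N}{K}$ subsets, for which the same vertex-covering requirement fails by a Plotkin-type count; these two regimes together cover every $N>K\ge2$. The delicate points --- showing that \emph{every} decomposition, not just the canonical one, inherits (i) and (ii), controlling the non-vertex structure of the columns when $N$ is close to $K$, and the combinatorial covering lemma --- are the technical heart, carried out in the supplement \cite{supp}. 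Finally, since $\mc{C}_{N,K}\subseteq\mc{C}'_{N,K}$ and $\mc{C}'_{N,K}(\bcA;[3])=\conv[\mc{C}_{N,K}(\bcA;[3])]$ is convex (as $3\le K+1$), the witness lies in $\mc{C}'_{N,K}\setminus\mc{C}_{N,K}$; for $|\mc{B}|>K+1$ the same separation follows by embedding the ternary-output witness, so $\mc{C}_{N,K}(\bcA;\mc{B})\ne\mc{C}'_{N,K}(\bcA;\mc{B})$.
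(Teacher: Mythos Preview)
Your $(3,2)$ witness and the port-colouring contradiction are essentially the paper's: a mixture of three pair-channels $g_{12},g_{13},g_{23}$, each surjecting onto $\{0,1,2\}$, so that a single shared decoder $d$ would have to realise all three Dirac masses inside every triple $\{d(\cdot|0),d(\cdot|\mbf{e}_s),d(\cdot|\mbf{e}_t)\}$---impossible with four ports and three colours. The difference is in how you pin down the per-subset behaviour in an \emph{arbitrary} $\mc{C}_{3,2}$-decomposition. The paper takes the $g_S$ fully deterministic and invokes uniqueness of the separable decomposition (the $g_S$ being extreme in $\mc{C}_{3,2}^{(\text{sep})}$, argued as in Proposition~\ref{Prop:non-convex}) to conclude that the per-subset channels in any decomposition \emph{are} the $g_S$. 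You instead insert a barycentric value at the fourth input, use $I_2^{(i,j)}(\mbf{p})\ne 0$ to force every pair to carry positive weight, and use the three extreme-point inputs of $\mbf{p}$ to force every active $\tilde g_S$ to be full-range. Your route avoids the (only sketched) uniqueness step at the price of a slightly less direct witness; both land on the same four-ports/three-colours contradiction.

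For general $(N,K)$ the paper in fact only writes out the $(3,2)$ instance and asserts the rest. Your two-regime sketch is more ambitious, but the stated bounds need repair: for $K=2$ your ``three strings with pairwise agreement $\le K-1$'' means pairwise Hamming distance $\ge N-1$, which already fails at $N=4$ (well inside your ``$N\lesssim 3K$''); and in the second regime the vertex-covering constraint \emph{is} satisfiable once $K\ge 4$ (assign each colour to at least two of the $K+2$ ports), so a bare count on port colours cannot give a contradiction there without a further idea. None of this affects the $(3,2)$ case, which both you and the paper treat as the actual content of the proof.
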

 \begin{proposition}
\label{prop:N,K convexity}
$\mc{C}'_{N,K}(\bcA;\mc{B})=\mc{C}_{N,K}^{(\text{sep})}(\bcA;\mc{B})$ for arbitrary input sets $\bs{\mc{A}}$ and output set $|\mc{B}|\le K+1$.
\end{proposition}
\section{Quantum MACs}
\label{Sect:quantum MACs}
\label{Sect:N-local-quantum}
\subsection{Quantum Violation of $N$-Local Classical MAC}
According to Theorem \ref{Thm:classical-I2-N}, an $N$-partite binary MAC can be generated by a single classical particle if and only if the interference term $I_2$ vanishes for every pair of parties.  In contrast, quantum mechanics allows for single particles to demonstrate nonzero interference, a fact demonstrated most conspicuously in a double-slit experiment \cite{Grangier-1986,Jacques-2005,Merzbacher-1998a}.  Let us here describe a simplified version of this effect.  Suppose that a particle is prepared in a superposition of two paths $\ket{\psi}=x\ket{0}_{\msf{A}_1}\ket{1}_{\msf{A}_2}+y\ket{1}_{\msf{A}_1}\ket{0}_{\msf{A}_2}$.  The particle is then subjected to interference described by the path transformation
\begin{align}
\frac{1}{\sqrt{2}}(\ket{0}_{\msf{A}_1}\ket{1}_{\msf{A}_2}+\ket{1}_{\msf{A}_1}\ket{0}_{\msf{A}_2})&\to\ket{1}_{\msf{B}_1}\ket{0}_{\msf{B}_2}\notag\\
\frac{1}{\sqrt{2}}(\ket{0}_{\msf{A}_1}\ket{1}_{\msf{A}_2}-\ket{1}_{\msf{A}_1}\ket{0}_{\msf{A}_2})&\to\ket{0}_{\msf{B}_1}\ket{1}_{\msf{B}_2}.
\end{align}
In the end, if the particle is found in path $\msf{B}_1$ we say this is ``outcome $0$'' and if it is found in path $\msf{B}_2$ it is ``outcome $1$.''  According to Born's Rule, the probability of outcome $0$ is 
\begin{equation}
\label{Eq:Born-I2}
\frac{1}{2}|x+y|^2=\frac{1}{2}|x|^2+\frac{1}{2}|y|^2+ Re(xy^*).
\end{equation}
The term $\frac{1}{2}|x|^2$ we can recognize as the probability of outcome $0$ if path $\msf{A}_1$ is blocked prior to the interference, and similarly, $\frac{1}{2}|y|^2$ is the probability of outcome $0$ if path $\msf{A}_2$ blocked.  If we let $0/1$ denote the event of blocked/unblocked path, then we can rewrite Eq.~\eqref{Eq:Born-I2} as
\begin{align}
Re(xy^*)=p(0|1,1)-p(0|0,1)-p(0|1,0).
\end{align}
Since $p(0|0,0)=0$, as both paths are being blocked, we can add this to the right-hand side.  A comparison with Eq.~\eqref{Eq:I2} shows that $I_2=|Re(xy^*)|$ in this simple path-blocking interference setup.  It is referred to as second-order interference since it is quadratic in the wave-function amplitudes, and it is nonzero if and only if both amplitudes are nonzero.

The setup just described falls under the communication scenario of Fig. \ref{Fig:N-Local}.  Namely the ``path blocking'' action of $\msf{A}_1$ or $\msf{A}_2$ is a valid NPE operation, and thus the fact that $I_2=Re(xy^*)$ implies $\mc{C}_2$ is a strict subset of $\mc{Q}_2$.  Even stronger, every non-classical pure state (i.e. one with a nonzero superposition of paths) is capable of generating a MAC outside the set $\mc{C}_2$.  A natural question is whether the same holds for mixed states, and whether an even higher value of $I_2$ can be achieved by using an encoding scheme other than path blocking.  To answer this question, let us define for an arbitrary state $\rho\in\mc{H}^{\msf{A}_1\cdots\msf{A}_N}_1$ the quantities
\begin{equation}
I_{2}^{(i,j)}(\rho):=\max\{I^{(i,j)}_2(\mbf{p}_{B|\bA})\;:\;\mbf{p}_{B|\bA}\in\mc{Q}_N(\rho)\}. 
\end{equation}
In other words, $I_{2}^{(i,j)}(\rho)$ is the largest value of $I_2$ that can be achieved between parties $i$ and $j$ using state $\rho$ and NPE encoding.  The following {proposition} shows these values directly correspond to the off-diagonal matrix elements of $\rho$.\begin{proposition}
\label{Prop:N-Local-quantum-I2}
For $\rho\in\mc{H}^{\bsA}_1$ let $\rho_{ij}=\bra{\mbf{e}_i}\rho\ket{\mbf{e}_j}$.  Then  $I_{2}^{(i,j)}(\rho)=4|\rho_{ij}|$.  Moreover, this value can be attained by parties $i$ and $j$ performing a $\{0,\pi\}$ encoding, i.e. $\mc{E}_{a}(X)=\sigma_z^a(X)\sigma_z^a$ for $a\in\{0,1\}$ and $\sigma_z=\op{0}{0}+e^{i\pi}\op{1}{1}$.
\end{proposition}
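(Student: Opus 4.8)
The strategy is to establish matching upper and lower bounds on $I_2^{(i,j)}(\rho)$, with the lower bound realized by the stated $\{0,\pi\}$ protocol. For the upper bound, fix the inputs and the (arbitrary) N-P encodings of every party $\msf{A}_k$, $k\neq i,j$; their composition is a channel $\Lambda$ acting only on the modes $k\neq i,j$, and put $\tilde\rho:=(\Lambda\otimes\mathrm{id})(\rho)$. The first observation is that every single-particle N-P map fixes the vacuum ($A\ket{0}=\ket{0}$, $B\ket{0}=0$ in the notation of Eq.~\eqref{Eq:Kraus-qubit-MAC}), and since $\ket{\mbf{e}_i}$ and $\ket{\mbf{e}_j}$ carry only the vacuum on the modes that $\Lambda$ touches, one checks $\bra{\mbf{e}_i}\tilde\rho\ket{\mbf{e}_j}=\bra{\mbf{e}_i}\rho\ket{\mbf{e}_j}=\rho_{ij}$: the coherence between paths $i$ and $j$ is untouched by the other parties. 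Next I parametrize the encodings of $\msf{A}_i$ and $\msf{A}_j$: since (for $M=1$) an N-P map is a convex mixture of amplitude-damping channels, the composite $\mc{E}^{\msf{A}_i}_{a_i}\otimes\mc{E}^{\msf{A}_j}_{a_j}$, acting on the span of the vacuum and the one-particle states, has Kraus operators $\sqrt{p^{a_i}_\mu p^{a_j}_\nu}\,(\mathds{1}-(1-y^{a_i}_\mu)Q_i-(1-y^{a_j}_\nu)Q_j)$, $\sqrt{p^{a_i}_\mu p^{a_j}_\nu}\,z^{a_j}_\nu\op{\mbf{0}}{\mbf{e}_j}$, and $\sqrt{p^{a_i}_\mu p^{a_j}_\nu}\,z^{a_i}_\mu\op{\mbf{0}}{\mbf{e}_i}$, where $Q_i:=\op{\mbf{e}_i}{\mbf{e}_i}$ satisfies $Q_iQ_j=0$ and $|y^{a}_\mu|\le 1$; here the two ``particle-loss'' Kraus operators for $i$ and $j$ compose to zero because only one particle is present.

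The crux is then to compute $\Delta:=\sigma_{a_ia_j}+\sigma_{a_i'a_j'}-\sigma_{a_ia_j'}-\sigma_{a_i'a_j}$ for arbitrary inputs $a_i\neq a_i'$, $a_j\neq a_j'$. Expanding the Kraus action and using that $\sum_{a}(-1)^{a}=0$, every contribution proportional to $\tilde\rho$, to $Q_i\tilde\rho$ or $\tilde\rho Q_i$, to $Q_i\tilde\rho Q_i$ (and their $j$-counterparts), and every contribution living in the vacuum sector, cancels out; only the cross term survives, giving $\Delta=D_i\bar{D}_j\,\rho_{ij}\op{\mbf{e}_i}{\mbf{e}_j}+\bar{D}_iD_j\,\bar\rho_{ij}\op{\mbf{e}_j}{\mbf{e}_i}$, where $D_i=\bar{y}^{(a_i')}_i-\bar{y}^{(a_i)}_i$ is a difference of convex-averaged damping parameters, so $|D_i|\le 2$, and similarly $|D_j|\le 2$. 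Thus $\Delta$ is traceless, supported on $\mathrm{span}\{\ket{\mbf{e}_i},\ket{\mbf{e}_j}\}$, with eigenvalues $\pm|D_i|\,|D_j|\,|\rho_{ij}|$; hence for any POVM element $\Pi_b$ with $0\le\Pi_b\le\mathds{1}$ we have $\big|p(b|a_i,a_j)+p(b|a_i',a_j')-p(b|a_i,a_j')-p(b|a_i',a_j)\big|=|\tr[\Pi_b\Delta]|\le|D_i|\,|D_j|\,|\rho_{ij}|\le 4|\rho_{ij}|$. Taking the maximum over $b$, over inputs, and over all N-P encodings gives $I_2^{(i,j)}(\rho)\le 4|\rho_{ij}|$.

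For the matching lower bound I exhibit the stated protocol: all parties $\msf{A}_k$, $k\neq i,j$, act by the identity (so $\tilde\rho=\rho$); parties $\msf{A}_i,\msf{A}_j$ use the $\{0,\pi\}$ encoding $\mc{E}_a(X)=\sigma_z^aX\sigma_z^a$, which is unitary with damping parameters $\bar{y}^0=1$ and $\bar{y}^1=-1$, so $|D_i|=|D_j|=2$; and $\msf{B}$ measures the rank-one element $\Pi_b=\op{v}{v}$ with $\ket{v}=\tfrac{1}{\sqrt{2}}\big(\ket{\mbf{e}_i}+e^{-i\arg\rho_{ij}}\ket{\mbf{e}_j}\big)$, the top eigenvector of $\Delta$, completed to a valid POVM over $\mc{B}$ by $\mathds{1}-\Pi_b$. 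A direct evaluation then gives $I_2^{(i,j)}$ of this MAC equal to $\tr[\Pi_b\Delta]=4|\rho_{ij}|$, so $I_2^{(i,j)}(\rho)\ge 4|\rho_{ij}|$, and the two bounds combine to the claimed equality.

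The main obstacle is the bookkeeping behind the key cancellation: one must track carefully how the amplitude-damping Kraus operators act jointly across the two modes and the vacuum sector — in particular that double particle-loss vanishes, and that the only surviving piece of $\Delta$ is $Q_i\tilde\rho Q_j+\mathrm{h.c.}$ — after which the $\sum_a(-1)^a=0$ identities do the rest. The invariance $\tilde\rho_{ij}=\rho_{ij}$ under the other parties' encodings is the other point requiring care, but it follows directly from vacuum-preservation of N-P maps; everything else is elementary linear algebra on a two-dimensional subspace.
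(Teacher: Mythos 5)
Your proof is correct and follows essentially the same route as the paper's: you parametrize the single-particle N-P encodings by their (convex-averaged) damping parameters, show the $I_2$ double difference annihilates everything except the $\op{\mbf{e}_i}{\mbf{e}_j}$ coherence (your $D_i\bar D_j\rho_{ij}$ is the paper's $\kappa\rho_{ij}$), bound $|\tr[\Pi_b\Delta]|\le 4|\rho_{ij}|$ via $0\le\Pi_b\le\mathds{1}$, and saturate with the $\{0,\pi\}$ phase encoding and the projector onto $\tfrac{1}{\sqrt 2}(\ket{\mbf{e}_i}+e^{-i\arg\rho_{ij}}\ket{\mbf{e}_j})$. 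Your explicit verification that the other parties' N-P maps leave $\rho_{ij}$ invariant is a detail the paper only asserts, but it is the same underlying vacuum-preservation argument, not a different method.
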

\begin{proof}
{As discussed in Eq.~\eqref{Eq:Kraus-qubit-MAC}, every NPE operation $\mc{E}_{a}^{\msf{A}_i}$ on qubit system $\msf{A}_i$ is characterized by Kraus operators $\left\{\left(\begin{smallmatrix}1&0\\0&y_{a_i}^{i}\end{smallmatrix}\right),\left(\begin{smallmatrix}0&z_{a_i}^{i}\\0&0\end{smallmatrix}\right)\right\}$ such that $|y_{a_i}^i|^2+|z_{a_i}^i|^2=1$.  Notice that
\begin{align}
\label{Eq:NP-output-operators}
\mc{E}_{a_i}^{\msf{A}_i}\otimes\mc{E}_{a_j}^{\msf{A}_j}(\op{00}{00})&=\op{00}{00}\notag\\
\mc{E}_{a_i}^{\msf{A}_i}\otimes\mc{E}_{a_j}^{\msf{A}_j}(\op{10}{10})&=\lambda_{a_i}^i\op{10}{10}+(1-\lambda_{a_i}^i)\op{00}{00}\notag\\
\mc{E}_{a_i}^{\msf{A}_i}\otimes\mc{E}_{a_j}^{\msf{A}_j}(\op{01}{01})&=\lambda_{a_j}^j\op{01}{01}+(1-\lambda_{a_j}^j)\op{00}{00}\notag\\
\mc{E}_{a_i}^{\msf{A}_i}\otimes\mc{E}_{a_j}^{\msf{A}_j}(\op{01}{10})&=\kappa_{a_i,a_j}\op{01}{10}  
\end{align}
in which $\kappa_{a_i,a_j}=y^j_{a_j}y^{i*}_{a_i}$ and $\lambda_a^i=|y^i_{a}|^2$.  If $\mc{E}^{\overline{\msf{A}_i\msf{A}_j}}$ is any local NPE map collectively performed by all parties other than $\msf{A}_i$ and $\msf{A}_j$, it can then be easily seen that}
\begin{align}
(\mc{E}_{0}^{\msf{A}_i}-\mc{E}_{1}^{\msf{A}_i})&\otimes (\mc{E}_{0}^{\msf{A}_j}-\mc{E}_{1}^{\msf{A}_j})\otimes \mc{E}^{\overline{\msf{A}_i\msf{A}_j}}(\rho)\notag \\
&= \kappa\rho_{ij}\op{\mbf{e}_i}{\mbf{e}_j} + \kappa^*\rho_{ji}^*\op{\mbf{e}_j}{\mbf{e}_i}, 
\end{align}
where
\begin{equation}
\label{eq:kappa}
\kappa=\kappa_{0,0}+\kappa_{1,1}-\kappa_{0,1}-\kappa_{1,0}.
\end{equation}
For any decoding POVM $\{\Pi_0,\Pi_1\}$ we have
\begin{align}
I_{2}^{(i,j)}(\rho)
&= \tr\left[\Pi_0 \left(\kappa\rho_{ij}\op{\mbf{e}_i}{\mbf{e}_j} + \kappa^*\rho_{ji}^*\op{\mbf{e}_j}{\mbf{e}_i}\right) \right]\notag \\
&\leq |\kappa\rho_{ij}|\leq 4|\rho_{ij}|,
\end{align}
where the last inequality follows from the triangle inequality and the fact that $\Pi_0\leq \mbb{I}$. The inequalities are saturated when choosing $\{0,\pi\}$ phase encoding and $\Pi_0 = \ket{\psi_{ij}}\bra{\psi_{ij}}$, where $\displaystyle\ket{\psi_{ij}} = \frac{1}{\sqrt{2}}(\ket{\mbf{e}_i}+\frac{\rho_{ij}^*}{|\rho_{ij}|}\ket{\mbf{e}_j})$.
\end{proof}

\subsection{Quantum Violation of $(N,K)$-Local Classical MACs}
\label{sec:quantum violation}

For $K>1$, the polytope $\mc{C}_{N,K}$ is constrained by the vanishing of all $I_{K+1}$ expressions (see Eq.~\eqref{eq:IK equality}) as well as other non-trivial facets.  We just showed that quantum states can violate the $I_2$ equality using NPE operations.  However, for $K>1$ this is no longer possible.  To see this explicitly, consider the $I_3$ expression for parties $S=\{i,j,k\}$.  Using the notation of Proposition \ref{Prop:N-Local-quantum-I2}, for any $\rho\in\mc{H}_1^{\bsA}$ the relevant encoding for $I_3$ is 
\begin{align}
&(\mc{E}_{0}^{\msf{A}_k}-\mc{E}_{1}^{\msf{A}_k})\otimes(\mc{E}_{0}^{\msf{A}_i}-\mc{E}_{1}^{\msf{A}_i})\otimes (\mc{E}_{0}^{\msf{A}_j}-\mc{E}_{1}^{\msf{A}_j})\otimes \mc{E}^{\overline{\msf{A}_i\msf{A}_j\msf{A}_k}}(\rho) \notag\\&=(\mc{E}_{0}^{\msf{A}_k}-\mc{E}_{1}^{\msf{A}_k})( \kappa\rho_{ij}^*\op{\mbf{e}_i}{\mbf{e}_j} + \kappa^*\rho_{ji}\op{\mbf{e}_j}{\mbf{e}_i})=0, 
\label{eq:vanishing-i3}
\end{align}
where $\kappa$ is defined in Eq.~\eqref{eq:kappa}.  The second equality holds because $\mc{E}_{a_k}^{\msf{A}_k}$ is an NPE operation and so $\mc{E}_{a_k}^{\msf{A}_k}(\op{\mbf{e}_i}{\mbf{e}_j})=\op{\mbf{e}_i}{\mbf{e}_j}$.  One can easily extend this argument to show that $I_{K+1}=0$ for all $K>1$ when the quantum encodings are required to be $N$-local, NPE operations.  

Ultimately, the vanishing of $I_3$ for quantum MACs is a consequence of Born's Rule \cite{Sorkin-1994a,Daki-2014}, and it has been experimentally verified \cite{Sinha-2010a}.  Recently, it was shown by Rozema \textit{et al.} that $I_3$ and all higher-order interference must vanish in any multi-path interferometer experiment using path-blocking operations \cite{Rozema-2020}.  Equation \eqref{eq:vanishing-i3} offers a slight generalization in that the encoding maps $\mc{E}_{a_k}^{\msf{A}_k}$ need not just be a path-blocking operation but rather any NP-operation.  Nevertheless, the spirit of the trace argument presented in Ref. \cite{Rozema-2020} is implicit here since Eq. \eqref{eq:vanishing-i3} is a traceless operator.  

%The vanishing $I_3$ for all quantum MACs has been known as no higher order coherence in quantum mechanics as a consequence of Born's rule \cite{Sorkin-1994a,Daki-2014}, similar discussion has been made in an information-theoretical way recently in \cite{Horvat-2020a, Rozema-2020}.

While quantum mechanics cannot violate the $I_{K+1}=0$ constraints of $\mc{C}_{N,K}$ for $K>1$, it is possible for quantum systems to violate some of the other facet inequalities.  For example, in the case of $(3,2)$-local MACs, all channels in $\mc{C}_{3,2}([2]^3;[2])$ must satisfy Eqs.~\eqref{eq:fp}--\eqref{Eq:ineq3}, namely
%\[p(0|000)+p(1|001)+p(1|010)+p(0|101)\le3.\]
\begin{align*}
&p(0|000)+p(1|001)+p(1|010)+p(1|100)\le3\\
&p(0|000)+p(1|001)+p(1|010)+p(0|101)\le3\\
&p(0|000)+p(1|001)+p(1|010)+p(1|011)\\
&\textcolor{white}{p(0|000)+p(1|010)+p(1|011)}+p(0|111)\le4.
\end{align*}
However, by using the equal superposition state $\sqrt{1/3}(\ket{100}+\ket{010}+\ket{001})$, it is possible to generate a MAC in $\mc{Q}_{3,1}([2]^3;[2])$ for which the left-hand side attains a value of $3.66$, $3.15$ and $4.66$ correspondingly. \par

In this section we will generalize this example by demonstrating $\mc{Q}_{N,1}([2]^N;[2])\not\subset \mc{C}_{N,N-1}([2]^N;[2])$ for every $N$.  In other words, $N$ quantum parties restricted to local encodings can generate a MAC that cannot be simulated using $N$ classical parties with the locality restriction relaxed on all but one party.  This result was previously shown in Ref. \cite{Horvat-2019a} and given the interpretation that quantum mechanics allows for a greater ``information speed'' with respect to communicating information shared among spatially separated senders.  As in Ref. \cite{Horvat-2019a}, we show this result by finding a violation of the $(N,N-1)$-party generalized fingerprinting inequality,
\begin{equation}
p(0|\overbrace{0,\cdots,0}^N)+\sum_{i=1 }^Np(1|\overbrace{0,\cdots,1_i,\cdots,0}^N)\le N,
\label{eq:gfp}
\end{equation}
which we have proven to be a valid facet of $\mathcal{C}_{N,N-1}$ in Proposition \ref{prop:gfp}.  Here we provide a slightly improved quantum strategy than the one presented in Ref. \cite{Horvat-2019a}, and we are thus able to obtain a larger violation in many cases.  

To show the quantum violation, we suppose the quantum particle is prepared in an equal superposition among $N$ paths, that is, $\ket{\Psi_N}=\frac{1}{\sqrt{N}}\sum_{i=1}^N\ket{\mbf{e}_i}$ with $\ket{\mbf{e}_i}=\ket{0}_{A_1}\cdots\ket{1}_{A_i}\cdots\ket{0}_{A_N}$. Consider the local phase encoding map for party $\msf{A}_i$ characterized by two angles $(\theta_i,\phi_i)$: $\mc{E}^{\msf{A}_i}_{0}(\rho)=U(\theta_i)\rho U(-\theta_i)$ and $\mc{E}^{\msf{A}_i}_{1}(\rho)=U(\phi_i+\theta_i)\rho U(-\phi_i-\theta_i)$, where $U(\theta)=\op{0}{0}+e^{i\theta}\op{1}{1}$.  With this encoding, Eq.~\eqref{eq:gfp} becomes
\begin{equation}
\tr[\Pi_0(\sigma_{0\cdots0})]+\sum_{i=1}^N\tr[\Pi_1(\sigma_{0\cdots1_i\cdots0})]\le N,
\label{eq:Helstrom}
\end{equation}
where $\sigma_{a_1\cdots a_N}=\mc{E}^{\msf{A}_1}_{a_1}\otimes\cdots\otimes \mc{E}^{\msf{A}_N}_{a_N}(\op{\Psi_N}{\Psi_N})$ and $\{\Pi_0,\Pi_1\}$ is the decoding POVM. From Helstrom's Theorem \cite{helstrom-1969}, the maximum value of the left-hand side is $\frac{1}{2}(N+1+\norm{M}_1)$, where $\norm{M}_1=\tr{\sqrt{M^{\dagger}M}}$ and
\begin{equation}
\label{Eq: M}
M=\sum_{i=1}^N\sigma_{0 \cdots1_i\cdots0}-\sigma_{0\cdots0}.
\end{equation}
A violation of the generalized fingerprinting inequality can then be cast as the condition $\delta>0$ in which
\begin{equation}
\label{Eq: Violation}
\delta=\frac{1}{2}\left(\norm{M}_1-N+1\right).
\end{equation} 
Notice that $\mc{E}^{\msf{A}_1}_{0}\otimes\cdots\otimes \mc{E}^{\msf{A}_N}_{0}$ is a global unitary, and since the trace norm is invariant under a global unitary, it can be factored out.  Therefore, $M$ can be simply characterized as an $N\times N$ matrix of the form
\begin{widetext}
\begin{align}
M=\frac{1}{N}\begin{pmatrix}
N-1&N-3+e^{i\phi_1}+e^{-i\phi_2}&\cdots&N-3+e^{i\phi_1}+e^{-i\phi_N}\\
N-3+e^{i\phi_2}+e^{-i\phi_1}&N-1&\cdots&N-3+e^{i\phi_2}+e^{-i\phi_N}\\
\vdots&\vdots&\vdots&\vdots\\
N-3+e^{i\phi_N}+e^{-i\phi_1}& N-3+e^{i\phi_N}+e^{-i\phi_2}&\cdots&N-1
\end{pmatrix}.
\label{eq:violation}
\end{align}
\end{widetext}
\par
This matrix depends on $N$ parameters and there appears to be no easy way to compute its trace norm.  Nevertheless, after performing exhaustive numerical searches for small $N$, we find that a maximal violation can be obtained with some simple encoding strategies (see Table \ref{tab:violation}).  This strategy consists of setting $\phi_1=\phi$, $\phi_2=-\phi$, $\phi_i=\pi,~\forall i\ne 1,2$. In this case $M$ will be simplified as:
\begin{equation}
M=\frac{1}{N}(4\mathbb{I}+(N-3)\op{\Psi_N}{\Psi_N}-\op{\Psi_N^{'}}{\Psi_N^{'}}-\op{\Psi_N^{''}}{\Psi_N^{''}}),
\label{eq: M}
\end{equation}
where $\ket{\Psi_N^{'}}=\frac{1}{\sqrt{N}}\left(e^{i\phi}\ket{\mbf{e}_1}+\sum_{i\ne 1}^N\ket{\mbf{e}_i}\right)$ and  $\ket{\Psi_N^{''}}=\frac{1}{\sqrt{N}}\left(\ket{\mbf{e}_2}+e^{i\phi}\sum_{i\ne 2}^N\ket{\mbf{e}_i}\right)$. With these simplifications, we can set a new basis $\ket{\mbf{e}_1'}=\ket{\mbf{e}_1},\ket{\mbf{e}_2'}=\ket{\mbf{e}_2}$, and $\ket{\mbf{e}_3'}=\frac{1}{\sqrt{N-2}}\sum_{i=3}^N\ket{\mbf{e}_i}$ (for $N=2$, there is no need for the third basis). Hence the matrix $M$ can be expressed as:
\begin{equation}
M=M_3\oplus\frac{4}{N}\bar{\mbb{I}},
\end{equation}
with $M_3$ written in the subspace $\{\ket{\mbf{e}'_1},\ket{\mbf{e}'_2},\ket{\mbf{e}'_3}\}$ and $\bar{\mbb{I}}$ is identity in the orthogonal subspace:
\begin{widetext}
\begin{align}
M_3=\frac{1}{N}\begin{pmatrix}
N-1&N-3+2e^{i\phi}&\sqrt{N-2}(N-4+e^{i\phi})\\
N-3+2e^{-i\phi}&N-1&\sqrt{N-2}(N-4+e^{-i\phi})\\
\sqrt{N-2}(N-4+e^{-i\phi})&\sqrt{N-2}(N-4+e^{i\phi})&(N-2)(N-5)+4
\end{pmatrix},
\label{eq: M'}
\end{align}
\end{widetext}

\begin{table}
    \begin{tabular}{c|c}
    Number of parties $N$ & Quantum violation $\delta$   \\
    \hline
    2 & 1  \\

    3 &0.6667 \\

    4 &0.1250 \\

    5 &0.0333\\

    6 &0.0139 \\
    \end{tabular}
 
    \caption{Quantum violation $\delta$ for different generalized fingerprinting inequality based on full numerical search over $N$ phase parameters.}
    \label{tab:violation}
\end{table}

Since $\tr{M}=N-1$, a necessary condition for $\delta>0$ is that $\lambda_{\min}<0$, where $\lambda_{\min}$ is the smallest eigenvalue $M_3$.  Moreover, from the Cauchy Interlace Theorem, $M_3$ will have at most one negative eigenvalue since its first and second leading principal minors are positive.  Therefore, the quantum violation can be expressed as
\begin{equation}
\delta=\max\{0,-\lambda_{\min}\}.
\end{equation} 
The smallest eigenvalue of a $3\times 3$ matrix is analytically calculable. Two useful criteria can be obtained as follows:\par 
(1) Quantum violation $\delta>0$ if and only if
\begin{alignat}{2}
        &\cos{\phi}\ne 1  &&\text{if}~N=2,3\\ &\cos{\phi}\ge\frac{(N-2)(N-3)-4}{(N-2)(N-3)}\quad &&\text{if}~N>3
\end{alignat}\par
(2) Maximal quantum violation $\delta$ is given as:
\begin{alignat}{2}
\label{Eq: maximal violation}
&\delta=\frac{2}{N} && \text{if}~N=2,3\\
&\delta=\frac{1}{N(N-2)(N-3)}\quad &&\text{if}~N>3
\label{Eq:Maximal violation}
\end{alignat}
which is achieved at $\phi=\pi$ for $N=2,3$ and $\cos\phi=\frac{53 - 105N + 71 N^2 - 20 N^3 + 2 N^4}{2(N-2)^2(N-3)^2}$ for $N>3$. Surprisingly, these violation saturate the maximal violation in Table \ref{tab:violation} where we were considering all phase encoding strategies.
\begin{remark}
The optimal encoding strategy for achieving the maximal quantum violation Eq.~\eqref{Eq:Maximal violation} is not unique.  We can get the same violation with other encoding strategies: e.g. for even $N$, $\phi_i=\phi~~\forall i\le\frac{N}{2}$ and  $\phi_i=-\phi$ otherwise \cite{Horvat-2019a}; for odd $N$,  $\phi_{N}=\pi$, $\phi_i=\phi~~\forall i\le\lfloor\frac{N}{2}\rfloor$ and $\phi_i=-\phi$ otherwise.
\end{remark}

\subsection{Beyond $N$-Local Quantum Channels}
\label{Sect:quantum-B>2}
 In the previous section, we showed that $N$-local quantum MACs have a much richer structure than $N$-local classical MACs, and they cannot even be simulated using $(N,K)$-local classical MACs for any $K<N$.  Here we consider other structural properties of quantum MACs. Similar to our consideration of $(N,K)$-local classical MACs, we can relax the locality constraint in $N$-local quantum MACs by allowing $K$ parties to collaborate and perform a joint NPE encoding across all their subsystems.  For some collection of parties $S$, we will write $\mc{E}^{(\msf{A}_s)_s}_{(a_s)_s}$ to denote the joint map performed on systems $\bigotimes_{s\in S}\mc{H}^{\msf{A}_s}$ for collective input $(a_s)_s\in \bigtimes_{s\in S}\mc{A}_s$.  Channels that are built in this fashion will be called $(N,K)$-local quantum MACs, denoted as $\mc{Q}_{N,K}$. We will first show that MACs in $\mc{Q}_{N,K}$ can demonstrate higher-order intereference; more specifically, the $I_{2K}$ equality can be violated. We then calculate the dimension of binary $(N,K)$-local quantum MACs $\mathcal{Q}_{N,K}([2]^N;[2])$ and generalize this result to arbitrary inputs and output.
\begin{proposition}
\label{prop:2K-coherence-if-nonzero-off-diag}
For $K\le \lfloor\frac{N}{2}\rfloor$, a state $\rho\in\mc{H}_1^{\bsA}$ can produce a quantum MAC in $\mc{Q}_{N,K}$ that exhibits non-vanishing $2K$-order interference if and only if it has some off-diagonal term $\rho_{ij}\ne 0$ .
\end{proposition}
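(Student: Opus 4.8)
The plan is to prove both directions by ``doubling'' the two-party analysis of Proposition~\ref{Prop:N-Local-quantum-I2}. A convenient bookkeeping device throughout is that, for any set $S$ of $2K$ parties (these exist because $K\le\lfloor N/2\rfloor$) and any fixed inputs for the remaining parties, the quantity $I_{2K}$ of the induced $2K$-party sub-MAC equals $2^{2K}$ times the coefficient of the full monomial $\prod_{i\in S}(-1)^{a_i}$ in the multilinear (Fourier) expansion of $p(0\,|\,\cdot)$; this is just orthogonality of the characters of $\{0,1\}^{2K}$.

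For the ``only if'' direction I would argue the contrapositive: if $\rho=\sum_m p_m\op{\mbf{e}_m}{\mbf{e}_m}$ is diagonal in the path basis, then every $(N,K)$-local quantum MAC built from $\rho$ has $I_{2K}=0$ for all $2K$-party subsets. The key observation is that a diagonal $\rho$ places the particle on a single definite path $m$ with probability $p_m$; once the $N$ parties are partitioned into coordinating groups of size at most $K$, if $m$ lies in group $G(m)$ then the joint N-P encoding of $G(m)$ is the only one that acts nontrivially (every other group sees only vacuum on its modes and therefore acts trivially). Hence the state reaching $\msf{B}$ conditioned on path $m$, and thus $\tr[\Pi_0\sigma_{\mathbf{a}}]$ conditioned on $m$, depends on $\mathbf{a}$ only through $\mathbf{a}_{G(m)}$. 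Averaging over $m$ (and over any shared randomness that selects the partition) gives $p(0|\mathbf{a})=\sum_m p_m f_m(\mathbf{a}_{G(m)})$ with each $f_m$ depending on at most $K<2K$ of the inputs. Each such $f_m$ therefore omits at least one variable of any $2K$-element set $S$, so the coefficient of $\prod_{i\in S}(-1)^{a_i}$ vanishes, i.e. $I_{2K}=0$.

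For the ``if'' direction, suppose $\rho_{ij}\ne0$ for some $i\ne j$. Since $2K\le N$, choose disjoint groups $S_1\ni i$ and $S_2\ni j$ of size $K$ each and let the remaining $N-2K$ parties block their paths. The $K$ parties of $S_1$ jointly perform the N-P map in which every path of $S_1$ except $i$ is blocked and the phase $e^{i\pi\sum_{s\in S_1}a_s}$ is applied to path $i$ (this is N-P because path blocking and path phases are N-P, and it is a genuine \emph{joint} operation since the stamped phase $\prod_{s\in S_1}(-1)^{a_s}$ uses all $K$ inputs of $S_1$); group $S_2$ does the analogue relative to path $j$. Tracking the received state, its particle sector is supported on $\mathrm{span}\{\ket{\mbf{e}_i},\ket{\mbf{e}_j}\}$ with diagonal entries $\rho_{ii},\rho_{jj}$ and off-diagonal entry $(\prod_{s\in S_1\cup S_2}(-1)^{a_s})\,\rho_{ij}$, the two group encodings each multiplying the coherence by a $K$-fold parity. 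Measuring with $\Pi_0=\op{\psi_{ij}}{\psi_{ij}}$ for the vector $\ket{\psi_{ij}}$ of Proposition~\ref{Prop:N-Local-quantum-I2} gives $p(0|\mathbf{a})=\tfrac12(\rho_{ii}+\rho_{jj})+|\rho_{ij}|\prod_{s\in S_1\cup S_2}(-1)^{a_s}$, so for the parties $S=S_1\cup S_2$ one gets $I_{2K}=2^{2K}|\rho_{ij}|\ne0$.

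I expect the main obstacle to be the ``if'' direction, specifically checking carefully that the joint ``parity-phase-then-project'' map is honestly number-preserving in the sense of Eq.~\eqref{Eq:Kraus-operators-NP} (not something that covertly creates or annihilates particle number), and tracking how $\rho_{ij}$ transforms through the two group encodings — this is exactly where the doubling $K\to 2K$ happens, since each group imprints a $K$-fold parity on the coherence rather than a single phase. The conceptual heart is recognizing that a coordinating group of $K$ parties acting on one delocalized particle behaves, as far as this interference functional is concerned, like a single party whose effective binary input is the parity of the group's inputs; with that identification the statement reduces to a relabeled instance of the $K=1$ case already established in Proposition~\ref{Prop:N-Local-quantum-I2}.
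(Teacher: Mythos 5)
Your proof is correct and follows essentially the same route as the paper: the "if" direction uses the same parity-conditioned $\{0,\pi\}$ phase encoding by two disjoint $K$-groups containing $i$ and $j$ together with the projector onto $\ket{\psi_{ij}}$ from Proposition \ref{Prop:N-Local-quantum-I2}, yielding $I_{2K}=2^{2K}|\rho_{ij}|$ (your extra blocking of spectator paths is a harmless variant of the paper's identity encoding, since $\Pi_0$ annihilates those terms anyway). Your "only if" argument via vacuum-fixing of N-P maps and the full-parity Fourier coefficient is just a more explicit version of the paper's one-line observation that diagonal states cannot even violate the $I_{K+1}$ equalities.
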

\par
\begin{proof}
Clearly, for an incoherent state $\rho$ with $\rho_{ij}=0$ for all $i\ne j$, the state is incoherent and it cannot even violate any $I_{K+1}$ equality. Conversely, suppose that $\rho_{ij}\not=0$.  To get a quantum violation for the $I_{2K}$ equality, let us  consider two $K$-collaborating groups $S_i$ and $S_j$ with $i\in S_i$, $j\in S_j$, and $S_i\cap S_j=\emptyset$. Each group performs a $\{0,\pi\}$ phase encoding on path $i$ and $j$ conditioned on the parity of their inputs in $S_i$ and $S_j$ respectively, i.e. $\mc{E}^{(\msf{A}_s)_{s\in S_i}}_{(a_s)_{s\in S_i}}(X)=\bigotimes_{s\in S_i\setminus \{i\}}\text{id}^{\msf{A}_s}\otimes (\sigma_z^{\oplus_{s\in S}a_s})^{\msf{A}_i}(X)(\sigma_z^{\oplus_{s\in S}a_s})^{\msf{A}_i}$ where $\sigma_z=\op{0}{0}+e^{i\pi}\op{1}{1}$. The decoder then measures with POVM $\left\{\Pi_0=\ket{\psi_{ij}}\bra{\psi_{ij}}^{\bsA},\;\Pi_1=\mbb{I}-\Pi_0\right\}$, where $\displaystyle\ket{\psi_{ij}} = \frac{1}{\sqrt{2}}\left(\ket{\mbf{e}_i}+\frac{\rho_{ij}^*}{|\rho_{ij}|}\ket{\mbf{e}_j}\right)$. With this scheme, we can achieve $I^{(S_i\cup S_j)}_{2K}=2^{2K}|\rho_{ij}|$.
\end{proof}
\begin{remark}
Similarly, for any $k\le 2K$, we can obtain nonzero $k^{\text{th}}$-order interference $I^{(S_i\cup S_j)}_{k}(\rho)=2^{k}|\rho_{ij}|$ by choosing disjoint $S_i,S_j\in \{1,\cdots,N\}$ such that $|S_i|+|S_j|=k$. In fact, for each $k\le 2K$, we can achieve maximal $k^{\text{th}}$-order interference $I^{(S_i\cup S_j)}_{k}(\rho)=2^{k-1}$ by a maximally coherence state on parties $\msf{A}_i$ and $\msf{A_j}$ with $|\rho_{ij}|=1/2$. However, for a general quantum state $\rho^{\bsA}$ shared between $N$ senders, it remains open what strategy extracts maximal higher-order interference.
\end{remark}
\begin{comment}
Note that the strategy above is not optimal. For maximally coherent state shared between $2K$ senders in $S$, we can divide them into two groups of $K$ senders and let them perform a joint $\{0,\pi\}$ phase encoding on all their senders according to the parity of their inputs and then apply a projective measurement onto the maximally coherent state, at the end we can obtain a quantum violation $I_{2K}^{(S)}=2^{2K-1}$.
\end{comment}

In Theorem \ref{thm:Rank Ck}, we demonstrated that $\mc{C}_{N,K}$ violate $I_K$ equality but has vanishing $(K+1)$-order interference, thus has at most $K$-order interference, analogously, we will show here that $I_{2K+1}$ can not be violated by any quantum MAC in $\mc{Q}_{N,K}$.
\begin{proposition}
\label{prop:QN,K 2K+1}
A quantum MAC in $\mc{Q}_{N,K}$ can have at most $2K$-order interference. 
\end{proposition}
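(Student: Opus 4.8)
The plan is to prove the stronger statement that $I^{(S)}_{2K+1}(\mbf{p}_{B|\bA})=0$ for every $\mbf{p}_{B|\bA}\in\mc{Q}_{N,K}$ and every $S\subseteq\{1,\dots,N\}$ with $|S|=2K+1$ (if $2K+1>N$ there is nothing to prove). Once this is in hand, the claim that MACs in $\mc{Q}_{N,K}$ carry at most $2K$-order interference follows from the inclusions $\mf{I}_{N,2K}\subseteq\mf{I}_{N,2K+1}\subseteq\cdots$ recorded in Section~\ref{Sect:N,K binary}: any $(2K+1+\ell)$-order interference term can be written as an alternating sum of $(2K+1)$-order terms over the inputs of the $\ell$ extra parties, hence it vanishes too.

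To prove the vanishing of $I^{(S)}_{2K+1}$, I would fix an arbitrary strategy realizing a MAC in $\mc{Q}_{N,K}$: a single-particle state $\rho\in\mc{H}_1^{\bsA}$, a partition of the parties into collaborating groups $T_1,\dots,T_m$ with $|T_\alpha|\le K$ (non-collaborating parties being singleton groups), joint N-P encodings $\mc{E}^{T_\alpha}_{(a_s)_{s\in T_\alpha}}$, and a decoding POVM $\{\Pi_0,\Pi_1\}$. Because $I^{(S)}_{2K+1}$ is, for each fixed assignment of inputs to the parties outside $S$, a linear functional of the transition probabilities, treating one such strategy suffices and any mixing by shared randomness is absorbed by linearity. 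For a group meeting $S$, i.e. $T_\alpha\cap S\ne\emptyset$, introduce the linear map
\[
D_{T_\alpha}:=\sum_{(a_s)_{s\in T_\alpha\cap S}}\left(\prod_{s\in T_\alpha\cap S}(-1)^{a_s}\right)\mc{E}^{T_\alpha}_{(a_s)_{s\in T_\alpha}},
\]
where each $a_s$ ranges over $\{0,1\}$ and the inputs of $T_\alpha\setminus S$ are frozen at the values appearing in $I^{(S)}_{2K+1}$; for groups disjoint from $S$ put $D_{T_\alpha}:=\mc{E}^{T_\alpha}_{(a_s)}$ at the frozen inputs. Since the sets $T_\alpha\cap S$ partition $S$, distributing the alternating sum through $p(0|a_1\cdots a_N)=\tr[\Pi_0(\bigotimes_\alpha\mc{E}^{T_\alpha}_{(a_s)})(\rho)]$ gives
\[
\sum_{(a_s)_{s\in S}}\left(\prod_{s\in S}(-1)^{a_s}\right)p(0|a_1\cdots a_N)=\tr\!\left[\Pi_0\left(\bigotimes_\alpha D_{T_\alpha}\right)(\rho)\right],
\]
so everything reduces to showing $\left(\bigotimes_\alpha D_{T_\alpha}\right)(\rho)=0$.

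The only input from the single-particle, number-preserving structure I would use is that an N-P channel fixes the vacuum, $\mc{E}^{T}(\op{0}{0}^{T})=\op{0}{0}^{T}$ (a number-preserving dilation acts as a phase on the one-dimensional zero-particle sector, and convex combinations inherit this). Hence for every group with $T_\alpha\cap S\ne\emptyset$,
\[
D_{T_\alpha}\!\left(\op{0}{0}^{T_\alpha}\right)=\left(\sum_{(a_s)_{s\in T_\alpha\cap S}}\prod_{s\in T_\alpha\cap S}(-1)^{a_s}\right)\op{0}{0}^{T_\alpha}=0,
\]
an alternating sum over a nonempty index set. I would then expand $\rho=\sum_{i,j}\rho_{ij}\op{\mbf{e}_i}{\mbf{e}_j}$ in the single-particle basis and apply $\bigotimes_\alpha D_{T_\alpha}$ termwise. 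Writing $\alpha(i)$ for the group containing path $i$, the operator $\op{\mbf{e}_i}{\mbf{e}_j}$, in the factorization over the groups, equals $\op{0}{0}^{T_\gamma}$ on every group $T_\gamma$ with $\gamma\notin\{\alpha(i),\alpha(j)\}$. Since $|T_{\alpha(i)}\cup T_{\alpha(j)}|\le 2K<2K+1=|S|$, the set $S$ cannot be contained in $T_{\alpha(i)}\cup T_{\alpha(j)}$, so some group $T_\gamma$ with $\gamma\notin\{\alpha(i),\alpha(j)\}$ meets $S$; for that group $D_{T_\gamma}(\op{0}{0}^{T_\gamma})=0$, which annihilates the whole tensor product. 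As this holds for every $i,j$, we get $\left(\bigotimes_\alpha D_{T_\alpha}\right)(\rho)=0$, hence $I^{(S)}_{2K+1}=0$ irrespective of $\Pi_0$ and of the inputs chosen by the parties outside $S$.

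I expect the only step requiring genuine care to be the bookkeeping that identifies, for each term $\op{\mbf{e}_i}{\mbf{e}_j}$, exactly which tensor factors carry the vacuum and verifies that each $D_{T_\alpha}$ acts on the correct subsystems; once that is set up, the vacuum-annihilation identity and the count $|T_{\alpha(i)}\cup T_{\alpha(j)}|\le 2K<|S|$ finish the argument immediately. It is worth noting that the bound is tight and dovetails with Proposition~\ref{prop:2K-coherence-if-nonzero-off-diag}: when $|S|=2K$ one may spread a nonzero coherence $\rho_{ij}$ across two groups of size exactly $K$, so that no third group is forced to intersect $S$ and the cancellation mechanism above never triggers, leaving room for $I_{2K}\ne 0$.
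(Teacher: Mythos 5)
Your proof is correct and follows essentially the same route as the paper's: expand $\rho$ into coherence terms $\rho_{ij}\op{\mbf{e}_i}{\mbf{e}_j}$, use that N-P encodings act trivially on the vacuum factors of all groups not containing $i$ or $j$, and conclude that each term depends on at most $|T_{\alpha(i)}\cup T_{\alpha(j)}|\le 2K$ of the $2K+1$ alternating inputs, so the alternating sum annihilates it. Your difference-operator bookkeeping $D_{T_\alpha}$ and the explicit treatment of shared randomness by linearity are just a more explicit packaging of the paper's argument, not a different method.
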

\begin{proof}
As discussed in Section \ref{sec:quantum violation}, MACs in $\mc{Q}_N$ can exhibit second order interference, whereas they never violate the $I_3$ equality. Now, we will to generalize this result and demonstrate that any given MAC $\mbf{p}_{B|\bA}\in \mc{Q}_{N,K}$ always satisfies $I_{2K+1}=0$. For a general quantum state $\rho=\sum_{ij}\rho_{ij}\op{\mbf{e}_i}{\mbf{e}_j} \in \mc{H}^{\bsA}$, we have
\begin{align}
\sigma_{a_1\cdots a_N}&=\mc{E}^{(\msf{A}_s)_{s\in S_1}}_{(a_s)_{s\in S_1}}\otimes\cdots\otimes\mc{E}^{(\msf{A}_s)_{s\in S_G}}_{(a_s)_{s\in S_G}}[\rho]\notag\\&=\sum_{ij}\mc{E}^{(\msf{A}_s)_{s\in S_I}}_{(a_s)_{s\in S_I}}\otimes \mc{E}^{(\msf{A}_s)_{s\in S_J}}_{(a_s)_{s\in S_J}}[\rho_{ij}\op{\mbf{e}_i}{\mbf{e}_j}], 
\end{align}
where we have $G$ different groups with $|S_i|\le K$, $S_i\cap S_j=\emptyset$ for any $i,j\in \{1,\cdots,G\}$ ($i\ne j$) and $\bigcup_{i=1}^G S_i=\{1,\cdots,N\}$. To get the last equality, we assume $i\in S_I$ and $j\in S_J$, and thus only $\mc{E}^{(\msf{A}_s)_{s\in S_I}}_{(a_s)_{s\in S_I}}$ and $\mc{E}^{(\msf{A}_s)_{s\in S_J}}_{(a_s)_{s\in S_J}}$ act non-trivially on the operator $\rho_{ij}\op{\mbf{e}_i}{\mbf{e}_j}$ since these maps are NPE operations; i.e.  $\mc{E}^{(\msf{A}_s)_{s\in S_K}}_{(a_s)_{s\in S_K}}[\op{\mbf{e}_i}{\mbf{e}_j}]=\op{\mbf{e}_i}{\mbf{e}_j}$ for any $S_I\cap S_K=\emptyset$ and $S_
J\cap S_K=\emptyset$. Hence, the $I^{(S)}_{2K+1}(\rho)$ quantity for a given $S$ with $|S|=2K+1$ becomes
% \begin{align}
% I^{(S)}_{2K+1}(\rho)=\tr{\left[\Pi_0 \left(\sum_{\substack{a_s\in\{0,1\}\\ s\in S}} (-1)^{\oplus_{i\in S}a_i}\sum_{ij}\mc{E}^{(\msf{A}_s)_{s\in S_I}}_{(a_s)_{s\in S_I}}\otimes \mc{E}^{(\msf{A}_s)_{s\in S_J}}_{(a_s)_{s\in S_J}}[\rho_{ij}\op{\mbf{e}_i}{\mbf{e}_j})\right)\right]}=0,
% \end{align}
\begin{align}
I^{(S)}_{2K+1}(\rho)=\tr{[\Pi_0\sum_{\mathclap{\substack{a_s\in\{0,1\}\\ s\in S}}}(-1)^{\oplus_{i\in S}a_i}\sigma_{a_1\cdots a_N}]}=0, 
\end{align}
where the final equality holds because each map on $\rho_{ij}\op{\mbf{e}_i}{\mbf{e}_j}$ depends on at most $|S_I\cup S_J|\le 2K$ inputs, but the summation is over $2K+1$ parameters. 
\end{proof}
This observation is intimately connected to the conclusion drawn in \cite{Horvat-2020a}, where the authors noted that classical and quantum theories exhibit different levels of interference depending on the number of particles used in the experiment. Here we observe a similar result except in a different operational setting: a single particle is still used, but a different numbers of parties are allowed to collaborate.\par
Finally in this section, we study the dimension of the set $\mc{Q}_{N,K}$. For simplicity, here we restrict our attention to the binary channel and defer the proof of the more general $\mc{Q}_{N,K}(\bcA;\mc{B})$ to Proposition S4 in \cite{supp}.
\begin{proposition}\label{prop:Rank Qk}
$\dim \mathcal{Q}_{N,K}([2]^N;[2]) = \sum_{k=0}^{2K} \binom{N}{k}$ for $K\le \lfloor\frac{N}{2}\rfloor$, $\dim \mathcal{Q}_{N,K}([2]^N;[2]) = 2^N$ for $\lfloor\frac{N}{2}\rfloor<K\le N$
\end{proposition}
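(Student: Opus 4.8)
The plan is to sandwich $\dim\mc{Q}_{N,K}([2]^N;[2])$ between matching bounds, working in the interference coordinates $\{q(S)\}_{S\subseteq\{1,\dots,N\}}$ from the proof of Theorem~\ref{thm:Rank Ck}, which form a linear coordinate system on the $2^N$-dimensional affine space of binary-output MACs on $[2]^N$. For the upper bound, I use that every $\mbf{p}_{B|\bA}\in\mc{Q}_{N,K}$ has $q(S)=0$ whenever $|S|>2K$: this is the content of (the proof of) Proposition~\ref{prop:QN,K 2K+1}, whose argument applies verbatim to any $|S|>2K$ since, for any admissible partition into groups of size at most $K$, each operator $\rho_{ij}\op{\mbf{e}_i}{\mbf{e}_j}$ is acted on non-trivially only by the (at most two) group encodings containing $i$ and $j$, hence depends on at most $2K$ of the inputs being summed over, so the alternating sum over more than $2K$ inputs vanishes. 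Because the $q(S)$ are $2^N$ linearly independent functionals, these $\sum_{k>2K}\binom{N}{k}$ vanishing conditions confine $\mc{Q}_{N,K}$ to an affine subspace of dimension $2^N-\sum_{k>2K}\binom{N}{k}=\sum_{k=0}^{\min(2K,N)}\binom{N}{k}$.

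For the lower bound I then produce $\sum_{k=0}^{\min(2K,N)}\binom{N}{k}+1$ affinely independent points of $\mc{Q}_{N,K}$. For the low-order ones, take the all-zero MAC together with the MACs $\mbf{p}^{(T)}_{B|\bA}$ of Theorem~\ref{thm:Rank Ck} for $0\le|T|\le K$; these lie in $\mc{C}_{N,K}\subseteq\mc{Q}_{N,K}$ and have interference-coordinate vectors $\mbf{0}$ and the standard unit vector indexed by $T$, respectively. For each $S$ with $K<|S|\le\min(2K,N)$, add a quantum point $\mbf{q}^{(S)}_{B|\bA}$, built essentially by the strategy of Proposition~\ref{prop:2K-coherence-if-nonzero-off-diag}: write $S=S_a\sqcup S_b$ with $1\le|S_a|,|S_b|\le K$ (possible since $K<|S|\le 2K$), use the partition $\{S_a,S_b\}\cup\{\{j\}:j\notin S\}$, prepare the particle in $\tfrac{1}{\sqrt{2}}\bigl(\ket{\mbf{e}_{i_a}}+\ket{\mbf{e}_{i_b}}\bigr)$ for fixed $i_a\in S_a$ and $i_b\in S_b$, have group $S_a$ (resp.\ $S_b$) apply a $\{0,\pi\}$ phase on mode $i_a$ (resp.\ $i_b$) conditioned on the parity of its own inputs while all other parties act trivially, and decode with $\{\op{\psi}{\psi},\,\mbb{I}-\op{\psi}{\psi}\}$ for $\ket{\psi}=\tfrac{1}{\sqrt{2}}\bigl(\ket{\mbf{e}_{i_a}}+\ket{\mbf{e}_{i_b}}\bigr)$. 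A short Born-rule computation gives $p(0|a_1\cdots a_N)=\tfrac{1}{2}\bigl(1+(-1)^{\bigoplus_{s\in S}a_s}\bigr)$, whose interference coordinates are $q(\emptyset)=1$, $q(T)=(-1)^{|T|}2^{|T|-1}$ for $\emptyset\ne T\subseteq S$, and $q(T)=0$ otherwise; in particular its coordinate vector is supported on $\{T:T\subseteq S\}$ and has a nonzero entry at $S$.

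After subtracting the all-zero MAC, it remains to see that the unit vectors indexed by $T$ with $|T|\le K$, together with the coordinate vectors of the $\mbf{q}^{(S)}_{B|\bA}$ for $K<|S|\le\min(2K,N)$, are linearly independent. Ordering the $\mbf{q}^{(S)}$ by increasing $|S|$ and eliminating, one clears $\mbf{q}^{(S)}_{B|\bA}$ of its entries at coordinates of size $\le K$ using the unit vectors, and of its entries at proper subsets $S'\subsetneq S$ with $K<|S'|<|S|$ using the already-reduced $\mbf{q}^{(S')}_{B|\bA}$; since $\mbf{q}^{(S)}_{B|\bA}$ has no entries at coordinates of size exceeding $|S|$, what remains is a nonzero multiple of the unit vector at $S$. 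Thus the family reduces to nonzero multiples of distinct standard unit vectors, giving the required affinely independent points and hence $\dim\mc{Q}_{N,K}([2]^N;[2])=\sum_{k=0}^{\min(2K,N)}\binom{N}{k}$, which is $\sum_{k=0}^{2K}\binom{N}{k}$ for $K\le\lfloor N/2\rfloor$ and $2^N$ for $K>\lfloor N/2\rfloor$. The upper bound is immediate from Proposition~\ref{prop:QN,K 2K+1}; the real work is the lower bound, where the decisive fact is the clean support $\{T:T\subseteq S\}$ of the parity MAC's interference coordinates, which makes it triangular against the classical points — without this one would have to untangle overlapping supports. One must also check with some care — especially when $2K$ is close to $N$, where the blocks exhaust $\{1,\dots,N\}$ — that each $\mbf{q}^{(S)}_{B|\bA}$ genuinely fits the $(N,K)$-local template: an admissible partition into blocks of size at most $K$, N-P encodings, and a legitimate POVM.
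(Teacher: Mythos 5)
Your proof is correct and takes essentially the same route as the paper: the upper bound via the vanishing of all interference coordinates $q(S)$ with $|S|>2K$ (Proposition \ref{prop:QN,K 2K+1}), and the lower bound by adjoining to the classical affine points of Theorem \ref{thm:Rank Ck} a parity-encoded single-particle MAC for each $S$ with $K<|S|\le\min(2K,N)$, with affine independence obtained from a triangularity argument in the interference coordinates. Your explicit computation that the parity MAC's coordinates are supported on $\{T\subseteq S\}$ with a nonzero entry at $S$ is just a slightly more detailed version of the paper's observation that $q(S)=2^{|S|-1}$ while $q(S')=0$ for $S'\ne S$ with $|S'|\ge|S|$.
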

\begin{proof}
Let us first consider the case when $K\le\lfloor\frac{N}{2}\rfloor$. The set $\mathcal{Q}_{N,K}([2]^N;[2])$ lives in a $(2\cdot2^N)$-dimensional space, and there are $2^N$ normalization condition for each $\mbf{p}_{B|\bA}\in\mc{Q}_{N}([2]^N;[2])$. Additionally, Proposition \ref{prop:QN,K 2K+1} shows that any $\mbf{p}_{B|\bA} \in \mc{Q}_{N,K}$ is constrained to have vanishing $I_{2K+1}$ and any higher-order interference. This means that there are a total of  $2^N+\sum^{N}_{k=2K+1} \binom{N}{k}$ constraints. Hence, we have
\begin{equation}
\dim\mathcal{Q}_{N,K}([2]^N;[2]) \le \sum_{k=0}^{2K} \binom{N}{k}.
\end{equation}
To lower bound $\dim\mathcal{Q}_{N,K}([2]^N;[2])$, we use a subset of the interference coordinates, originally defined in Eq.~\eqref{eq:full coordinates} by keeping $k\in \{0,\cdots,2K\}$ only:
\begin{widetext}
\begin{align}
\label{eq:I_0-to-I_2K}
\bigcup_{k\in\{0,\cdots,2K\}}\bigcup_{\substack{S\subseteq\{1,\cdots,N\}\\|S|=k}}\left\{q(S)\coloneqq(-1)^k\sum_{\substack{a_s\in\{0,1\}\\s\in S}}\prod_{i\in S}(-1)^{a_i}p(0|a_1 \cdots a_N) \;\bigg|\;\text{$a_j=0$ for $j\not\in S$}\right\}.
\end{align} 
\end{widetext}
We consider two classes of points: \\
(i) Classical MACs: This is the set of $\sum_{k=0}^K{N\choose{k}}+1$ affinely independent points in $\mc{C}_{N,K}$ as we discussed in Theorem \ref{thm:Rank Ck} (Eq. \ref{Eq: affine points}). Since these MACs are in $\mc{C}_{N,K}$, they are also in $\mc{Q}_{N,K}$. \\
(ii) Quantum MACs: we consider quantum state $\ket{\psi}^{\msf{A}_i\msf{A}_j}=\frac{1}{\sqrt{2}}(\ket{\mbf{e}_i}+\ket{\mbf{e}_j})$ (with $|\rho_{ij}|=\frac{1}{2}$) shared between parties $i\in S_i$, $j\in S_j$ and disjoint groups $ S_i\cap S_j=\emptyset$,  then we can construct quantum MAC $\mbf{p}^{(S)}_{B|\bA}$ as in Proposition \ref{prop:2K-coherence-if-nonzero-off-diag} and the remark below. This quantum MAC can demonstrate maximal $k^{\text{th}}$-order interference among and only among senders $S=S_i\cup S_j$ where $|S|=k$. The above strategy applies for different groups $S$ with $K<|S|\le2K$. Clearly, there are in total $\sum_{k=K+1}^{2K}\binom{N}{k}$ such quantum MACs. \par
Points in class (i) do not violate $I_k$ equalities for any $k>K$ hence have $q(S)=0$ for any $|S|>K$, while each point considered in class (ii) violates one and only one $I_k$ equality for integer $k=|S|>K$. Specifically, it has $q(S)=2^{k-1}$ but $q(S')=0$ for any $S'\ne S$ and $|S'|\ge k$ (For any $S'$ we can find one element $s'\in S'$ but $s'\notin S$, and therefore $q(S')$ will vanish since the quantum MAC $\mbf{p}^{(S)}_{B|\bA}$ doesn't depend on input $a_{s'}$). Hence these points form a upper-triangular matrix and thus are affinely independent. e.g. the upper-traingular matrix for $\mc{Q}_{3,1}$ is:
\begin{equation}
\label{eq:Qexample}
\begin{blockarray}{c(rrrrrrrr)}
    q(\emptyset)~~&0&1&0&0&0&1&1&1\\
    q(\{1\})~~&0&0&1&0&0&-1&-1&0\\
    q(\{2\})~~&0&0&0&1&0&-1&0&-1\\
    q(\{3\})~~&0&0&0&0&1&0&-1&-1\\
    q(\{1,2\})~~&0&0&0&0&0&2&0&0\\
    q(\{1,3\})~~&0&0&0&0&0&0&2&0\\
    q(\{2,3\})~~&0&0&0&0&0&0&0&2\\
\end{blockarray},
\end{equation}
where the first five columns correspond to the classical MACs, and the sixth to final columns are $\mbf{p}^{(S)}_{B|\bA}$ introduced above with $|S|=2$.\par
In total then, we obtain a collection of $\sum_{k=0}^{2K}\binom{N}{k}+1$ affinely independent points belonging to $\mc{Q}_{N,K}$. Hence we have:
\begin{equation}
\dim{\mc{Q}_{N,K}}\ge \sum_{k=0}^{2K}\binom{N}{k}.
\end{equation}
\par
In summary, the dimension of the set $\mc{Q}_{N,K}$ is $\sum_{k=0}^{2K}\binom{N}{k}$ for any $1\le K\le \lfloor\frac{N}{2}\rfloor$. Following the same line of reasoning, we can also deduce that $\dim\mc{Q}_{N,K}=\sum_{k=0}^{N}\binom{N}{k}=2^N$ for $\lfloor\frac{N}{2}\rfloor<K\le N$
\end{proof}
\begin{proposition}\label{prop:Rank Qk general}
for $K\le \lfloor\frac{N}{2}\rfloor$ and $|\mc{A}_i|=|\mc{A}|$ , we have
$$\dim \mathcal{Q}_{N,K}(\bs{\mc{A}},\mc{B}) = (|\mc{B}|-1)\sum_{k=0}^{2K}(|\mc{A}|-1)^k\binom{N}{k}.$$
\end{proposition}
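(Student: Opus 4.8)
The plan is to follow the scheme of Proposition~\ref{prop:Rank Qk}, replacing the binary interference coordinates of Eq.~\eqref{eq:full coordinates} by a coordinate system adapted to general $\bs{\mc{A}}$ and $\mc{B}$, exactly as in the (supplementary) proof of Theorem~\ref{thm:Rank Ck general}. First I would fix a reference output and a reference input value for each party, and for every $b$ among the remaining $|\mc{B}|-1$ outputs, every subset $S\subseteq\{1,\dots,N\}$, and every assignment $\vec{\alpha}$ of a non-reference value to each party in $S$, define a linear functional $q_b(S,\vec{\alpha})$ of the $p(b|\mbf{a})$ that isolates the component of $p(b|\cdot)$ depending genuinely on all and only the inputs indexed by $S$ (for $|\mc{A}|=|\mc{B}|=2$ this reduces to Eq.~\eqref{eq:full coordinates}). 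There are $(|\mc{B}|-1)\sum_{k=0}^{N}(|\mc{A}|-1)^{k}\binom{N}{k}=(|\mc{B}|-1)|\mc{A}|^{N}$ such functionals, and together with the $|\mc{A}|^{N}$ normalizations $\sum_{b}p(b|\mbf{a})=1$ they form an invertible change of coordinates on $\mathbb{R}^{|\mc{B}|\cdot|\mc{A}|^{N}}$ (a M\"obius/Fourier-type argument, identical in spirit to the binary case).

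For the upper bound I would extend Proposition~\ref{prop:QN,K 2K+1} to arbitrary inputs and outputs: for any fixed implementation of $\mbf{p}_{B|\bA}\in\mc{Q}_{N,K}(\bs{\mc{A}};\mc{B})$ the encoded state decomposes as in Eq.~\eqref{eq:vanishing-i3}, $\sigma_{\mbf{a}}=\sum_{ij}\mc{E}^{S_I}_{(a_s)_{s\in S_I}}\otimes\mc{E}^{S_J}_{(a_s)_{s\in S_J}}[\rho_{ij}\op{\mbf{e}_i}{\mbf{e}_j}]$ with $i\in S_I$, $j\in S_J$, and each summand depends on at most $|S_I\cup S_J|\le 2K$ of the inputs; tracing against $\Pi_b$ shows each $p(b|\mbf{a})$ is a sum of functions of at most $2K$ inputs, so $q_b(S,\vec{\alpha})=0$ for all $|S|>2K$. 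These, together with the $|\mc{A}|^{N}$ normalizations, are linearly independent in the new coordinates, which yields $\dim\mc{Q}_{N,K}(\bs{\mc{A}};\mc{B})\le(|\mc{B}|-1)\sum_{k=0}^{2K}(|\mc{A}|-1)^{k}\binom{N}{k}$.

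For the matching lower bound I would produce that many affinely independent MACs, grouped by order. Orders $0$ through $K$ come from the $(|\mc{B}|-1)\sum_{k=0}^{K}(|\mc{A}|-1)^{k}\binom{N}{k}+1$ affinely independent points furnished by Theorem~\ref{thm:Rank Ck general} for $\mc{C}_{N,K}(\bs{\mc{A}};\mc{B})\subseteq\mc{Q}_{N,K}(\bs{\mc{A}};\mc{B})$, which vanish on all $q_b(S,\vec{\alpha})$ with $|S|>K$. For each $S$ with $K<|S|=k\le 2K$ (note $k\le 2K\le N$ since $K\le\lfloor N/2\rfloor$) I would adjoin $(|\mc{B}|-1)(|\mc{A}|-1)^{k}$ quantum points: write $S=S_i\sqcup S_j$ with $1\le|S_i|,|S_j|\le K$ (possible since $2\le k\le 2K$), prepare $\tfrac{1}{\sqrt2}(\ket{\mbf{e}_i}+\ket{\mbf{e}_j})$ with $i\in S_i$, $j\in S_j$, let group $S_i$ apply the phase $U(\theta_i)$ on path $i$ with $\theta_i$ an arbitrary function of $(a_s)_{s\in S_i}$ (identity on its other paths) and group $S_j$ apply $U(\theta_j)$ on path $j$, and decode with a POVM supported on $\mathrm{span}\{\ket{\mbf{e}_i},\ket{\mbf{e}_j}\}$. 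Such a MAC depends on no input outside $S$, so all its $q_b(S',\vec{\alpha})$ with $S'\not\subseteq S$ vanish (in particular for every $S'\neq S$ with $|S'|\ge k$), and each $p(b|\mbf{a})$ has the form $A_b+\mathrm{Re}\!\left[e^{i\psi(\mbf{a})}\bra{\mbf{e}_j}\Pi_b\ket{\mbf{e}_i}\right]$ with $\psi=\theta_i-\theta_j$. Expanding $\cos\psi$ and $\sin\psi$ into sums of products of a function of $(a_s)_{s\in S_i}$ with a function of $(a_s)_{s\in S_j}$, using that the genuinely order-$|S_i|$ parts of $\{\sin\theta_i\}$ already span the whole order-$|S_i|$ function space (a short analyticity argument: the order-$|S_i|$ part of $\sin(t\,g)$ is analytic in $t$ with leading term $t\,g$, so $g$ lies in its span), hence their products span the order-$k$ space on $S$, and letting the POVM vary so the weights $(\mathrm{Re}\,\bra{\mbf{e}_j}\Pi_b\ket{\mbf{e}_i})_b$ sweep the $(|\mc{B}|-1)$-dimensional space of vectors summing to zero, this family's order-$k$ coordinates span the block attached to $S$. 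Written in the generalized coordinates, all chosen points then form a matrix that is block-triangular in the blocks $|S|=0,\dots,2K$ with full-rank diagonal blocks; hence they are affinely independent and $\dim\mc{Q}_{N,K}(\bs{\mc{A}};\mc{B})\ge(|\mc{B}|-1)\sum_{k=0}^{2K}(|\mc{A}|-1)^{k}\binom{N}{k}$, closing the argument.

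The main obstacle is the spanning step in the last paragraph: one must check that \emph{physically admissible} ingredients --- unitary single-path phases together with genuine (positive, identity-resolving) POVMs --- actually realize \emph{all} $(|\mc{B}|-1)(|\mc{A}|-1)^{k}$ interference directions attached to each $S$, and that the passive paths inside the two groups and the other groups contribute nothing that would disturb the block-triangular ordering relative to the classical points of Theorem~\ref{thm:Rank Ck general}. Turning the ``$\sin\theta$ realizes every order part'' heuristic into an honest finite-dimensional statement (via differences of the analytic family $\theta\mapsto$ encoded MAC, or a Vandermonde choice of finitely many angles) and bookkeeping the POVM degrees of freedom is where the genuine work lies; the rest is the same affine-independence accounting already carried out for Proposition~\ref{prop:Rank Qk} and Theorem~\ref{thm:Rank Ck general}.
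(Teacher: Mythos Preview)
Your overall scheme---generalized interference coordinates, the $I_{2K+1}=0$ upper bound via the $S_I\cup S_J$ decomposition, classical points for orders $0$ through $K$, quantum points for orders $K+1$ through $2K$, then block-triangular affine independence---matches the paper exactly. The difference is in how you manufacture the quantum points.

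The paper does not use a spanning/analyticity argument. Instead, for each target triple $(b,S,\{\alpha_s\}_s)$ with $K<|S|=k\le 2K$, it writes down one explicit MAC that hits that coordinate and nothing else at order $\ge k$: split $S=S_i\sqcup S_j$, prepare $\tfrac{1}{\sqrt 2}(\ket{\mbf{e}_i}+\ket{\mbf{e}_j})$, and have each group apply a $\{0,\pi\}$ phase on its distinguished path governed by the parity function
\[
P_{S}[(a_s)_{s\in S}]=\sum_{s\in S}\delta_{\alpha_s,a_s}\pmod 2,
\]
then decode with the two-outcome POVM $\Pi_b=\op{\psi}{\psi}$, $\Pi_{|\mc{B}|-1}=\mbb{I}-\op{\psi}{\psi}$. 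This choice yields $q(b,S,\{\alpha_s\}_s)=2^{k-1}$ while $q(b',S',\{\alpha'_s\}_s)=0$ whenever $|S'|\ge k$ and $(b',S',\{\alpha'_s\}_s)\neq(b,S,\{\alpha_s\}_s)$, so the full matrix is upper triangular on the nose. The point is that the indicator $\delta_{\alpha_s,a_s}$ already isolates the non-reference value $\alpha_s$, so the general-$|\mc{A}|$ case reduces to the binary construction of Proposition~\ref{prop:Rank Qk} coordinate by coordinate.

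Your route via arbitrary phases $\theta_i,\theta_j$ and then an analyticity/Vandermonde argument to span each $(|\mc{B}|-1)(|\mc{A}|-1)^k$ block can in principle be completed, but it is doing real work to recover something the explicit parity construction gives for free, and it introduces exactly the obstacle you flag (checking that physically admissible POVMs realize all $|\mc{B}|-1$ output directions, and that the nonlinear $\sin/\cos$ map actually surjects onto the order-$k$ functions). Replacing that step with the parity-based encoding removes the obstacle entirely and makes the block-triangular bookkeeping immediate.
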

The proof is given in Appendix~\ref{sec:A2 dimension}

\begin{remark}
The dimension of $(N,K)$-local quantum MAC $\mc{Q}_{N,K}$ coincides with the dimension of $(N,2K)$-local classical MACs $\mc{C}_{N,2K}$. However, the two sets are distinct, that is, $\mc{Q}_{N,K}\ne \mc{C}_{N,2K}$ in general. For instance, for $N=3$ and $K=1$, the separation can be directly seen from the quantum violation of fingerprinting inequality in Section \ref{sec:quantum violation}. For the simplest case with $N=2$ and $K=1$ however, this separation is less obvious. We detail a proof for $\mc{Q}_{2,1}\ne\mc{C}_{2,2}$ in Appendix~\ref{sec:A3 seperation} and conclude it as the following proposition .
\end{remark}
\begin{proposition}
$\mc{Q}_{2,1}([2]^2;[2])\ne\mc{C}_{2,2}([2]^2;[2])$.
\end{proposition}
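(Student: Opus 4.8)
The plan is to show that the deterministic ``AND'' channel --- defined by $p(0|00)=p(0|01)=p(0|10)=1$ and $p(0|11)=0$ --- lies in $\mc{C}_{2,2}([2]^2;[2])$ but not in $\mc{Q}_{2,1}([2]^2;[2])$. Membership in the former is immediate: by Proposition~\ref{Prop:binary-out}, $\mc{C}_{2,2}([2]^2;[2])=\mc{C}_{2,2}^{(\text{sep})}([2]^2;[2])$, and since the only subset of $\{1,2\}$ of cardinality $2$ is $S=\{1,2\}$ itself, this set is simply \emph{all} binary-output two-sender MACs. In particular $\mc{Q}_{2,1}([2]^2;[2])\subseteq\mc{C}_{2,2}([2]^2;[2])$, so it suffices to prove the AND channel is not in $\mc{Q}_{2,1}([2]^2;[2])$. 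Suppose it were, via a single-particle state $\rho$ on $\text{span}\{\ket{\mbf{e}_1},\ket{\mbf{e}_2}\}$, local N-P encoders $\mc{E}^{\msf{A}_i}_{a_i}$, and a decoding POVM $\{\Pi_0,\Pi_1\}$ on $\text{span}\{\ket{00},\ket{\mbf{e}_1},\ket{\mbf{e}_2}\}$. By Eq.~\eqref{Eq:NP-output-operators}, together with the observation that $\kappa_{a_1,a_2}$ factorises as $\left(\sum_k x^1_{a_1,k}\overline{y^1_{a_1,k}}\right)\left(\sum_{k'}y^2_{a_2,k'}\overline{x^2_{a_2,k'}}\right)=:c^1_{a_1}c^2_{a_2}$ with $|c^i_a|\le\sqrt{\lambda^i_a}$ (Cauchy--Schwarz), the encoded states take the form $\sigma_{a_1a_2}=w_{a_1a_2}\op{00}{00}+\tau_{a_1a_2}$, where $w_{a_1a_2}=1-p_1\lambda^1_{a_1}-p_2\lambda^2_{a_2}$ with $p_i=\rho_{ii}$, and $\tau_{a_1a_2}=\left(\begin{smallmatrix}p_1\lambda^1_{a_1}&\rho_{12}\kappa_{a_1,a_2}\\\overline{\rho_{12}\kappa_{a_1,a_2}}&p_2\lambda^2_{a_2}\end{smallmatrix}\right)$ acts on the qubit span. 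The one elementary tool is that, since $0\le\Pi_0\le\mbb{I}$, the equality $\tr[\Pi_0\sigma]=1$ forces $\Pi_0\sigma=\sigma$ and $\tr[\Pi_0\sigma]=0$ forces $\Pi_0\sigma=0$; applied to the AND pattern this gives $\supp(\sigma_{a_1a_2})\subseteq V_1$ for $(a_1,a_2)\neq(1,1)$ and $\supp(\sigma_{11})\subseteq V_0$, where $V_1$ and $V_0$ are the eigenspaces of $\Pi_0$ for eigenvalues $1$ and $0$, and $V_0\perp V_1$.

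Two reductions then shrink the problem to a transparent special case. First, $p_1,p_2>0$, for otherwise $\rho$ is pure along a single $\ket{\mbf{e}_i}$, every $\sigma_{a_1a_2}$ depends on only one input, and the AND pattern is at once impossible. Second, $w_{11}=0$: if $w_{11}>0$ then $\ket{00}\in\supp(\sigma_{11})$, which by the orthogonality above forces $w_{00}=w_{01}=w_{10}=0$; but each identity $w_{a_1a_2}=0$ forces both relevant $\lambda$'s to equal $1$, so all $\lambda^i_a=1$ and hence $w_{11}=0$, a contradiction. Thus $w_{11}=0$, i.e.\ $\lambda^1_1=\lambda^2_1=1$, and $\sigma_{11}=\tau_{11}$ is a genuine qubit state. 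It cannot have rank $2$ --- that would force $\Pi_0$ to vanish on the whole qubit span, hence $\tau_{00}=\tau_{01}=\tau_{10}=0$ and all $\lambda^i_a=0$, contradicting $\lambda^1_1=1$ --- so $\det\tau_{11}=0$. Since $\det\tau_{11}=p_1p_2-|\rho_{12}|^2|\kappa_{11}|^2$ with $|\kappa_{11}|\le1$ and $|\rho_{12}|^2\le p_1p_2$, this forces $|\rho_{12}|^2=p_1p_2$ (so $\rho$ is \emph{pure}) and $|c^1_1|=|c^2_1|=1$, which by the Cauchy--Schwarz equality condition forces $\mc{E}^{\msf{A}_1}_{1}$ and $\mc{E}^{\msf{A}_2}_{1}$ to be phase unitaries.

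To finish, I would push the reduction through the remaining inputs. With $\rho$ pure and $\mc{E}^{\msf{A}_2}_{1}$ unitary, expanding $\mc{E}^{\msf{A}_1}_{0}$ in Kraus operators shows that all the one-particle vectors contributing to $\tau_{01}$ lie in $V_1\cap\text{span}\{\ket{\mbf{e}_1},\ket{\mbf{e}_2}\}$, which is at most one-dimensional (since $\ket{\chi_{11}}\in V_0$ already sits in the qubit span); hence these vectors are parallel, and this forces $\mc{E}^{\msf{A}_1}_{0}$ to act as a single (pure) N-P map, and symmetrically for $\mc{E}^{\msf{A}_2}_{0}$ via $\tau_{10}$. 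We are then in the pure case, where $\sigma_{a_1a_2}=w_{a_1a_2}\op{00}{00}+\op{\chi_{a_1a_2}}{\chi_{a_1a_2}}$ with $\ket{\chi_{a_1a_2}}=\sqrt{p_1\lambda^1_{a_1}}\ket{\mbf{e}_1}+\sqrt{p_2\lambda^2_{a_2}}\,e^{-i\theta_{a_1a_2}}\ket{\mbf{e}_2}$ and the phase additively separable, $\theta_{a_1a_2}=f(a_1)+g(a_2)$. Now $\ket{\chi_{01}}$ and $\ket{\chi_{10}}$ are nonzero and both lie in the one-dimensional $V_1\cap(\text{qubit span})$, so they are parallel; comparing the moduli of their coefficients forces $\lambda^1_0=\lambda^2_0=1$ and $\theta_{01}=\theta_{10}$. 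Then $\ket{\chi_{00}}\parallel\ket{\chi_{01}}$ with equal $\ket{\mbf{e}_1}$-coefficients gives $\theta_{00}=\theta_{01}$, and separability --- $\theta_{00}+\theta_{11}=\theta_{01}+\theta_{10}$ --- gives $\theta_{11}=\theta_{00}$, so $\ket{\chi_{11}}=\ket{\chi_{00}}$, contradicting $\ket{\chi_{00}}\in V_1$, $\ket{\chi_{11}}\in V_0$, $V_0\perp V_1$. The single degenerate possibility, $\ket{\chi_{00}}=0$ (i.e.\ $\lambda^1_0=\lambda^2_0=0$), is dispatched at once: then $\supp(\sigma_{01})=\text{span}\{\ket{\mbf{e}_2},\ket{00}\}$ and $\supp(\sigma_{10})=\text{span}\{\ket{\mbf{e}_1},\ket{00}\}$, forcing $\Pi_0=\mbb{I}$ and hence $\tr[\Pi_0\sigma_{11}]=1\neq0$. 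This proves the AND channel is not in $\mc{Q}_{2,1}([2]^2;[2])$, establishing the proposition.

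The step I expect to be the main obstacle is precisely the passage from arbitrary mixed states and arbitrary (and possibly input-dependent) mixtures of N-P encoders down to the pure-state, pure-encoder case: since each party's private randomness may depend on its input, the realised MAC is \emph{not} a convex combination of pure-strategy MACs, so this reduction cannot be obtained for free and must instead be forced, one ingredient at a time, from the rigidity of the conditions $\tr[\Pi_0\sigma_{a_1a_2}]\in\{0,1\}$ combined with the particle-number (no vacuum gain) structure of N-P maps --- with the vanishing and degenerate sub-cases requiring the most careful bookkeeping.
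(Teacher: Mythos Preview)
Your argument is correct, and the target channel you chose is the same as the paper's up to relabeling inputs and the output (the paper uses $p(0|a_1a_2)=\delta_{0,a_1}\delta_{0,a_2}$, i.e.\ output $0$ iff both inputs are $0$). The route, however, is quite different.

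The paper exploits one observation you do not: the target channel is \emph{deterministic}, hence an extreme point of the polytope of all binary MACs. Since the quantum-generated MAC is a convex combination over pure-state decompositions of $\rho$, extremality forces $\rho$ to be pure without any computation. From there the paper's argument is very short: write $\rho'=\mc{E}^{\msf{A}_1}_0\otimes\text{id}(\op{\psi}{\psi})=p\op{00}{00}+(1-p)\rho_1$; if $p>0$, the invariance of the vacuum under N-P maps immediately forces $\bra{00}\Pi_0\ket{00}$ to be both $1$ and $0$; if $p=0$ for every single-encoder application, all four $\sigma_{a_1a_2}$ live in the two-dimensional one-particle space, and perfect distinguishability of $\sigma_{00}$ from the other three pins them down to the single orthogonal line, which (because the diagonals of all four $\sigma$'s coincide) collapses all four to the same state.

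Your approach instead works directly with a general $\rho$ and general (mixed) N-P encoders, and \emph{derives} purity of $\rho$ and unitarity of the input-$1$ encoders from the rank-one condition $\det\tau_{11}=0$ together with Cauchy--Schwarz. You then force the input-$0$ encoders to be pure via the one-dimensionality of $V_1$ restricted to the qubit span, and finish with the additive-phase argument. This is sound, and your closing remark is apt: because private encoder randomness can be input-dependent, the MAC is not a mixture over pure strategies, so the reduction really does have to be forced from the rigidity of the $\{0,1\}$-valued probabilities --- which is exactly what you do. The trade-off is length: the paper's vacuum-invariance trick and the upfront extremality reduction together replace most of your case analysis. Had you observed at the outset that the AND channel is extremal (so $\rho$ is pure WLOG), the remainder of your argument would compress substantially.
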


\section{Connections with Multi-Level Coherence }

\label{Sect:Multi-level-coherence}

\subsection{The Channel Discrimination Task and Multi-Level Coherence of Pure States}
In quantum resource theories, channel discrimination tasks have been used as a way to witness different types of resourceful states \cite{Takagi-2019}. Here we would now like to relate our framework to the notion of multi-level coherence, which has recently been studied in the resource theory of coherence \cite{Levi-2014a, Ringbauer-2018a}.  We begin by reviewing some elements of this theory \cite{Aberg-2006a, Baumgratz-2014a}.  For a $d$-dimensional quantum system, one begins by fixing an orthonormal basis $\{\ket{i}\}_{i=1}^d$ referred to as the incoherent basis.  Any state $\rho\in\mc{B}(\mbb{C}^{d})$ that is diagonal in the incoherent basis is called incoherent, and we denote the collection of all such states as $\mc{I}$.  Then all states not belonging to $\mc{I}$, i.e. all states with non-vanishing off-diagonal terms, are considered to be a resource.

In what sense is a state $\rho\not \in\mc{I}$ a resource?  One answer can be found in the context of phase discrimination games \cite{Coles-2016a, Biswas-2017a, Napoli-2016a, Piani-2016a}, a special case of channel discrimination games.  In its most basic form, a phase discrimination game is a two-party communication task in which Alice first encodes one of many possible phases $\{\theta_k\}_k$, with respective probabilities $\{p_k\}_k$, in a $d$-dimensional quantum state $\rho\in\mc{B}(\mbb{C}^d)$ by applying the unitary $U(\theta_k)=\sum_{j=0}^{d-1}e^{ij\theta_k}\op{j}{j}$ {where $\ket{j}$ is the fixed incoherent basis }.  Bob receives the state $\rho_k:=U(\theta_k)\rho U(-\theta_k)$ and attempts to determine the encoded variable $k$. As a figure of merit for how well the state $\rho$ can support the transmission of a phase ensemble $\Theta=\{(p_k,\theta_k)\}_k$, one could consider the maximum average success probability 
\begin{equation}
    p_\Theta^*(\rho)=\max_{\{\Pi_k\}_k}\sum_k p_k\tr\left[\rho_k\Pi_k\right],
\end{equation}
where the maximization is taken over all POVMs $\{\Pi_k\}_k$ on Bob's end.  Notice that if Alice attempts to encode in an incoherent state, the information of $k$ is lost: $U(\theta_k)\delta U(-\theta_k)=\delta$ for all $\theta_k$ and all $\delta\in\mc{I}$.  From this we see
\begin{equation}
    p_\Theta^{*}(\mc{I}):=p_\Theta^{*}(\delta)=\max_{\{\Pi_k\}_k}\sum_kp_k\tr\left[\delta\Pi_k\right]\leq \max_k p_k~\forall \delta\in\mc{I}.
\end{equation}
One of the main results established in Refs. \cite{Napoli-2016a, Piani-2016a} is that for any state $\rho\not\in\mc{I}$, 
\begin{equation}
\label{Eq:phase-robusteness}
    \max_{\Theta=\{(p_k,\theta_k)\}}\frac{p^*_{\Theta}(\rho)}{p^*_\Theta(\mc{I})}>1.
\end{equation}
Thus every coherent state offers an advantage in some phase discrimination game.

The bound of Eq.~\eqref{Eq:phase-robusteness} has been tightened in Ref. \cite{Ringbauer-2018a} by considering multi-level quantum coherence.  To explain this, let us say that the coherence rank of a pure state $\ket{\psi}\in\mbb{C}^d$ is the number of nonzero coherent amplitudes it possesses \cite{Levi-2014a}.  That is, given the expansion $\ket{\psi}=\sum_{i=1}^d c_i\ket{i}$, the coherence rank of $\ket{\psi}$ is $\crk(\ket{\psi})=\left|\{c_i\;|\;c_i\not=0\}\right|$.  
Beyond pure states, the coherence rank of a density matrix $\rho$ can be defined as
\begin{equation}
\crk(\rho)=\min_{\{p_i,\ket{\psi_i}\}}\max_i \crk(\ket{\psi_i}),
\end{equation}
where the minimization is taken over all ensembles $\{p_i,\ket{\psi_i}\}$ such that $\rho=\sum_i p_i\op{\psi_i}{\psi_i}$.  A state is said to have ($K+1$)-level coherence if its coherence rank is $K+1$, and the set of all states with coherence rank no more than $K$ will be denoted by $\mc{I}_{K}$.  Clearly 
\[\mc{I}=\mc{I}_1\subset\mc{I}_2\subset\cdots\subset \mc{I}_{d}=\mc{B}(\mbb{C}^{d}).\]
Consider the special case of noisy maximally coherent states $\rho=(1-\lambda)\mbb{I}/d+\lambda\op{\Psi_d}{\Psi_d}$, {where $\ket{\Psi_d}=1/\sqrt{d}\sum^d_{i=1}\ket{i}$. For the above state with coherence rank $K+1$}, Ref. \cite{Ringbauer-2018a} has strengthened Eq.~\eqref{Eq:phase-robusteness} to read
\begin{equation}
\label{Eq:phase-k-robusteness}
\max_{\Theta=\{(p_k,\theta_k)\}}\frac{p_\Theta^*(\rho)}{p_\Theta^*(\sigma)}>1 \qquad\forall \sigma\in\mc{I}_{K}.
\end{equation}
Whether such an inequality holds for general states of coherence rank $K+1$ remains an open problem.  However in Theorem \ref{thm:pure-N} and the following remark, we show that it is true for pure states after generalizing the phase discrimination game into a channel discrimination game.  That is, if $\op{\psi}{\psi}$ has coherence rank $K+1$, then
\begin{equation}
\label{Eq:channel-k-coherence}
\max_{\mc{E}=\{(p_k,\mc{E}_k)\}}\frac{p_\mc{E}^*(\rho)}{p_\mc{E}^*(\sigma)}>1 \qquad\forall \sigma\in\mc{I}_{K},
\end{equation}
where similarly the maximum average success probability is $ p_\mc{E}^*(\rho)=\max_{\{\Pi_k\}_k}\sum_k p_k\tr\left[\mc{E}_k(\rho)\Pi_k\right]$.  This is a consequence of the following theorem.
    \begin{theorem}
    \label{thm:pure-N}
    For $\ket{\psi}\in\mc{H}^{\msf{A}_1,\cdots,\msf{A}_N}_1$, $\crk(\ket{\psi})=K+1$ if and only if it violates at least one $(N,K)$-party generalized fingerprinting inequality.
    \end{theorem}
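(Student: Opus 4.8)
The plan is to split the biconditional into its two implications, both resting on the Helstrom reformulation already carried out in Section~\ref{sec:quantum violation}. Fix a candidate active set $S$ with $|S|=K+1$ and feed the $N-K-1$ parties outside $S$ the input $0$; write $M_S=\sum_{i\in S}\sigma_{0\cdots1_i\cdots0}-\sigma_{0\cdots0}$, where $\sigma_{\vec a}=\bigotimes_i\mc{E}^{\msf{A}_i}_{a_i}(\op{\psi}{\psi})$ for $N$-local N-P encoders $\mc{E}^{\msf{A}_i}_{a_i}$. Exactly as in the passage around Eqns.~\eqref{eq:Helstrom}--\eqref{eq:gfp}, the left-hand side of the generalized fingerprinting inequality \eqref{eq:fpg} attached to $S$ equals $K+1+\tr[\Pi_0(-M_S)]$ for a two-outcome decoder $\{\Pi_0,\Pi_1\}$, so this inequality is violated for some N-P encoding and decoder if and only if $\norm{M_S}_1>\tr M_S=K$, i.e.\ if and only if $-M_S$ fails to be negative semidefinite for some N-P encoding. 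The whole theorem thus reduces to deciding, purely in terms of $\crk(\ket{\psi})$, when such an encoding exists.

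For the ``only if'' direction I would, assuming $\crk(\ket{\psi})=K+1$, relabel parties so that the support of $\ket{\psi}$ is $S=\{1,\dots,K+1\}$ and declare these the active parties. Each active $\msf{A}_i$ precomposes its phase encoding with an input-independent amplitude-damping map (still N-P by Eq.~\eqref{Eq:Kraus-qubit-MAC}) that rescales its amplitude to the common magnitude $t:=\min_j|c_j|$; the other parties act trivially. A short computation with the single-ancilla Kraus pair then gives $\sigma_{\vec a}=\mu\,U(\vec a)\op{\Psi_{K+1}}{\Psi_{K+1}}U(\vec a)^\dagger+(1-\mu)\op{0}{0}$, with $\mu=(K+1)t^2\in(0,1]$ and $U(\vec a)$ the product of local phase unitaries, so that $-M_S=\mu(-M_{\mathrm{pure}})\oplus\bigl(-(1-\mu)K\bigr)\op{0}{0}$, where $-M_{\mathrm{pure}}=\op{\Psi_{K+1}}{\Psi_{K+1}}-\sum_{i=1}^{K+1}U_i(\phi_i)\op{\Psi_{K+1}}{\Psi_{K+1}}U_i(-\phi_i)$ is exactly the operator whose trace norm governs the fully local violation \eqref{eq:gfp} for $K+1$ parties studied in Section~\ref{sec:quantum violation}. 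That section exhibits, for every $K+1\ge2$, phases making $\norm{M_{\mathrm{pure}}}_1>K$, equivalently $-M_{\mathrm{pure}}$ positive on some vector; since the vacuum block contributes only the nonpositive eigenvalue $-(1-\mu)K$, the same vector certifies $-M_S\not\preceq0$, and $\ket{\psi}$ violates the $(N,K)$-party inequality with active set $S$.

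For the ``if'' direction I would establish the contrapositive: if $\crk(\ket{\psi})\le K$, with support $T$ (so $|T|\le K$), then from the Kraus form of an N-P qubit map any N-P operation on a path outside $T$ acts as the identity on $\op{\mbf{e}_t}{\mbf{e}_{t'}}$ for all $t,t'\in T$ and on $\op{0}{0}$. Since these operators span the support of $\op{\psi}{\psi}$, the state $\sigma_{a_1\cdots a_N}$ --- hence the MAC $p(b|a_1\cdots a_N)=\tr[\Pi_b\sigma_{a_1\cdots a_N}]$ --- depends on at most the $|T|\le K$ inputs indexed by $T$; padding $T$ to a size-$K$ set displays this binary-output MAC in the separable form \eqref{Eq:N,K-local-binary-out}, so by Proposition~\ref{Prop:binary-out} it lies in $\mc{C}_{N,K}$ and by Proposition~\ref{prop:gfp} it obeys every $(N,K)$-party generalized fingerprinting inequality. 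Combining the two directions, $\ket{\psi}$ violates some $(N,K)$-party inequality precisely when $\crk(\ket{\psi})\ge K+1$; and the value $\crk(\ket{\psi})=K+1$ is pinned down by the further remark that whenever $|S|=K+2$ exceeds the support size some active $\msf{A}_m$ has $c_m=0$, forcing $\sigma_{0\cdots1_m\cdots0}=\sigma_{0\cdots0}$ and hence $-M_S=-\sum_{i\in S\setminus\{m\}}\sigma_{0\cdots1_i\cdots0}\preceq0$, so no $(N,K+1)$-party inequality is violated.

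The main obstacle is the quantitative core of the ``only if'' direction: one genuinely needs $-M_{\mathrm{pure}}$ to have a positive eigenvalue, and the closed-form smallest-eigenvalue estimates of Section~\ref{sec:quantum violation} are specific to the equal-weight state $\ket{\Psi_{K+1}}$ --- indeed for $K\ge3$ the naive choice $\phi_i=\pi$ fails to produce one even there. The pre-equalizing amplitude damping is the device that reduces an arbitrary full-support state to $\ket{\Psi_{K+1}}$; the price is the vacuum weight $1-\mu$, and one must verify the one-line spectral fact that this weight enters $-M_S$ only through the manifestly nonpositive block. A minor point to handle carefully is the boundary: the padding in the ``if'' direction needs $K\le N$ (true since $K=\crk(\ket{\psi})-1\le N-1$), and when $\crk(\ket{\psi})=N$ the clause ``no $(N,K+1)$-party inequality is violated'' is vacuous, as $(N,N)$ falls outside the range of \eqref{eq:fpg}.
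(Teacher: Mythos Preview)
Your proof is correct, and the ``if'' direction (that $\crk\le K$ forces the MAC into $\mc{C}_{N,K}$ and hence no violation) is essentially the paper's argument. The ``only if'' direction, however, takes a genuinely different route. The paper uses \emph{phase encoding only}: writing $\ket{\psi}=C\ket{\Psi_{K+1}}$ with $C=\sqrt{K+1}\,\mathrm{diag}(c_1,\dots,c_{K+1})$ and noting that diagonal unitaries commute with $C$, it obtains $M'=C^\dagger M C$ and then transfers the negative eigenvector $\ket{\alpha}$ of $M$ (guaranteed by Section~\ref{sec:quantum violation}) to a negative direction $C^{-1}\ket{\alpha}$ of $M'$---a Sylvester's-law-of-inertia argument. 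You instead insert an input-independent amplitude-damping layer that physically flattens $\ket{\psi}$ to $\mu\op{\Psi_{K+1}}{\Psi_{K+1}}+(1-\mu)\op{0}{0}$, so that $M_S$ splits as $\mu M_{\mathrm{pure}}\oplus(1-\mu)K\op{0}{0}$; the positive vacuum block cannot cancel the negative eigenvalue of the one-particle block.

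Both reductions land on the same input from Section~\ref{sec:quantum violation}. The paper's approach has the advantage of showing that phase encoding alone suffices (relevant for the phase/channel-discrimination interpretation in the subsequent remark), and it avoids introducing the vacuum sector. Your approach is more operational and makes the block structure explicit, at the cost of using the wider N-P class. Your additional observation---that for $|S|=K+2$ some $c_m=0$ forces $\sigma_{0\cdots1_m\cdots0}=\sigma_{0\cdots0}$ and hence $M_S\succeq 0$---is not in the paper's proof; it cleanly pins down $\crk(\ket\psi)=K+1$ exactly (violates $(N,K)$ but not $(N,K{+}1)$), whereas the paper effectively proves the ``$\ge K+1$'' characterization and leaves the exact value to the closing sentence about varying $K$.
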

\begin{proof}
By definition, an $N$-level state without $(K+1)$-level coherence can be written as $\rho=\sum_{i}p_{i}\op{\psi_i}{\psi_i}$ where $\ket{\psi_i}$ has coherent rank no more than $K$ and thus has support on a subspace $\mc{H}_1^{\msf{A}_{i_1}\cdots\msf{A}_{i_K}}$. States of this form will only admit MACs having transition probabilities
\begin{align}
p(0|a_1,\cdots ,a_N)&=\tr\left\{\Pi_0\left[\mc{E}^{\msf{A}_1}_{a_1}\otimes\cdots\otimes \mc{E}^{\msf{A}_N}_{a_N}\left(\rho^{\bm{\msf{A}}}\right)\right]\right\}\notag\\&=\sum_i p_ig_i(0|a_{i_1},\cdots,a_{i_K}),
\end{align}
where $\tr\left\{\Pi_0\left[\mc{E}^{\msf{A}_{i_1}}_{a_{i_1}}\otimes\cdots\otimes \mc{E}^{\msf{A}_{i_K}}_{a_{i_K}}\left(\op{\psi_i}{\psi_i}\right)\right]\right\}=g_i(0|a_{i_1},\cdots,a_{i_K})$ with $\mc{E}^{\msf{A}_1}_{a_1}\otimes\cdots\otimes \mc{E}^{\msf{A}_N}_{a_N}$ being arbitrary NP encoding operation, and $\{\Pi_0,\mbb{I}-\Pi_0\}$ being any decoding POVM. If the encoding maps are required to be NPE operations, these MACs will belong to $\mc{C}_{N,K}$, and thus they cannot violate any $(N,K)$-generalized fingerprinting inequality. \par
To show that the converse is also true for pure state, let us consider an arbitrary pure state $\ket{\psi}=\sum_{i=1}^N c_i\ket{\mbf{e}_i}$ in $\mc{H}_1^{\bm{\msf{A}}}$ with coherent rank $\crk{\ket{\psi}}=K+1$. Without loss of generality, we assume the state has non-zero amplitude $c_i\ne 0$ for $i\in\{1,\cdots,K+1\}$ and consider the $(N,K)$-party generalized fingerprinting inequality among the first $K+1$ parties of the form of Eq.~(\ref{eq:fpg}). Since $\ket{\psi}$ only has support on a subspace $\mc{H}_1^{{\msf{A}_1}\cdots{\msf{A}_{K+1}}}$, we can treat the $(N,K)$-party generalized fingerprinting inequality as a $(K+1,K)$-party generalized fingerprinting inequality among the first $K+1$ parties with all the other parties having fixed inputs. And then we will show the violation of this inequality based on the violation we obtained for maximally coherent state in section \ref{sec:quantum violation}\par 
To begin with, let us rewrite the quantum state $\ket{\psi}=\sum_{i=1}^{K+1}c_i\ket{\mbf{e}_i}$ in terms of the $(K+1)$-party maximally coherent state  $\ket{\Psi_{K+1}}$ as: $\ket{\psi}=C\ket{\Psi_{K+1}}$, where $C$ is the coefficient matrix $C=\sqrt{K+1}\text{diag}[c_1,\dots,c_{K+1}]$. Since the map $\mc{E}_{a_i}^{\msf{A}_i}$ we used in defining $M$ in Eq. \eqref{Eq: M} consists of diagonal unitaries, we can define a similar matrix $M(\op{\psi}{\psi})$ for any pure state $\op{\psi}{\psi}$ as:
\begin{align}
M(\op{\psi}{\psi})&=M(C^{\dagger}\op{\Psi_{K+1}
}{\Psi_{K+1}}C)\notag\\&=C^{\dagger}M(\op{\Psi_{K+1}
}{\Psi_{K+1}})C=C^{\dagger}MC. 
\end{align}
In Section \ref{sec:quantum violation}, we have shown that $\tr{M}=K$ but $\norm{M}_1>K$ for some specific encoding strategies, which implies that there exists some normalized vector $\ket{\alpha}=\sum_{i=1}^{K+1} \alpha_i\ket{\mbf{e}_i}$ such that $\bra{\alpha}M\ket{\alpha}<0$. Based on this result, for an pure state with $(K+1)$-coherent rank, we can define a normalized vector $\ket{\alpha'}=\frac{1}{\sqrt{N_C}}C^{-1}\ket{\alpha}$, where $N_C$ is the normalization factor. Hence we can easily observe that:
\begin{align}
\bra{\alpha'}M(\op{\psi}{\psi})\ket{\alpha'}&=\frac{1}{N_C}\bra{\alpha}C^{\dagger-1}C^{\dagger}MCC^{-1}\ket{\alpha}\notag\\&=\frac{1}{N_C}\bra{\alpha}M\ket{\alpha}<0. 
\end{align}
Thus,  $M(\op{\psi}{\psi})$ is not positive semi-definite. Since $\tr{M(\op{\psi}{\psi})}=K$ still holds, we have $\norm{M(\op{\psi}{\psi})}_1>K$, which implies quantum violation $\delta>0$ in Eq.~\eqref{Eq: Violation}. \par 
To summarize, by using the same encoding strategy we have found for the maximally coherent state $\ket{\Psi_{K+1}}$ in section \ref{sec:quantum violation}, we can always obtain a quantum violation of a $(N,K)$-party generalized fingerprinting inequality for any pure state as long as it has $(K+1)$-level coherence. We can apply the same strategy above for any $N-1\ge K\ge 1$, the violation of different $(N,K)$-party generalized fingerprinting inequality will then herald the existence of different level of coherence for a given pure state. 
\end{proof}
\begin{remark}
If $\rho=\op{\psi}{\psi}$ is a pure state with $\crk(\rho)=K+1$, we can take $\mc{E}=\{(p_k,\mc{E}_k)\}_{k=0}^1$ being the channel ensemble with $p_0=\frac{1}{K+1}$,  $p_1=\frac{K}{K+1}$ and $\mc{E}_k$ given by:
\begin{equation}
\begin{split}
&\mc{E}_0=\mc{E}^{\msf{A}_1}_{0}\otimes\cdots\otimes \mc{E}^{\msf{A}_K}_{0}\\
&\mc{E}_1=\frac{1}{K}\sum_{i=1}^K\mc{E}^{\msf{A}_1}_{0}\otimes\cdots\otimes\mc{E}^{\msf{A}_i}_{1}\otimes\cdots\otimes \mc{E}^{\msf{A}_K}_{0},
\label{eq:channel ensemble}
\end{split}
\end{equation}
where $\mc{E}^{\msf{A}}_i$ can be any NPE operation. The maximum average success probability will then be expressed as:
\begin{align}
p_\mc{E}^*(\sigma)=\max_{\{\Pi_k\}_k}\sum_k p_k\tr\left[\mc{E}_k(\sigma)\Pi_k\right]=\max[I_{\text{F}}]
\end{align}
with: \[I_{\text{F}}=\frac{1}{K+1}\left(p(0|\overbrace{0\cdots0}^K)+\sum_{i=1 }^Kp(1|\overbrace{0\cdots,1_i,\cdots0}^K)\right) \]
From Theorem \ref{thm:pure-N}, we have that 
\begin{equation}
p_\mc{E}^*(\rho)>p_\mc{E}^*(\sigma)=\frac{K}{K+1}\quad \forall \sigma \in\mc{I}_K
\end{equation}
\end{remark}
An important consequence of this corollary is that a state must have $(K+1)$-level coherence whenever it can win with probability greater than $\frac{K}{K+1}$ the channel discrimination game given by the channel ensemble in Eq.~\eqref{eq:channel ensemble}. There is no information one needs to assume about the measurement device \textit{a priori} nor the channels themselves besides their inability to increase particle number.  Hence we conclude that the channel discrimination task here, or the $(N,K)$-party generalized fingerprinting inequality Eq.~\eqref{eq:fpg}, is a semi-device-independent witness of multilevel
coherence, and it is capable of measuring all pure multi-level coherent states. {The witness is device-independent on the decoder part, but still have some constrains on the encoder, i.e, local and NPE operation}\par

\subsection{Three-Level Coherence Witness}

For $K+1=3$, we have a stronger inequality-based criterion for three-level coherence, and the violation of this inequality is both necessary and sufficient for a large class of three-level quantum state.
 \begin{lemma}
\label{lemma:asso}
A quantum state has coherence three-level coherence if and only if $\norm{{\tilde{M}}(\rho)}_1>1$, where ${\tilde{M}}(\rho)$ is the associate comparison matrix of $\rho$:
\end{lemma}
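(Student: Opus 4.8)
The plan is to reduce the statement to a question about positive semidefiniteness of $\tilde M(\rho)$ and then settle that question by convex duality. Write $\rho_{ii}$ and $r_{ij}:=|\rho_{ij}|$ for the diagonal entries and off‑diagonal moduli of $\rho$ in the incoherent basis, so that $\tilde M(\rho)$ is the real symmetric matrix with $\tilde M(\rho)_{ii}=\rho_{ii}$ and $\tilde M(\rho)_{ij}=-r_{ij}$ for $i\ne j$. Since $\tr\tilde M(\rho)=\tr\rho=1$, and for any Hermitian $H$ one has $\norm{H}_1=\sum_\ell|\lambda_\ell|\ge\tr H$ with equality exactly when $H\succeq0$, the hypothesis $\norm{\tilde M(\rho)}_1>1$ is equivalent to $\tilde M(\rho)\not\succeq0$. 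So it suffices to prove that $\rho$ fails to have coherence rank at most two (i.e. $\rho\notin\mc{I}_2$) if and only if $\tilde M(\rho)\not\succeq0$.

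First I would prove the contrapositive of the ``only if'' part: if $\rho\in\mc{I}_2$ then $\tilde M(\rho)\succeq0$. Grouping the pure states in a coherence‑rank‑$\le2$ ensemble for $\rho$ according to their support (and taking spectral decompositions) one writes $\rho=\sum_{e}X_e$, where $e$ ranges over two‑element subsets $\{i,j\}$ of the index set and each $X_e\succeq0$ is supported on $\mathrm{span}\{\ket{i},\ket{j}\}$. For a vector $v$ with nonnegative entries, positivity of $X_{\{i,j\}}$ gives $r_{ij}\le (X_{\{i,j\}})_{ii}^{1/2}(X_{\{i,j\}})_{jj}^{1/2}$, and since $(X_e)_{kk}=0$ whenever $k\notin e$,
\[
v^{T}\tilde M(\rho)\,v=\sum_i\rho_{ii}v_i^2-2\sum_{i<j}r_{ij}v_iv_j\;\ge\;\sum_{e=\{i,j\}}\Bigl((X_e)_{ii}^{1/2}v_i-(X_e)_{jj}^{1/2}v_j\Bigr)^2\;\ge\;0 .
\]
Because $\tilde M(\rho)$ has nonpositive off‑diagonal entries we have $v^{T}\tilde M(\rho)\,v\ge|v|^{T}\tilde M(\rho)\,|v|$ for every real $v$, so nonnegativity on the nonnegative orthant upgrades to $\tilde M(\rho)\succeq0$.

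For the converse I would use a separating hyperplane. The cone $\mathrm{cone}(\mc{I}_2)=\{\sum_e X_e:X_e\succeq0\text{ supported on }e\}$ is convex and closed (each summand is trace‑bounded by the total trace, so limiting sequences have convergent summands), hence if $\rho\notin\mc{I}_2$ there is a Hermitian $W$ with $\tr(W\sigma)\ge0$ for all $\sigma\in\mathrm{cone}(\mc{I}_2)$ but $\tr(W\rho)<0$. Testing $\tr(W\cdot)\ge0$ on rank‑one states supported on each $e$ shows that every $2\times2$ principal submatrix of $W$ is positive semidefinite; in particular $W_{ii}\ge0$ and $|W_{ij}|\le W_{ii}^{1/2}W_{jj}^{1/2}$. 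Hence, with $v_i:=W_{ii}^{1/2}\ge0$,
\[
0>\tr(W\rho)=\sum_i W_{ii}\rho_{ii}+2\,\mathrm{Re}\!\sum_{i<j}W_{ij}\overline{\rho_{ij}}\;\ge\;\sum_i W_{ii}\rho_{ii}-2\sum_{i<j}W_{ii}^{1/2}W_{jj}^{1/2}r_{ij}=v^{T}\tilde M(\rho)\,v ,
\]
which contradicts $\tilde M(\rho)\succeq0$. Therefore $\tilde M(\rho)\succeq0$ forces $\rho\in\mc{I}_2$, and combining the two implications proves the lemma; restricted to pure states this is consistent with the $K=2$ instance of Theorem~\ref{thm:pure-N} and with the channel‑discrimination bound built from the generalized fingerprinting inequality \eqref{eq:fpg}.

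The main obstacle is the converse direction: one must convert the single spectral condition ``$\tilde M(\rho)\succeq0$'' into an explicit, dimension‑independent decomposition of $\rho$ into coherence‑rank‑$\le2$ states. The duality route above does this cleanly, but it hinges on two points that need care: that $\mathrm{cone}(\mc{I}_2)$ is genuinely closed (so that separation produces a witness lying in its dual cone), and that the worst‑case witness can be taken with all off‑diagonal magnitudes saturated and phases aligned against those of $\rho$, which is exactly what collapses an infinite family of linear inequalities down to the quadratic form of $\tilde M(\rho)$. A more hands‑on alternative would be to peel off one rank‑two block per edge and induct, but verifying that positivity of the comparison matrix survives each peeling step is where the bookkeeping becomes delicate, so I would favour the duality argument.
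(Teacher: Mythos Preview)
Your argument is correct and self‑contained, whereas the paper does not actually prove this lemma: it simply cites Theorem~1 of Ringbauer \emph{et al.}\ for the equivalence between three‑level coherence and $\tilde M(\rho)\not\succeq 0$, and tacitly uses $\tr\tilde M(\rho)=1$ to rewrite this as $\norm{\tilde M(\rho)}_1>1$. So the routes are genuinely different. Your forward direction (if $\rho\in\mc{I}_2$ then $\tilde M(\rho)\succeq0$) is a direct edge‑by‑edge estimate that exploits $|\rho_{ij}|=|(X_{\{i,j\}})_{ij}|\le (X_{\{i,j\}})_{ii}^{1/2}(X_{\{i,j\}})_{jj}^{1/2}$, and your converse is a clean convex‑duality argument: separate $\rho$ from the closed cone $\mathrm{cone}(\mc{I}_2)$ by a Hermitian $W$, observe that the dual condition forces every $2\times2$ principal submatrix of $W$ to be PSD, and collapse the resulting inequalities into the single quadratic form $v^T\tilde M(\rho)v$ with $v_i=W_{ii}^{1/2}$. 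What your approach buys is an explicit, dimension‑independent proof that also makes transparent why the comparison matrix is the right object (its off‑diagonal signs are exactly the worst‑case phase alignment of a rank‑two witness against $\rho$); the paper's citation, by contrast, is shorter but opaque. The two points you flag as delicate are both fine: closedness of $\mathrm{cone}(\mc{I}_2)$ follows from $\mc{I}_2$ being the convex hull of a compact set of pure states bounded away from the origin, and the phase/magnitude saturation step is just the two inequalities $\mathrm{Re}(W_{ij}\overline{\rho_{ij}})\ge -|W_{ij}|\,r_{ij}$ and $|W_{ij}|\le W_{ii}^{1/2}W_{jj}^{1/2}$.
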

As showed in Theorem 1 of \cite{Ringbauer-2018a}, a quantum state $\rho$ has three-level coherence if and only if its associated comparison matrix ${\tilde{M}}(\rho)$ is not positive-semidefinite or equivalently $\norm{{\tilde{M}}(\rho)}_1>1$, where $\norm{A}_1$ is the trace norm of A: $\norm{A}_1=\tr{\sqrt{A^{\dagger}A}}$ and the associated comparison matrix is defined as:
\begin{equation}
\label{Eq:Asso Matrix}
{\tilde{M}}(A)_{ij}=\left\{
                \begin{array}{ll}
                  &|A|_{ii}\quad \text{if}~~ i=j \\
                  -&|A|_{ij}\quad \text{if}~~ i\ne j
                \end{array}
              \right.
\end{equation}.  

\begin{proposition}
\label{prop:pure-3}
A three-level quantum state $\rho$ has multi-level coherence if it violates fingerprinting inequality Eq.~\eqref{eq:fpg}. The converse is also true if $U^{\dagger}\rho U$ is a non-negative matrix for some unitary $U$.
\end{proposition}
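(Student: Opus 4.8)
The plan is to obtain the forward implication as a corollary of Theorem~\ref{thm:pure-N} and the converse by combining Lemma~\ref{lemma:asso} with the Helstrom-norm computation of Section~\ref{sec:quantum violation}. For the forward direction, suppose $\rho$ does not have multi-level coherence, i.e.\ $\crk(\rho)\le 2$, and write $\rho=\sum_i p_i\op{\psi_i}{\psi_i}$ with every $\ket{\psi_i}$ supported on some two-dimensional coordinate subspace $\mc H_1^{\msf A_{i_1}\msf A_{i_2}}$. Repeating the argument of the first paragraph of the proof of Theorem~\ref{thm:pure-N} verbatim, any MAC produced from $\rho$ by local N-P encodings has transition probabilities $p(0|a_1\cdots a_N)=\sum_i p_i\,g_i(0|a_{i_1},a_{i_2})$ and hence lies in $\mc C_{N,2}$. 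Since the generalized fingerprinting inequality~\eqref{eq:fpg} with $K=2$ is a valid facet of $\mc C_{N,2}$ by Proposition~\ref{prop:gfp}, it is satisfied by every such MAC, and the contrapositive is exactly the forward claim.

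For the converse, I would first invoke Lemma~\ref{lemma:asso}: $\rho$ has multi-level coherence iff its associated comparison matrix $\wt M(\rho)$ is not positive semidefinite (equivalently $\norm{\wt M(\rho)}_1>1$, using $\tr\wt M(\rho)=1$). On the communication side, restrict to the three parties entering the fingerprinting inequality and let each apply the local phase encoding $\mc E^{\msf A_i}_0(\cdot)=U(\theta_i)(\cdot)U(-\theta_i)$, $\mc E^{\msf A_i}_1(\cdot)=U(\phi_i+\theta_i)(\cdot)U(-\phi_i-\theta_i)$ of Section~\ref{sec:quantum violation}. By Helstrom's theorem, exactly as in Eqs.~\eqref{eq:Helstrom}--\eqref{Eq: Violation}, $\rho$ violates the fingerprinting inequality whenever the $3\times3$ matrix
\begin{equation*}
M'(\phi)=\sum_{k=1}^{3}D_k\,\rho\,D_k^{\dagger}-\rho,\qquad D_k=\mathrm{diag}(1,\ldots,e^{i\phi_k},\ldots,1)\ \text{in the basis }\{\ket{\mbf{e}_i}\}_{i=1}^{3},
\end{equation*}
fails to be positive semidefinite (the global unitary $\mathrm{diag}(e^{i\theta_k})$ having been factored out of $\norm{\cdot}_1$, and $\tr M'(\phi)=2$). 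A direct evaluation gives $M'(\phi)_{ii}=2\rho_{ii}$ and $M'(\phi)_{ij}=(e^{i\phi_i}+e^{-i\phi_j})\rho_{ij}$ for $i\neq j$, so the choice $\phi_1=\phi_2=\phi_3=\pi$ yields $M'=2\bigl(2\,\mathrm{diag}(\rho)-\rho\bigr)$. When $\rho$ has non-negative entries this is exactly $2\wt M(\rho)$, so $\wt M(\rho)\not\ge 0$ forces $M'\not\ge 0$ and $\rho$ violates the fingerprinting inequality.

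It remains to drop the assumption that $\rho$ is entrywise non-negative. Both properties in the statement --- ``having three-level coherence'' and ``violating the fingerprinting inequality by a phase encoding'' --- are invariant under incoherent unitaries: a permutation of the paths merely relabels which fingerprinting inequality is tested, and a diagonal phase unitary commutes with the phase encodings and can be absorbed into the decoding POVM. Hence if the unitary $U$ with $U^{\dagger}\rho U\ge 0$ may be taken incoherent, one replaces $\rho$ by $U^{\dagger}\rho U$ and concludes by the previous paragraph. I expect this last reduction to be the main obstacle: one must argue that a three-level state unitarily equivalent to an entrywise non-negative matrix is in fact \emph{incoherently} equivalent to one --- equivalently, that its gauge-invariant ``triangle phase'' $\arg(\rho_{12}\rho_{23}\rho_{31})$ lies in $\{0,\pi\}$, which is what makes the choice $\phi_i=\pi$ genuinely realize $M'(\phi)=2\wt M(\rho)$ after a diagonal rephasing and thus inherit non-positivity from $\wt M(\rho)$. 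For pure states this phase always vanishes, so the converse there simply recovers Theorem~\ref{thm:pure-N}; for mixed states this is exactly the delicate point the hypothesis on $U$ is meant to control, and handling it --- or, failing that, showing that more general N-P encodings suffice when the triangle phase is nontrivial --- is where the substance of the proof lies.
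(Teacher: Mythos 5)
Your proof is essentially the paper's: the forward direction is obtained exactly as you say, from the first paragraph of the proof of Theorem~\ref{thm:pure-N} (a state with $\crk\le 2$ only generates MACs in $\mc{C}_{3,2}$, hence satisfies the facet), and the converse uses the same $\{0,\pi\}$ phase encoding, the same Helstrom reduction to non-positivity of the $3\times 3$ matrix with entries $2\rho_{ii}$ and $-2\rho_{ij}$, and Lemma~\ref{lemma:asso}, noting that this matrix equals $2\wt{M}(\rho)$ when $\rho$ is entrywise non-negative. The only place you diverge is the final step that you flag as an obstacle: the paper closes it in one line by asserting $2\wt{M}(\rho)=U^{\dagger}M(\rho)U$ and invoking unitary invariance of the trace norm, so that $\norm{M(\rho)}_1=2\norm{\wt{M}(\rho)}_1>2$ whenever $\rho$ has three-level coherence. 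That identity is exactly the reduction you describe: it holds when $U$ is a diagonal-phase unitary (or a permutation thereof), i.e.\ when $\rho$ is \emph{incoherently} equivalent to an entrywise non-negative matrix, and this is clearly the intended reading of the hypothesis. Your worry about a genuinely arbitrary $U$ is a fair criticism of the wording rather than a gap in the mathematics you are expected to supply: for an arbitrary unitary the hypothesis is vacuous (every state is diagonal, hence non-negative, in its eigenbasis), which would contradict the counterexample given in the remark immediately after the proposition; so the paper implicitly restricts to $U$ for which the displayed identity is valid. In short, your argument for the non-negative (incoherently equivalent) case is the complete intended proof, and the ``triangle-phase'' discussion you add goes beyond what the paper establishes but is not needed for the statement as the authors mean it.
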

\begin{proof}
From Theorem \ref{thm:pure-N}, if a state $\rho$ has rank at most 2, it does not violate the fingerprinting inequality. \par
Conversely, consider a state $\rho\in\mc{B}(\mc{H}_1^{\msf{A}_1\msf{A}_2\msf{A}_3})$. As has been discussed in section \ref{sec:quantum violation}, the violation of fingerprinting inequality is equivalent to
$\norm{\mc{E}_{M}(\rho)}_1> 2$. If we take the $\{0,\pi\}$ phase encoding strategy with $\mc{E}_{a}(X)=\sigma_z^a(X)\sigma_z^a$ for $a\in\{0,1\}$ and $\sigma_z=\op{0}{0}+e^{i\pi}\op{1}{1}$, it is easy to see that:
\begin{equation}
{M}(A)_{ij}=\left\{
                \begin{array}{ll}
                  &2A_{ii}\quad \text{if}~~ i=j \\
                  -&2A_{ij}\quad \text{if}~~ i\ne j
                \end{array}
              \right.
              ,
\end{equation}
where $M(A)$ is defined similar as in Eq.~\eqref{Eq: M}.

For a three-level coherent state $\rho$ in $\mc{H}_1^{\msf{A}_1\msf{A}_2\msf{A}_3}$, ${{M}}(\rho)$ has a similar form as the associate comparison matrix ${\tilde{M}}(\rho)$. More precisely, {if there exists some unitary $U$ such that $U^{\dagger}\rho U$ is a non-negative matrix, ${\tilde{M}}(\rho)$ and $M(\rho)$ are only different by a global unitary and a factor of 2, i.e $2{\tilde{M}}(\rho^{\msf{A}_1\msf{A}_2\msf{A}_3})=U^{\dagger}{M}(\rho^{\msf{A}_1\msf{A}_2\msf{A}_3})U$ }, hence from Eq.\eqref{Eq: Violation}:
\begin{equation}
\delta=\frac{1}{2}(4+\norm{{M}(\rho)}_1)=\frac{1}{2}(4+2\norm{{\tilde{M}}(\rho)}_1)>3.
\end{equation}
\end{proof}
\begin{remark}
However, the conclusions drawn in Theorem \ref{thm:pure-N} and Proposition \ref{prop:pure-3} are not true for some three-level coherent mixed states with complex matrix elements. One counterexample can be found based on the following state:
\begin{equation*}
\rho=\begin{pmatrix}
\frac{1}{8}&\frac{1}{12}&\frac{1}{6}\\
\frac{1}{12}&\frac{1}{8}&\frac{i}{8}\\
\frac{1}{6}&-\frac{i}{8}&\frac{3}{4}. 
\end{pmatrix}
\end{equation*}
One can check $\norm{{\tilde{M}}(\rho)}>1$, which indicates that it has multi-level coherence from Lemma \ref{lemma:asso}; however, it can not violate fingerprinting inequality with any phase encoding and decoding strategy. Actually, it can not violate any inequality in $\mc{C}_{3,2}$ polytope (Eqns.~(\ref{eq:fp}), (\ref{Eq:ineq2}) and (\ref{Eq:ineq3})) with phase encoding strategy. Whether there is a quantum violation for a more general encoding strategy is yet unknown. 
\end{remark}
\begin{comment}
\begin{remark}
An $N$-level noisy maximally coherent state $\rho=(1-p)\frac{\mbb{I}}{N}+p\op{\Psi_{N}}{\Psi_N}$ violates the $(N,N-1)$-party generalized fingerprinting inequality and thus possesses $N$-level coherence when
\begin{equation}
p\ge\left\{
            \begin{array}{ll}
                  &1-\frac{2}{N+1}\quad\quad\quad\quad\quad~\quad\quad \text{if}~~ N=2,3 \\
                  &1-\frac{1}{(N-1)(N-2)(N-3)+1}\quad \text{if}~~ N>3
                \end{array}
              \right.
  \end{equation}
The above bound can be obtained by using the maximally violation we obtained in Eq. \eqref{Eq: maximal violation}. By comparison, a necessary and sufficient condition for $N$-level coherence based on the multi-level coherence witnesses given in \cite{Ringbauer-2018a} is 
\begin{equation}
\label{Eq:Ringbauer2}
p\ge 1-\frac{1}{N-1}.
\end{equation}
The two bound coincide when $N=3$.
Remarkably, our bound provides a semi-device independent approach to measuring the robustness of multilevel coherence of the noisy maximally coherent states.
\end{remark}
\end{comment}

\section{Discussion}
The study of multiple-access channels in this paper provides an operational comparison between classical and quantum information processing in the context of multi-party communication. We have focused on a fine-grained analysis of the MACs that can be generated using only a single classical/quantum particle without touching any of its internal degrees of freedom, combined with a restricted type of encoding. The main results we have established focus on characterizing different MACs and showing the separation between classical MACs and quantum MACs in an operational way. Standard quantum fingerprinting can be seen as just one of many different enhancements that emerge when using quantum MACs. We have also analyzed the situation in which the locality constraints on the encoders are partially relaxed and a richer structure of these MACs was revealed.   \par 

%Remarkably, we generalize standard $k^{\text{th}}$-order coherence term as our $I_k$ equalities and also 
In particular, we have identified the generalized fingerprinting inequalities as valid facets for different $(N,K)$-local classical MACs. This provides one route for quantifying the quantum advantage of a single quantum particle in multi-party communication scenarios.  Further, $N$-local quantum MACs can outperform $N$-party classical MACs, even when $N-1$ of the parties are allowed to collaborate. Finally, we have highlighted the connection of our framework to the resource theory of multilevel coherence and provided a semi-device independent approach to witnessing multi-level coherence for a quantum state. \par
Our operational framework for analyzing the multiple-access channel is important, not only for showing the fundamental differences between a classical particle and a quantum particle with spatial superposition, but also for quantum enhanced multi-party communication and potential applications in sensing of complex quantum systems due to the connection to multi-level coherence \cite{Ringbauer-2018a,Giovannetti-2011}.\par 
There are many questions remaining from our investigation.  We have identified the generalized fingerprinting inequality as a valid facet for all $\mc{C}_{N,K}$ polytopes.  It would be interesting to identify other facets, something which is infeasible to do using the standard numerical approach.  A related question is to decide which of the $\mc{C}_{N,K}$ polytopes can be violated by a given $N$-partite quantum state and how large such a violation can be.  For the case of $\mc{Q}_{N,K}$, we showed that the $I_{2K}$ equality can be violated for all non-classical states, yet the maximal violation for a given state is still unknown.  Additionally, in Section \ref{sec:quantum violation} we found a quantum violation of the $(N,N-1)$-party generalized fingerprinting inequality that scales like $N^{-3}$, which is consistent with the previous results of Ref. \cite{Horvat-2019a}.  It would be good to know whether an $O(N^{-3})$ violation is optimal, for not only the generalized fingerprinting inequality but also for all other facets of $\mc{C}_{N,N-1}([2]^N;[2])$.  It would also be interesting to further pursue the connection between general multi-level coherent states and the $(N,K)$-local facet inequalities.  Finally, the current results we have obtained are based on using just a one single particle. {Though, the $C_{N,K}^{(sep)}$ already encapsulate MACs attainable with K classical particles and arbitrary internal degrees of freedom}, a natural next question is how the classical and quantum MACs compare when more than one particle or multiple internal degrees of freedom are utilized. 

\section{Acknowledgments}

This work was supported by NSF Award 1839177.  The authors thank Virginia Lorenz, Paul Kwiat, Andreas Winter and Kai Shinbrough for helpful discussions during the preparation of this manuscript.

\bibliographystyle{plain}

\onecolumn\newpage
\appendix

Here we provide detailed proofs of proposition \ref{Prop:non-convex}, \ref{Prop:non-convex2}, \ref{Prop:non-convex3} and \ref{prop:N,K convexity} for the inclusion relation between different classical MACs in section \ref{sec:A1 separation}; proposition \ref{prop:Rank Qk general} and theorem \ref{thm:Rank Ck general} for the dimension of classical and quantum MACs for arbitary inputs and output in section \ref{sec:A2 dimension}; and proposition \ref{prop: seperation}, conjecture \ref{prop: nonpolytope}  for the separation between the most simple classical MACs $\mc{C}_{2,2}([2]^2,[2])$ and quantum MACs $\mc{Q}_{2,1}([2]^2,[2])$ in section \ref{sec:A3 seperation}.
\section{Separation and convexity of $(N,K)$-local Classical MACs}
\label{sec:A1 separation}
\setcounter{proposition}{3}
\begin{proposition}
For $|\mc{B}|>2$ and $N\geq 2$, the set $\mc{C}_{N}(\bcA,\mc{B})$ is non-convex; hence $\mc{C}_{N}(\bcA,\mc{B})\not=\mc{C}'_{N}(\bcA,\mc{B})$.
\end{proposition}
\begin{proof}
% All extreme points of decomposition Eq.\eqref{Eq:1-N-classical-binary-out} is also an extreme point for decomposition Eq.\eqref{Eq:1-N-classical-binary-out} if taking all arbitrary decoding process $\{d(b|0),d(b|\mbf{e}_i)\}$ into consideration. Therefore, 
It suffices to prove the statement for $\mc{C}_{2}([2]^2;[3])$.  Consider $\mbf{p}'_{B|A_1A_2},\mbf{p}''_{B|A_1A_2}\in \mc{C}_{2}([2]^2;[3])$ having respective coordinates $p'(b|a_1,a_2)=g_1(b|a_1)$, with $g_1(0|0)=1$ and $g_1(1|1)=1$, and $p''(b|a_1,a_2)=g_2(b|a_2)$, with $g_2(0|0)=g_2(1|0)=1/2$ and $g_2(2|1)=1$.  We will show their mixture $\mbf{p}_{B|A_1,A_2}=\lambda\mbf{p}'_{B|A_1,A_2}+(1-\lambda)\mbf{p}''_{B|A_1,A_2}$ does not belong to $\mc{C}_{2}([2]^2;[3])$ for $\lambda\in(0,1)$.  For if it did, then we could write
\begin{align}
\lambda g_1(b|a_1)+(1-\lambda)g_2(b|a_2)=\sum_{i=1}^2p_i\sum_{\mathclap{m=0,\mbf{e}_i}}d(b|m)q_i(m|a_i). 
\end{align}
Define $g_i'(b|a_i)=\sum_{m=0,\mbf{e}_i}d(b|m)q_i(m|a_i)$ \[\lambda g_1(b|a_1)+(1-\lambda)g_2(b|a_2)=p_1 g_1'(b|a_1)+p_2g_2'(b|a_2)
. \]  Then $p(0|1,1)=p(1|0,1)=p(2|a_1,0)=0$ for $a_1=0,1$ implies $g_1'(0|1)=g_2'(0|1)=0$, $g_1'(1|0)=g_2'(1|1)=0$, and $g_1'(2|a_1)=0$ for $a_1=0,1$.  Hence
\begin{align}
p(0|0,1)&=\lambda g_1(0|0)=p_1 g'_1(0|0)+p_2g'_2(0|1)=p_1g'_1(0|0).
\end{align}
Hence we have $g_1'(b|a_1)\propto g_1(b|a_1)$ for all $b,a_1$.  By normalization, they must, in fact be equal.  A similar argument shows $g_2'(b|a_2)=g_2(b|a_2)$.
\begin{equation*}
\begin{split}
g_1(b|a_1)&=\sum_{m=0,\mbf{e}_1}d(b|m)q_1(m|a_1)\\
g_2(b|a_2)&=\sum_{m=0,\mbf{e}_2}d(b|m)q_2(m|a_2).
\end{split}
\end{equation*}
With $g_1$ being a deterministic MAC, we must have that $d(b|0)\in\{0,1\}$ and $0=d(2|0)$.  By considering $1=g_2(2|1)=d(2|0)q_2(0|1)+d(2|\mbf{e}_2)q_2(\mbf{e}_2|1)=d(2|\mbf{e}_2)q_2(\mbf{e}_2|1)$, we then have $d(2|\mbf{e}_2)=1$.  But since $d(b|0)\in\{0,1\}$, it follows that $g_2$ cannot output values $\{0,1,2\}$ each with nonzero probability. As this is a contradiction, we conclude that $\mc{C}_{2}([2]^2;[3])$ is non-convex.  On the other hand, since $\mc{C}'_2([2]^2;[3])$ is convex by the Proposition 5, it follows that $\mc{C}_N(\bcA,\mc{B})\ne\mc{C}'_N(\bcA,\mc{B})$.

\end{proof}
\setcounter{proposition}{5}
\begin{proposition}
\label{Prop:non-convex2}
With $|\mc{B}|>K+1$ and $|\mc{A}_i|>2$ for some party $\msf{A}_i$, the set $\mc{C}_{N,K}'(\bcA,\mc{B})$ is non-convex; hence $\mc{C}_{N,K}'(\bcA,\mc{B})\not=\conv[\mc{C}_{N,K}(\bcA,\mc{B})]$.
\end{proposition}
\begin{proof}
We consider the extreme case of one party with $N=1$ and $K=1$; i.e. one-local MACs $\mc{C}'_{1}([3];[3])$.  Note that in this case we always have $\mc{C}'_{1}(\bcA;\mc{B})=\mc{C}_{1}(\bcA;\mc{B})$.  Let $\mbf{p}'_{B|A}\in\mc{C}'_{1}([3];[3])$ have coordinates $p'(b|0)=\delta_{b,0}$ and $p'(b|1)=p'(b|2)=\delta_{b,2}$, and let $\mbf{p}''_{B|A}\in \mc{C}'_{1}([3];[3])$ have coordinates $p''(b|1)=\delta_{b,1}$ and $p''(b|0)= p''(b|2)=\delta_{b,2}$.  We will show that their mixture $\mbf{p}_{B|A}=\lambda\mbf{p}'_{B|A}+(1-\lambda)\mbf{p}''_{B|A}$ does not belong to $\mc{C}'_{1}([3];[3])$ for any $\lambda\in (0,1)$. For if it does, we could write
\begin{align*}
p(2|2)=\lambda p'(2|2)+(1-\lambda) p''(2|2)=\sum_{m\in\{0,\mbf{e}_1\}}d(2|m)q(m|2)=1.
\end{align*}
If follows $\exists m\in \{0,\mbf{e}_1\}$  such that $d(2|m)=1$.  Without loss of generality, we let $d(2|0)=1$, and so $d(1|0)=d(0|0)=0$.  This means that
\begin{align*}
p(0|a_1)&=d(0|\mbf{e}_1)q(\mbf{e}_1|a_1)\\
p(1|a_1)&=d(1|\mbf{e}_1)q(\mbf{e}_1|a_1).
\end{align*}
However, it easily leads to a contradiction. The mixture $\mbf{p}_{B|A}$ has coordinates $\{p(0|0),p(0|1)\}=\{\lambda,0\}$ which requires $q(\mbf{e}_1|0)\ne 0$ and $q(\mbf{e}_1|1)= 0$.  Likewise, $\mbf{p}_{B|A}$ has coordinates $\{p(1|0),p(1|1)\}=\{0,1-\lambda\}$ which can only be obtained when $q(\mbf{e}_1|0)= 0$ and $q(\mbf{e}_1|1)\ne 0$. Therefore, $\mbf{p}_{B|A}$ cannot belong to $\mc{C}'_{1}([3];[3])$.
\end{proof}
\begin{remark}
In fact, a distinction between $\conv[\mc{C}_N]$ and $\mc{C}^{(\text{sep})}_N$ can also be seen when $|\mc{B}|>2$ and $|\mc{A}_i|>2$.  As a simple example, consider the one-local MACs $\conv[\mc{C}_1([3];[3])]$ and $\mc{C}^{(\text{sep})}_1([3];[3])$.  The latter consists of all channels $[3]\to[3]$ (including the identity); in contrast, the former must be built by the exchange of just a single particle, and therefore all such channels will have a bounded capacity of one bit. 
\end{remark}
\bigskip

For $K>1$, the problem is even more intricate, from proposition \ref{Prop:binary-out}, we know that the four classes of $(N,K)$-local MACs are equivalent when $|\mc{B}|=2$; the counterexample in proposition \ref{Prop:non-convex2} can also be applied to $(N,K)$-local MACs if $|\mc{B}|>\mc{M}_s=K+1$ ($\prod_{i\in S}|\mc{A}_i|>\mc{M}_s$ always holds), thus, we can conclude that $\mc{C}'_{N,K}(\bcA,\mc{B})$, $\conv[\mc{C}_{N,K}(\bcA,\mc{B})]$, and $\mc{C}^{(\text{sep})}_{N,K}(\bcA,\mc{B})$ are distinct when $|\mc{B}|>K+1$. For the remaining cases, their relation are provided by the following two propositions. 
\begin{proposition}
\label{Prop:non-convex3}
For $|\mc{B}|>2$ and $N>K\geq 2$, the set $\mc{C}_{N,K}(\bcA,\mc{B})$ is non-convex; hence $\mc{C}_{N,K}(\bcA,\mc{B})\not=\mc{C}'_{N,K}(\bcA,\mc{B})$.
\end{proposition}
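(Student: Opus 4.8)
The plan is to exhibit, for suitable alphabets, two MACs lying in $\mc{C}_{N,K}(\bcA;\mc{B})$ whose convex mixture falls outside the set. This makes $\mc{C}_{N,K}$ non-convex, and since $\mc{C}_{N,K}\subseteq\mc{C}'_{N,K}$ while $\mc{C}'_{N,K}=\mc{C}^{(\text{sep})}_{N,K}$ is convex for $|\mc{B}|\le K+1$ (Proposition~\ref{prop:N,K convexity}), the same mixture witnesses $\mc{C}_{N,K}\ne\mc{C}'_{N,K}$. If $|\mc{B}|>K+1$ the two witnesses built in Proposition~\ref{Prop:non-convex2} are already deterministic, hence in $\mc{C}_{N,K}$, and their mixture lies outside $\mc{C}'_{N,K}\supseteq\mc{C}_{N,K}$; so only $2<|\mc{B}|\le K+1$ remains, and by restricting to two input values per party, to three of the outputs when $K=2$, and freezing extra parties at input $0$, the matter reduces (by a standard embedding argument) to the base cases $\mc{C}_{3,2}([2]^3;[3])$ and, for $K\ge3$, the analogous construction over $K+1$ distinguished parties.

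For $\mc{C}_{3,2}([2]^3;[3])$, write $a_i+a_j\in\{0,1,2\}$ for ordinary bit addition and $\delta_\ell$ for the point mass on output $\ell$, and set $\mbf{p}(b\,|\,a_1a_2a_3)=\tfrac13(\delta_{b,a_1+a_2}+\delta_{b,a_1+a_3}+\delta_{b,a_2+a_3})$; this depends only on the Hamming weight of $(a_1,a_2,a_3)$, equalling $\delta_0$, $\tfrac13\delta_0+\tfrac23\delta_1$, $\tfrac23\delta_1+\tfrac13\delta_2$, $\delta_2$ for weights $0,1,2,3$. Here $\mbf{p}=\tfrac13\mbf{p}'+\tfrac23\mbf{p}''$ with $\mbf{p}'=\delta_{b,a_1+a_2}\in\mc{C}_{3,2}$ (it depends on only two parties) and $\mbf{p}''=\tfrac12(\delta_{b,a_1+a_3}+\delta_{b,a_2+a_3})\in\mc{C}_{3,2}$ (use subsets $\{1,3\},\{2,3\}$ with the common decoder $d(\cdot|0)=\delta_0$, $d(\cdot|\mbf{e}_1)=d(\cdot|\mbf{e}_2)=\delta_1$, $d(\cdot|\mbf{e}_3)=\delta_2$, routing each subset's output deterministically through the matching symbol).

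The crux is $\mbf{p}\notin\mc{C}_{3,2}$. Suppose $\mbf{p}=\sum_{|S|=2}p_S\sum_{m\in\mc{M}_S}d(b|m)q_S(m|(a_s)_{s\in S})$, and write $g_S:=\sum_m d(\cdot|m)q_S(m|\cdot)$, $d_\sigma:=d(\cdot|\sigma)$. Nonnegativity makes every zero of $\mbf{p}$ kill every term, and the zeros $p(1|000)=p(2|000)=p(0|111)=p(1|111)=0$, $p(2|\mathbf a)=0$ for weight $\le1$, $p(0|\mathbf a)=0$ for weight $\ge 2$ pin each active $g_{\{i,j\}}$ to the values $\delta_0,\delta_1,\delta_1,\delta_2$ at its four inputs, i.e. $g_{\{i,j\}}(b|\cdot)=\delta_{b,a_i+a_j}$, which is surjective onto $\{0,1,2\}$; a short separate check (matching $\mbf{p}$ along the three weight-one inputs) rules out any $p_S=0$. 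Consequently each of $\{d_0,d_{\mbf{e}_1},d_{\mbf{e}_2}\}$, $\{d_0,d_{\mbf{e}_1},d_{\mbf{e}_3}\}$, $\{d_0,d_{\mbf{e}_2},d_{\mbf{e}_3}\}$ must contain the three extreme distributions $\delta_0,\delta_1,\delta_2$ and hence equals $\{\delta_0,\delta_1,\delta_2\}$; comparing these equalities forces $d_{\mbf{e}_1}=d_{\mbf{e}_2}=d_{\mbf{e}_3}$, so $\{d_0,d_{\mbf{e}_1},d_{\mbf{e}_2}\}$ has at most two distinct elements --- a contradiction.

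For general $N>K\ge2$ one runs the same scheme over $K+1$ distinguished parties using the $K$-element subsets $S_j=\{1,\dots,K+1\}\setminus\{j\}$, deterministic building blocks $g_{S_j}(b|\cdot)=\delta_{b,\sum_{i\in S_j}a_i}$ surjective onto $\mc{B}=\{0,\dots,K\}$ (remaining parties frozen at $0$), and $\mbf{p}$ their uniform mixture; the zero-pattern again forces each active $g_{S_j}$ to be this surjection, so each $\{d_0\}\cup\{d_{\mbf{e}_s}:s\in S_j\}$ equals the set of all $K+1$ point masses, and comparing the equalities for $j=2,3$ collapses $d_{\mbf{e}_2},d_{\mbf{e}_3}$ to one distribution, contradicting the equality for $j=1$ (here $K\ge2$ is used). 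The two places that demand care --- and which I expect to be the main obstacle --- are (i) the forcing step: showing the zero-structure of $\mbf{p}$ leaves no separable decomposition other than the balanced one with all $S_j$ active and each $g_{S_j}$ the intended channel (for $|\mc{B}|=K+1$ surjectivity already over-determines the decoder, but in the sub-generic range $3\le|\mc{B}|\le K$ the subsets carry more symbols than outputs and the over-determination has to be transferred to the encoders $q_S$ rather than read off from the decoder alone); and (ii) writing $\mbf{p}$ explicitly as a convex combination of two $(N,K)$-local MACs --- e.g. by isolating a single $g_{S_j}$ and verifying that the residual mixture is still $(N,K)$-local, aggregating subsets in pairs when necessary.
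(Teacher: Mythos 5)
Your core counterexample is essentially the paper's own. The paper mixes three deterministic MACs $g_{12},g_{13},g_{23}$ in $\mc{C}_{3,2}([2]^3;[3])$, each depending on two parties and hitting all three outputs, and derives the contradiction that a single decoder would have to assign four pairwise distinct point masses to the symbols $0,\mbf{e}_1,\mbf{e}_2,\mbf{e}_3$ while $|\mc{B}|=3$; your choice $g_{\{i,j\}}(b|\cdot)=\delta_{b,a_i+a_j}$ and the decoder-pigeonhole finish are the same idea, and your explicit split $\mbf{p}=\tfrac13\mbf{p}'+\tfrac23\mbf{p}''$ with both pieces verified to lie in $\mc{C}_{3,2}$ is correct. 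Where you genuinely differ is the forcing step: the paper asserts in one line that the separable decomposition of the mixture is unique (by extremality of the $g_S$, "as in Proposition \ref{Prop:non-convex}"), whereas you derive the same conclusion directly from the zero pattern of $\mbf{p}$ and rule out inactive subsets by matching the weight-one rows. That is more explicit than the paper's uniqueness claim and is a small improvement.

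Two caveats. First, for $|\mc{B}|>K+1$ you invoke the witnesses of Proposition \ref{Prop:non-convex2}, but those require some $|\mc{A}_i|>2$, which is not among the present hypotheses (all inputs may be binary); this case should instead be routed through your own $(K+1)$-party construction with outputs $\{0,\dots,K\}$, which works verbatim whenever $|\mc{B}|\ge K+1$ and also exhibits a point of $\mc{C}'_{N,K}\setminus\mc{C}_{N,K}$, as needed for the second claim when Proposition \ref{prop:N,K convexity} is unavailable. Second, the open ends you flag yourself — the forcing step for general $N>K$ (active subsets may involve the frozen parties) and especially the sub-generic range $3\le|\mc{B}|\le K$, where the sum construction does not fit into the output alphabet — are real gaps in your write-up; in fairness, the paper's proof does not address them either: it treats only the $(3,2)$, $|\mc{B}|=3$ case and declares the general statement to follow, even though the three-party example does not embed trivially for $K\ge3$ (a single size-$K$ group containing parties $1,2,3$ can simulate it), which is precisely why a $(K+1)$-party construction like yours is needed.
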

\begin{proof}
 
We consider $\mbf{p}'_{B|A_1A_2A_3},\mbf{p}''_{B|A_1A_2A_3}~\text{and}~\mbf{p}'''_{B|A_1A_2A_3}\in\mc{C}_{3,2}([2]^3;[3])$ having coordinates $p'(b|a_1,a_2,a_3)=g_{12}(b|a_1,a_2)$,  $p''(b|a_1,a_2,a_3)=g_{13}(b|a_1,a_3)$ and $p''(b|a_1,a_2,a_3)=g_{23}(b|a_2,a_3)$ with  $g_S(0|0,0)=1$, $g_S(1|0,1)=1$, $g_S(2|1,0)=1$ and $g_S(2|1,1)=1$ for any $S\in\{12,13,23\}$.  We will show that their mixture $\mbf{p}_{B|A_1A_2A_3}=\lambda_1\mbf{p}'_{B|A_1A_2A_3}+\lambda_2\mbf{p}'_{B|A_1A_2A_3}+(1-\lambda_1-\lambda_2)\mbf{p}''_{B|A_1A_2A_3}$ does not belong to $\mc{C}_{3,2}([2]^3;[3])$ for $\lambda_1\in(0,1)$ and $\lambda_2\in(0,1-\lambda_1)$.  For if it did, we could write $p(b|a_1,
a_2,a_3)$ as:
\begin{align}
\lambda_1 g_{12}(b|a_1,a_2)+\lambda_1 g_{13}(b|a_1,a_3)+(1-\lambda_1-\lambda_2)g_{23}(b|a_2,a_3)=\sum_{S}p_S\sum_{m\in \mc{M}_S}d(b|m)q_S(m|(a_s)_{s\in S}).
\end{align}
Since all $g_S$ are extreme, one can verify that $p(b|a_1,a_2,a_3)$ has unique decomposition similarly as what we did in proposition 4, thus the previous equation implies that:
\begin{equation*}
\begin{split}
g_{12}(b|a_1,a_2)&=\sum_{m=0,\mbf{e}_1,\mbf{e}_2}d(b|m)q_{12}(m|a_1,a_2)\\
g_{13}(b|a_1,a_3)&=\sum_{m=0,\mbf{e}_1,\mbf{e}_3}d(b|m)q_{13}(m|a_1,a_3)\\
g_{23}(b|a_2,a_3)&=\sum_{m=0,\mbf{e}_2,\mbf{e}_3}d(b|m)q_{23}(m|a_2,a_3).
\end{split}
\end{equation*}
Because $g_S$ are deterministic MAC with three distinct outputs conditional on different inputs, we should have $d(b|0)=d(b'|\mbf{e}_1)=d(b''|\mbf{e}_2)= d(b'''|\mbf{e}_3)=1$ for some $b\ne b'\ne b''\ne b'''$. This contradicts with the fact $|\mc{B}|=3$ hence $\mbf{p}_{B|A_1A_2A_3}$ is not in $\mc{C}_{3,2}([2]^3;[3])$ and it follows that $\mc{C}_{3,2}([2]^3;[3])$ is non-convex. On the other hand, as has been shown in Proposition \ref{prop:N,K convexity}, $\mc{C}'_{3,2}([2]^3;[3])$ is convex, we have in general $\mc{C}_{N,K}(\bs{\mc{A}};\mc{B})\ne\mc{C}'_{N,K}(\bs{\mc{A}};\mc{B})$.
\end{proof}
While $\mc{C}_{N,K}(\bs{\mc{A}};\mc{B})$ is non-convex for $|\mc{B}|>2$, if shared randomness between the particle source and receiver is allowed, then convexity can be restored if $|\mc{B}|\le K+1$. 
\begin{proposition}
\label{prop:N,K convexity}
$\mc{C}'_{N,K}(\bcA;\mc{B})=\mc{C}_{N,K}^{(\text{sep})}(\bcA;\mc{B})$ for arbitrary input sets $\bs{\mc{A}}$ and output set $|\mc{B}|\le K+1$.
\end{proposition}
\begin{proof}
Clearly $\mc{C}'_{N,K}(\bcA;\mc{B})\subset\mc{C}_{N,K}^{(\text{sep})}(\bcA;\mc{B})$.  Conversely, we can construct a deterministic decoder by $d_S(0|0)=1$ and $d_S(b|\mbf{\tilde{e}}_b)=1$; encoders by $q_S(0|(a_s)_{s\in S})=g_S(0|(a_s)_{s\in S})$ and $q_S(\mbf{\tilde{e}}_b|(a_s)_{s\in S})=g_S(b|(a_s)_{s\in S})$, where $\mbf{\tilde{e}}_b$ is the $b^{\text{th}}$ non-zero element in $\mc{M}_S$. Thus, any MAC having form Eq.~(\ref{Eq:N,K-local-binary-out}). can be written as Eq. (21). 
\end{proof}
Different from proposition \ref{Prop:binary-out}, the shared randomness in the $(N,K)$-local setting is crucial. Without knowing the grouping $S$, the decoder will not be able to correctly associate the encoding $\mbf{e}_i$ to the group of parties.  

\section{Dimensions of $(N,K)$-local Classical MACs and quantum MACs }
\label{sec:A2 dimension}
In this section, we extend Theorem 2 and Proposition 9 and discuss the multiple access channel with arbitrary inputs set $\mc{A}_i$ and output set $\mc{B}$. To proceed this section, the notations have to be generalized as follows.\par 
Any MAC in $\mc{C}_{N,K}(\bs{\mc{A}},\mc{B})$ or $\mc{Q}_{N,K}(\bs{\mc{A}},\mc{B})$ can be envisioned as a point $\mbf{p}_{B|\bA}$ in ($|\mc{B}|\cdot\prod_{k=1}^N|\mc{A}_k|$)-dimensional Euclidean space with transition coordinates $(p(b|a_1,\cdots,a_N))_{b\in\mc{B},a_i\in\mc{A}_i}$. To facilitate our analysis, we first introduce the new interference coordinates analogous to Eq.~(39):
\begin{align}
\label{eq:new coordinates}
\bigcup_{b\in\mc{B}}\bigcup_{\substack{k\in\{0,\cdots,N\}\\S\subseteq\{1,\cdots,N\}\\|S|=k}}\bigcup_{\substack{\alpha_s\in\mc{A}_s\backslash \{0\}\\ s\in S}}\left\{q(b,S,\{\alpha_s\}_s):=(-1)^k\sum_{\substack{a_s\in\{0,\alpha_s\}\\s\in S}}\prod_{i\in S}f(a_i)p(b|a_1,\cdots,a_N) \;\bigg|\;\text{$a_j=0$ for $j\not\in S$}\right\},
\end{align}
where $f(a_i)=1$ if $a_i=0$ and $f(a_i)=-1$ when $a_i\ne 0$.  Comparing to Eq.~(39), these new coordinates will have extra dependence on the output $b$ and input set $\{\alpha_s\}_s$. And there are exactly $|\mc{B}|\cdot\prod_{k=1}^N|\mc{A}_k|$ elements in this basis since
\begin{equation*}
|\mc{B}|\cdot\sum_{k=0}^N\sum_{i_1=1}^N\sum^N_{i_2>i_1}\cdots\sum^N_{i_k>i_{k-1}}(|\mc{A}_{i_1}|-1)\times\cdots\times(|\mc{A}_{i_k}|-1)=|\mc{B}|\cdot\prod_{k=1}^N|\mc{A}_k|.
\end{equation*}
This equality can be established as follows. We define the polynomial function:
\begin{equation}
p(x)\coloneqq\prod_{k=1}^N [(|\mc{A}_k|-1)+x]=\sum_{k=0}^Na_kx^k,\notag
\end{equation}
with the coefficients of the polynomial given as:
\begin{equation}
a_k=\sum_{i_1=1}^N\sum^N_{i_2>i_1}\cdots\sum^N_{i_k>i_{k-1}}(|\mc{A}_{i_1}|-1)\times\cdots\times(|\mc{A}_{i_k}|-1).\notag
\end{equation}
Hence,
\begin{equation}
\prod_{k=1}^N|\mc{A}_k|=p(1)=\sum_{k=0}^Na_k=\sum_{k=0}^N\sum_{i_1=1}^N\sum^N_{i_2>i_1}\cdots\sum^N_{i_k>i_{k-1}}(|\mc{A}_{i_1}|-1)\times\cdots\times(|\mc{A}_{i_k}|-1).\notag
\end{equation}
Since the transformation between the interference coordinates Eq.~(\ref{eq:new coordinates}) and the original coordinates $\{p(b|a_1,\cdots,a_n)\}$ is invertible, we can safely using them to discuss any given point $\mbf{p}_{B|\bA}$ in both $\mc{C}_{N,K}$ and $\mc{Q}_{N,K}$\par

\begin{theorem}
\label{thm:Rank Ck general}
 $\dim\mathcal{C}_{N,K}(\bs{\mc{A}},\mc{B})=(|\mc{B}|-1)\sum_{k=0}^K(|\mc{A}|-1)^k\binom{N}{k}$ with all $|\mc{A}_i|=|\mc{A}|$ for simplicity. 
\end{theorem}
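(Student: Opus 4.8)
The plan is to reuse, mutatis mutandis, the two-step argument that established Theorem~\ref{thm:Rank Ck} (the binary case), now carried out in the generalized interference coordinates $q(b,S,\{\alpha_s\}_s)$ of Eq.~\eqref{eq:new coordinates}, whose linear independence and count (namely $|\mc{B}|\cdot\prod_k|\mc{A}_k|=|\mc{B}||\mc{A}|^N$, matching the transition coordinates) were verified above. One direction bounds the dimension from above using the linear and affine relations forced by normalization and by vanishing higher-order interference; the other exhibits enough explicit affinely independent $(N,K)$-local MACs.

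\textbf{Upper bound.} I would show that every MAC in $\mc{C}^{(\text{sep})}_{N,K}(\bs{\mc{A}};\mc{B})$ --- hence, by the inclusion chain, every MAC in $\mc{C}_{N,K}$, $\mc{C}'_{N,K}$ and $\conv[\mc{C}_{N,K}]$ --- satisfies two families of affine relations. The first is the set of $|\mc{A}|^N$ normalization equations $\sum_{b}p(b|a_1\cdots a_N)=1$, which are linearly independent (disjoint support in the transition coordinates). The second is the vanishing of higher-order interference: for every $b$, every $S$ with $|S|=k>K$, and every $\{\alpha_s\}_{s\in S}$, one has $q(b,S,\{\alpha_s\}_s)=0$. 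This is the same alternating-sum argument used for $\mc{C}_{N,K}\subseteq\mf{I}_{N,K}$ and in the proof of Theorem~\ref{Thm:classical-I2-N}: a separable $(N,K)$-local MAC is a convex combination of terms each depending on at most $K<|S|$ of the relevant inputs, so summing over the value of some coordinate $s_0\in S\setminus S'$ with the sign $f(a_{s_0})$ annihilates each term. The delicate point --- the step I expect to require the most care --- is counting how many of the second family are genuinely new: for a fixed $(S,\{\alpha_s\})$, summing $q(b,S,\{\alpha_s\}_s)$ over $b\in\mc{B}$ produces an expression that is identically zero once normalization is imposed (since $\prod_{s\in S}(f(0)+f(\alpha_s))=0$ for $|S|\geq1$), so only $|\mc{B}|-1$ of these equations are independent modulo the first family, for each of the $\sum_{k=K+1}^N\binom{N}{k}(|\mc{A}|-1)^k$ pairs $(S,\{\alpha_s\})$. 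Subtracting the resulting total $|\mc{A}|^N+(|\mc{B}|-1)\sum_{k=K+1}^N\binom{N}{k}(|\mc{A}|-1)^k$ from the ambient dimension $|\mc{B}||\mc{A}|^N$ and using the binomial identity $|\mc{A}|^N=\sum_{k=0}^N\binom{N}{k}(|\mc{A}|-1)^k$, the bound collapses to $(|\mc{B}|-1)\sum_{k=0}^K\binom{N}{k}(|\mc{A}|-1)^k$.

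\textbf{Lower bound.} I would exhibit $(|\mc{B}|-1)\sum_{k=0}^K\binom{N}{k}(|\mc{A}|-1)^k+1$ affinely independent MACs in $\mc{C}_{N,K}$, generalizing those built in Theorem~\ref{thm:Rank Ck}. Distinguish the output value $0$. Take the constant MAC with $p(0|\mbf{a})\equiv1$, and for each triple $(b,S,\{\alpha_s\}_s)$ with $b\ne0$ and $0\le|S|\le K$ take the MAC defined by $p(b|\mbf{a})=\prod_{s\in S}\delta_{a_s,\alpha_s}$ and $p(0|\mbf{a})=1-\prod_{s\in S}\delta_{a_s,\alpha_s}$; each depends on at most $K$ parties and therefore lies in $\mc{C}_{N,K}$. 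A direct computation shows that, on restriction to the coordinates $q(b',S',\{\alpha'_s\}_s)$ with $b'\ne0$ and $|S'|\le K$, the constant MAC maps to $\mathbf{0}$ and each of the other MACs maps to the standard basis vector indexed by its own $(b,S,\{\alpha_s\}_s)$ (the key being $\prod_{s\in S'\setminus S}(f(0)+f(\alpha'_s))=0$ unless $S'=S$ and the $\alpha$'s agree); so the family is affinely independent and has the required cardinality. Combining the two bounds yields the stated dimension. Finally, since the upper bound was argued for the largest set $\mc{C}^{(\text{sep})}_{N,K}$ (and holds for $\mf{I}_{N,K}$ by its very definition) while the affinely independent points all lie in the smallest set $\mc{C}_{N,K}\subseteq\mf{I}_{N,K}$, the equalities $\dim\mf{I}_{N,K}=\dim\mc{C}_{N,K}=\dim\mc{C}'_{N,K}=\dim\conv[\mc{C}_{N,K}]=\dim\mc{C}^{(\text{sep})}_{N,K}$ follow at once.
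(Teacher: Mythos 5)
Your proposal is correct and follows essentially the same route as the paper's proof: it works in the generalized interference coordinates $q(b,S,\{\alpha_s\}_s)$, gets the upper bound from normalization together with vanishing $k^{\text{th}}$-order interference for $k>K$, and gets the matching lower bound from the same family of indicator-type $(N,K)$-local MACs that become standard unit vectors (plus a constant MAC) in those coordinates. The only differences are bookkeeping — you count independent constraints in the ambient space and single out output $0$, whereas the paper discards the redundant coordinates ($b=|\mc{B}|-1$ and $|S|>K$) directly — and these are cosmetic.
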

\begin{proof}

Each $\mbf{p}_{B|\bA}\in\mc{C}_{N,K}(\bs{\mc{A}},\mc{B})$ satisfies $\prod_{k=1}^N|\mc{A}_k|$ normalization conditions $1=\sum p(b|a_1,\cdots,a_N)$, with $a_i\in [\mc{A}_i]$.  As observed before, $I_{K+1}^{(S)}(\mbf{p}_{B|\bA})=0$ for all $|S|\ge K+1$ for binary inputs/output case. It is easy to extend these constraints to arbitrary inputs/output by simply considering inputs/output relabeling, and hence we can write each point in a subset of the interference coordinates in Eq.~(\ref{eq:new coordinates}):
\begin{align}
\label{eq:new coordinates short}
\bigcup_{\substack{b\in\mc{B}\\b\ne |\mc{B}|-1}}\bigcup_{\substack{k\in\{0,\cdots,K\}\\S\subseteq\{1,\cdots,N\}\\|S|=k}}\bigcup_{\substack{\alpha_s\in\mc{A}_s\backslash \{0\}\\ s\in S}}\left\{q(b,S,\{\alpha_s\}_s):=(-1)^k\sum_{\substack{a_s\in\{0,\alpha_s\}\\s\in S}}\prod_{i\in S}f(a_i)p(b|a_1,\cdots,a_N) \;\bigg|\;\text{$a_j=0$ for $j\not\in S$}\right\},
\end{align}
with all the other elements being redundant. e.g. for $k\ge K+1$, those terms are actually zero (no $k^{\text{th}}$-order interference for $k\ge K+1$) and for $b=|B|-1$, the information can be retrieved from the normalization constraints.  \par
There are $(|\mc{B}|-1)\sum_{k=0}^K(|\mc{A}|-1)^k\binom{N}{k}$ elements in the above set, which upper bound  $\dim\mc{C}_{N,K}(\bs{\mc{A}},\mc{B})$ as:
\begin{equation}
\dim\mathcal{C}_{N,K}(\bs{\mc{A}},\mc{B})\le(|\mc{B}|-1)\sum_{k=0}^K(|\mc{A}|-1)^k\binom{N}{k}
\end{equation}
To compute a lower bound for $\dim\mc{C}_{N,K}$ we need to find a set of affinely independent points. To do that,  we consider the following $(N,K)$-local classical MACs (analogy to Eq.~(15)): 
\begin{equation}
\begin{split}
\label{Eq: new affine points}
    &p(b|a_1,\cdots,a_N)=\delta_{b\beta}, \qquad  \beta\in \mc{B}\\
    &p(b|a_1,\cdots,a_N)=d(b|0)q_S(0|(a_s)_{s\in S})=\prod_{s\in S}\delta_{\alpha_s,a_s}\delta_{b\beta},\quad \text{$ \beta\in [|\mc{B}|-1]$, $ |S|\leq K$ and $\alpha_s\in \mc{A}_s\backslash\{0\}$}, 
\end{split}   
\end{equation} 
 These MACs have exactly one interference coordinate being $1$ (only when the inputs set $\{\alpha_s\}_s$ and output $b$ match with the coordinate in Eq.~(\ref{eq:new coordinates})) and are obviously from $\mc{C}_{N,K}(\bs{\mc{A}},\mc{B})$. In total then, we obtain a collection of $(|\mc{B}|-1)\sum_{k=0}^K(|\mc{A}|-1)^k\binom{N}{k}+1$ affinely independent points. Hence we have:
\begin{equation}
\dim\mathcal{C}_{N,K}(\bs{\mc{A}},\mc{B})\ge(|\mc{B}|-1)\sum_{k=0}^K(|\mc{A}|-1)^k\binom{N}{k}
\end{equation}
\par
\noindent In summary, the dimension of the $\mathcal{C}_{N,K}(\bs{\mc{A}},\mc{B})$ polytope is $(|\mc{B}|-1)\sum_{k=0}^K(|\mc{A}|-1)^k\binom{N}{k}$ for any $1\le K\le N$.\par 
Note the above calculation also works for $\mathcal{C}'_{N,K}(\bs{\mc{A}},\mc{B})$,  $\conv[\mathcal{C}_{N,K}(\bs{\mc{A}},\mc{B})]$, $\mathcal{C}^{\text{sep}}_{N,K}(\bs{\mc{A}},\mc{B})$ and $\mf{I}_{N,K}(\bs{\mc{A}},\mc{B})$ since all the above constraints and affinely points constructed in Eq.~(\ref{Eq: new affine points}) work for all of them.
\end{proof}

\begin{remark}
Theorem \ref{thm:Rank Ck general} says the dimension $\mf{J}_{N,K}(\bs{\mc{A}},\mc{B})$ is $(|\mc{B}|-1)\sum_{k=0}^K(|\mc{A}|-1)^k\binom{N}{k}$ with new coordinates we introduced in Eq.~(\ref{eq:new coordinates short}). With these coordinates. In fact, for any $L\in\{1,\cdots,N\}$ and any permutation $\pi$ on parties, 
\begin{align}
%\label{Eq:I2-prob-relationships}
    p(b|\overbrace{\alpha_1,\cdots,\alpha_L}^{L},\overbrace{0,\cdots,0}^{N-L})=\sum_{k=0}^K\sum_{\substack{S\subset\{1,\cdots,L\}\\|S|=k}}q(b,S,\{\alpha_s\}_s)
\end{align}
\end{remark} 
\setcounter{proposition}{12}
\begin{proposition}\label{prop:Rank Qk general}
$\dim \mathcal{Q}_{N,K}(\bs{\mc{A}},\mc{B}) = (|\mc{B}|-1)\sum_{k=0}^{2K}(|\mc{A}|-1)^k\binom{N}{k}$ for $K\le \lfloor\frac{N}{2}\rfloor$ 
\end{proposition}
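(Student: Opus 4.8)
The plan is to combine the two arguments already in place: the generalized interference coordinates of Eq.~\eqref{eq:new coordinates}, which organized the arbitrary-alphabet \emph{classical} polytope in Theorem~\ref{thm:Rank Ck general}, together with the ``classical frame plus higher-order-interference quantum points'' count used for the \emph{binary} quantum polytope in Proposition~\ref{prop:Rank Qk}. Throughout I would work in the basis $\{q(b,S,\{\alpha_s\}_{s\in S})\}$ of \eqref{eq:new coordinates}, indexed by $b\in\mc{B}$, $S\subseteq\{1,\dots,N\}$ and $\alpha_s\in\mc{A}_s\setminus\{0\}$, and recall that $q(b,S,\{\alpha_s\})$ is, up to a sign, the $|S|$-th order interference of the binary sub-MAC obtained by confining each input $a_s$, $s\in S$, to $\{0,\alpha_s\}$ (and the remaining inputs to $0$) and looking at the single output $b$.

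\emph{Upper bound.} Every $\mbf{p}_{B|\bA}\in\mc{Q}_{N,K}(\bs{\mc{A}},\mc{B})$ satisfies the $\prod_k|\mc{A}_k|$ normalization conditions $\sum_b p(b|\vec a)=1$, which (exactly as in Theorem~\ref{thm:Rank Ck general}) determine the coordinates with $b=|\mc{B}|-1$ from the rest. Moreover the trace argument of Proposition~\ref{prop:QN,K 2K+1} applies verbatim with the inputs of the parties in $S$ restricted to two values each --- each term $\mc{E}^{(\msf{A}_s)_{s\in S_I}}\otimes\mc{E}^{(\msf{A}_s)_{s\in S_J}}[\rho_{ij}\op{\mbf{e}_i}{\mbf{e}_j}]$ still depends on at most $|S_I\cup S_J|\le 2K$ inputs while the defining alternating sum ranges over $|S|$ parameters --- so $q(b,S,\{\alpha_s\})=0$ whenever $|S|\ge 2K+1$. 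Hence only the $(|\mc{B}|-1)\sum_{k=0}^{2K}(|\mc{A}|-1)^k\binom{N}{k}$ coordinates with $b\ne|\mc{B}|-1$ and $|S|\le 2K$ may be nonzero, bounding $\dim\mc{Q}_{N,K}(\bs{\mc{A}},\mc{B})$ from above.

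\emph{Lower bound.} I would produce $(|\mc{B}|-1)\sum_{k=0}^{2K}(|\mc{A}|-1)^k\binom{N}{k}+1$ affinely independent points in two families, paralleling Proposition~\ref{prop:Rank Qk}. (i) The classical $(N,K)$-local MACs of Theorem~\ref{thm:Rank Ck general} given by Eq.~\eqref{Eq: new affine points}; these lie in $\mc{C}_{N,K}\subseteq\mc{Q}_{N,K}$, their incidence matrix with the coordinates of $|S|\le K$ is already triangular with nonzero diagonal, and all their coordinates with $|S|>K$ vanish. (ii) For each $b\in[|\mc{B}|-1]$, each $S$ with $K<|S|\le 2K$, and each tuple $(\alpha_s)_{s\in S}$, a quantum MAC built as in Proposition~\ref{prop:2K-coherence-if-nonzero-off-diag}: split $S=S_1\sqcup S_2$ with $i\in S_1$, $j\in S_2$, $|S_1|,|S_2|\le K$; prepare $\ket{\psi}=\tfrac1{\sqrt2}(\ket{\mbf{e}_i}+\ket{\mbf{e}_j})$; let group $S_1$ apply $\sigma_z^{\oplus_{s\in S_1}[a_s=\alpha_s]}$ on path $i$ and identity on its other paths, likewise $S_2$ on path $j$, with every remaining group acting trivially on paths $i,j$; and decode with $\Pi_b=\op{\psi_{ij}}{\psi_{ij}}$, $\Pi_{|\mc{B}|-1}=\mbb{I}-\Pi_b$, $\Pi_{b'}=0$ otherwise, where $\ket{\psi_{ij}}=\tfrac1{\sqrt2}(\ket{\mbf{e}_i}+\ket{\mbf{e}_j})$. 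Since the resulting distribution depends only on the indicators $[a_s=\alpha_s]$, $s\in S$ --- indeed $p(b|\vec a)=\tfrac12\bigl(1+(-1)^{\oplus_{s\in S}[a_s=\alpha_s]}\bigr)$ --- one checks $q(b',S',\{\alpha'_s\})=0$ whenever $|S'|\ge|S|$ and $(b',S',\{\alpha'_s\})\ne(b,S,\{\alpha_s\})$ (if $S'\not\subseteq S$ a tracked input is irrelevant; if $S'=S$ with some $\alpha'_s\ne\alpha_s$ the pertinent indicator is constant on $\{0,\alpha'_s\}$; if $b'\notin\{b,|\mc{B}|-1\}$ then $\Pi_{b'}=0$; and $b'=|\mc{B}|-1$ is untracked), while the diagonal value equals $(-1)^{|S|}2^{|S|-1}\ne 0$. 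Ordering the coordinates by $|S|$ then makes the joint incidence matrix of families (i) and (ii) block-triangular with nonzero diagonal, so these points are affinely independent, yielding the matching lower bound.

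\emph{Main obstacle.} The delicate step is (ii). One must confirm the indicator-conditioned phase maps are bona fide $K$-local N-P operations (each is a diagonal unitary on a single path, hence of the form \eqref{Eq:Kraus-qubit-MAC} with $|y|=1$, performed jointly by at most $K$ parties), and --- the real crux --- that routing the complementary POVM weight onto the \emph{redundant} output $|\mc{B}|-1$ is precisely what keeps the matrix triangular; a careless choice of that output would deposit nonzero entries into tracked coordinates $(b',S,\{\alpha_s\})$ with $b'\ne b$ and break the argument. Checking affine rather than merely linear independence across the two families is handled, as in Proposition~\ref{prop:Rank Qk}, because the classical family already contains an affine frame (the constant MACs $p(b|\vec a)=\delta_{b\beta}$) for the $|S|\le K$ block; and the purely combinatorial identity that the tracked coordinates number exactly $(|\mc{B}|-1)\sum_{k=0}^{2K}(|\mc{A}|-1)^k\binom{N}{k}$ is routine, just as in the proof of Theorem~\ref{thm:Rank Ck general}. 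The same scheme then also gives $\dim\mc{Q}_{N,K}(\bs{\mc{A}},\mc{B})=(|\mc{B}|-1)\prod_k|\mc{A}_k|$ when $\lfloor N/2\rfloor<K\le N$.
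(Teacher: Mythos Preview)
Your proposal is correct and follows essentially the same route as the paper: the same generalized interference coordinates for the upper bound via normalization and Proposition~\ref{prop:QN,K 2K+1}, and the same two families for the lower bound --- the classical $(N,K)$-local MACs of Eq.~\eqref{Eq: new affine points} together with the quantum MACs built from $\tfrac{1}{\sqrt{2}}(\ket{\mbf{e}_i}+\ket{\mbf{e}_j})$ with $\{0,\pi\}$ phase encoding on the indicator-parity $\oplus_{s}[a_s=\alpha_s]$ and the POVM weight routed to the redundant output $|\mc{B}|-1$. Your write-up is in fact more careful than the paper's on two points: you spell out explicitly why $q(b',S',\{\alpha'_s\})=0$ in each off-diagonal case, and you record the correct diagonal value $(-1)^{|S|}2^{|S|-1}$ (the paper writes $2^{k-1}$, suppressing the sign, which is harmless for the triangularity argument).
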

\begin{proof}
The set $\mathcal{Q}_{N,K}(\bs{\mc{A}},\mc{B})$ lives in a $(|\mc{B}|\cdot\prod_{i=1}^N|\mc{A}_i|)$-dimensional space, and there are also $\prod_{i=1}^N|\mc{A}_i|$ normalization condition for each $\mbf{p}_{B|\bA}\in\mc{Q}_{N}(\bs{\mc{A}},\mc{B})$. Additionally, proposition 8 shows that $\mbf{p}_{B|\bA} \in \mc{Q}_{N,K}(\bs{\mc{A}},\mc{B})$ can not violate $I_{2K+1}$ equality and obviously any higher order equality (The proposition works for arbitrary inputs/output), which introduces some extra constraints. And hence, we can write each point in a subset of the interference coordinates in Eq.~(\ref{eq:new coordinates}) with $k\in\{0,\cdots,2K\}$ and $b\ne |\mc{B}|-1$ as:
\begin{align}
\label{eq:new coordinates 1}
\bigcup_{\substack{b\in\mc{B}\\b\ne |\mc{B}|-1}}\bigcup_{\substack{k\in\{0,\cdots,2K\}\\S\subseteq\{1,\cdots,N\}\\|S|=k}}\bigcup_{\substack{\alpha_s\in\mc{A}_s\backslash \{0\}\\ s\in S}}\left\{q(b,S,\{\alpha_s\}_s):=(-1)^k\sum_{\substack{a_s\in\{0,\alpha_s\}\\s\in S}}\prod_{i\in S}f(a_i)p(b|a_1,\cdots,a_N) \;\bigg|\;\text{$a_j=0$ for $j\not\in S$}\right\},
\end{align}
There are $(|\mc{B}|-1)\sum_{k=0}^K(|\mc{A}|-1)^k\binom{N}{k}$ elements in the above set, which upper bound the dimension of $\dim\mc{Q}_{N,K}(\bs{\mc{A}},\mc{B})$ as:
\begin{equation}
\dim\mathcal{Q}_{N,K}(\bs{\mc{A}},\mc{B})\le(|\mc{B}|-1)\sum_{k=0}^{2K}(|\mc{A}|-1)^k\binom{N}{k}
\end{equation}
To get the lower bound, we consider the following two classes of points: \\
(i) Classical MACS: The set of $(|\mc{B}|-1)\sum_{k=0}^K(|\mc{A}|-1)^k\binom{N}{k}$ affinely independent points in Eq.~(\ref{Eq: new affine points}) in $\mc{C}_{N,K}$ as are obviously also in $\mc{Q}_{N,K}(\bs{\mc{A}},\mc{B})$. \\
(ii) Quantum MACS: We consider quantum MACs generated by quantum state $\ket{\psi}^{\msf{A}_i\msf{A}_j}=\frac{1}{\sqrt{2}}(\ket{\mbf{e}_i}+\ket{\mbf{e}_j})$ shared between party $i\in S_i$, $j\in S_j$ in disjoint groups $S_i\cap S_j=\emptyset$. Each group performs a $\{0,\pi\}$ phase encoding on sender $i$ or $j$ conditional on the `parity' of their inputs, i.e. $\mc{E}^{(\msf{A}_s)_{s\in S_i}}_{(a_s)_{s\in S_i}}(X)=\bigotimes_{s\in S_i\setminus \{i\}}\text{id}^{\msf{A}_s}\otimes (\sigma_z^{P_S[(a_s)_{s\in S}]})^{\msf{A}_i}(X)(\sigma_z^{P_S[(a_s)_{s\in S}]})^{\msf{A}_i}$ where $\sigma_z=\op{0}{0}+e^{i\pi}\op{1}{1}$ and the parity function $P_S[(a_s)_{s\in S}]$ depends on the input set $\{\alpha_s\}_{s\in S_i\cup S_j}$ as:
\begin{eqnarray}
P_S[(a_s)_{s\in S}]=(\sum_{s\in S}\delta_{\alpha_s,a_s})\text{(mod) $2$}
\end{eqnarray}
The decoder then measures with POVM $\Pi_b=\op{\psi}{\psi}^{\msf{A}_i\msf{A}_j}$and $\Pi_{|\mc{B}|-1}=\mbb{I}-\op{\psi}{\psi}^{\msf{A}_i\msf{A}_j}$. This quantum MAC can demonstrate maximal $k^{\text{th}}$-order interference ($K<k=|S|$) among and only among senders $S=S_i\cup S_j$, input set $\{\alpha_s\}_s$ and output $b$. \par
The same strategy applies for all groups $S$ with $K<|S|\le2K$, input sets $\{\alpha_s\}_{s\in S_i\cup S_j}$ and outputs $b$, thus we can obtain $(|\mc{B}|-1)\sum_{k=K+1}^{2K}(|\mc{A}|-1)^k\binom{N}{k}$ different MACs, now let's check whether they are affinely independent. \par 
Points in class (i) do not violate $I_k$ equalities for any $k>K$ hence have $q(b,S,\{\alpha_s\}_s)=0$ for any $|S|>K$, while each point considered in class (ii) violates one and only one $I_k$ equality for integer $k=|S|>K$, specifically has $q(b,S,\{\alpha_s\}_s)=2^{k-1}$. By a similar reasoning as for proposition 9, the collection of these points forms a upper-triangular matrix similarly as Eq.~\ref{eq:Qexample} thus being affinely linear independent. In total then, we obtain a collection of $(|\mc{B}|-1)\sum_{k=0}^{2K}(|\mc{A}|-1)^k\binom{N}{k}+1$ affinely independent points belonging to $\mc{Q}_{N,K}(\bs{\mc{A}},\mc{B})$. Hence we have:
\begin{equation}
\dim\mathcal{Q}_{N,K}(\bs{\mc{A}},\mc{B})\ge(|\mc{B}|-1)\sum_{k=0}^{2K}(|\mc{A}|-1)^k\binom{N}{k}
\end{equation}
\par
\noindent In summary, the dimension of the $\mc{Q}_{N,K}(\bs{\mc{A}},\mc{B})$ polytope is $(|\mc{B}|-1)\sum_{k=0}^K(|\mc{A}|-1)^k\binom{N}{k}$ for any $1\le K\le \lfloor\frac{N}{2}\rfloor$, and obviously we can similarly conclude that  $\mc{Q}_{N,K}(\bs{\mc{A}},\mc{B})=(|\mc{B}|-1)\sum_{k=0}^K(|\mc{A}|-1)^k\binom{N}{k}=(|\mc{B}|-1)\prod_{s\in S}|\mc{A}_s|$ for $\lfloor\frac{N}{2}\rfloor<K\le N$
\end{proof}

\section{Separation of $\mc{Q}_{2,1}([2]^2;[2])$ and $\mc{C}_{2,2}([2]^2;[2])$}
\label{sec:A3 seperation}
As we alluded to earlier, in general, while they have the same dimensions, $\mc{Q}_{N,K}$ are fundamentally different from $\mc{C}_{N,2K}$. Here we give a proof for the simplest case with $N=2$, $K=1$ and binary inputs/output, i.e., $\mc{Q}_{2,1}([2]^2;[2])\ne\mc{C}_{2,2}([2]^2;[2])$.
\begin{proposition}
\label{prop: seperation}
$\mc{Q}_{2,1}([2]^2;[2])\ne\mc{C}_{2,2}([2]^2;[2])$.
\end{proposition}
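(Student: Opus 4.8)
The plan is to show $\mc{Q}_{2,1}([2]^2;[2])$ is a \emph{proper} subset of $\mc{C}_{2,2}([2]^2;[2])$ by exhibiting a classical MAC in the latter that is not in the former. Since $K=N=2$, the only two-element subset is $S=\{1,2\}$, so by Proposition~\ref{Prop:binary-out} the set $\mc{C}_{2,2}([2]^2;[2])=\mc{C}_{2,2}^{(\text{sep})}([2]^2;[2])$ is simply the set of \emph{all} stochastic maps $[2]^2\to[2]$; in particular it contains the deterministic \textsc{and} channel $p_\wedge(b|a_1a_2)=\delta_{b,\,a_1\wedge a_2}$. It therefore suffices to prove $p_\wedge\notin\mc{Q}_{2,1}([2]^2;[2])$ (the other two-input gates follow by relabelling inputs and the output).

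Suppose for contradiction that $p_\wedge$ is realized by an initial single-particle state $\rho\in\mc{B}(\mc{H}_1^{\msf{A}_1\msf{A}_2})$ (a qubit on $\mathrm{span}\{\ket{\mbf e_1},\ket{\mbf e_2}\}$), local N-P encodings $\mc{E}_{a_i}^{\msf A_i}$, and a binary POVM $\{\Pi_0,\Pi_1\}$, with $\sigma_{a_1a_2}=\mc{E}_{a_1}^{\msf A_1}\otimes\mc{E}_{a_2}^{\msf A_2}(\rho)$. First I would record the structure of the received states. By Eq.~\eqref{Eq:Kraus-qubit-MAC} an N-P qubit map is a mixture of amplitude-damping channels, hence on the input mode is described by a survival probability $d\in[0,1]$ (with $\mc{E}(\op11)=d\op11+(1-d)\op00$) and a coherence factor $\gamma$ with $|\gamma|\le\sqrt d$ (with $\mc{E}(\op10)=\gamma\op10$), where $|\gamma|=\sqrt d$ forces $\mc{E}$ to be a single amplitude-damping channel. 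Writing $d_i(a_i),\gamma_i(a_i)$ for $\msf A_i$ with input $a_i$, one checks that $\sigma_{a_1a_2}$ is block-diagonal with respect to $V\oplus\mbb{C}\ket{0}^{\bsA}$, where $V=\mathrm{span}\{\ket{\mbf e_1},\ket{\mbf e_2}\}$: its vacuum weight is $\mu_{a_1a_2}=\rho_{11}(1-d_1(a_1))+\rho_{22}(1-d_2(a_2))$, its qubit-block diagonal entries are $\rho_{11}d_1(a_1)$ and $\rho_{22}d_2(a_2)$, and its off-diagonal entry has modulus $|\gamma_1(a_1)||\gamma_2(a_2)||\rho_{12}|$. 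If $\rho_{11}=0$ or $\rho_{22}=0$ the MAC depends on only one input and cannot equal $p_\wedge$, so assume $\rho_{11},\rho_{22}>0$. Since $0\le\Pi_b\le\mbb{I}$ and the $\sigma_{a_1a_2}$ are positive, realizing $p_\wedge$ exactly forces $\mathrm{supp}(\sigma_{a_1a_2})\perp\mathrm{supp}(\sigma_{11})$ for $(a_1a_2)\in\{00,01,10\}$.

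The heart of the argument is a short case analysis of these orthogonality relations. Because $\ket{0}^{\bsA}\in\mathrm{supp}(\sigma_{a_1a_2})$ exactly when $\mu_{a_1a_2}>0$, orthogonality gives $\mu_{11}\mu_{00}=\mu_{11}\mu_{01}=\mu_{11}\mu_{10}=0$. If $\mu_{11}>0$ then $\mu_{00}=\mu_{01}=\mu_{10}=0$, which forces every $d_i(a_i)=1$ and hence $\mu_{11}=0$, a contradiction; so $\mu_{11}=0$, i.e.\ $d_1(1)=d_2(1)=1$. Now $\mathrm{supp}(\sigma_{11})\subseteq V$, so orthogonality \emph{inside} the two-dimensional space $V$ forces the qubit block of $\sigma_{11}$ and of each nonzero $\sigma_{a_1a_2}$ to have rank one, and the latter blocks to be supported on $\ket{\psi_{11}}^{\perp}$, where $\ket{\psi_{11}}$ spans $\mathrm{supp}(\sigma_{11})$. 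Vanishing of the relevant $2\times2$ determinants together with $|\gamma_i|\le\sqrt{d_i}$ and $|\rho_{12}|^2\le\rho_{11}\rho_{22}$ forces $\rho$ pure, the input-$1$ maps to be phase unitaries, and $|\gamma_i(0)|=\sqrt{d_i(0)}$; matching eigenvectors with $\ket{\psi_{11}}^{\perp}$ then gives $\rho_{11}\sqrt{d_1(0)}=\rho_{22}$ and $\rho_{22}\sqrt{d_2(0)}=\rho_{11}$, whence $\rho_{11}=\rho_{22}=\tfrac12$ and $d_1(0)=d_2(0)=1$. So all four encodings are phase unitaries and $\rho$ is maximally coherent, giving $\sigma_{a_1a_2}=\op{\psi_{a_1a_2}}{\psi_{a_1a_2}}$ with $\ket{\psi_{a_1a_2}}=\tfrac1{\sqrt2}(\ket{\mbf e_1}+e^{i\nu_{a_1a_2}}\ket{\mbf e_2})$ and $\nu_{a_1a_2}=u_{a_1}+v_{a_2}$ \emph{separable} in the two inputs. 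The orthogonality conditions become $\nu_{11}-\nu_{00}\equiv\nu_{11}-\nu_{01}\equiv\nu_{11}-\nu_{10}\equiv\pi\ (\mathrm{mod}\ 2\pi)$; but separability gives $v_0=v_1$ and $u_0=u_1$, hence $\nu_{11}=\nu_{00}$, contradicting $\nu_{11}-\nu_{00}\equiv\pi$. Thus $p_\wedge\notin\mc{Q}_{2,1}([2]^2;[2])$, and the two sets differ.

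I expect the main obstacle to be the bookkeeping in the third paragraph: one must carefully rule out every configuration of mixed initial states and non-unitary N-P encodings (genuine damping, or mixtures of phase unitaries with $|\gamma|<1$) before collapsing to the maximally-coherent, phase-only situation, where the final ``$0\equiv\pi$'' contradiction — the same obstruction that prevents a single particle from computing the non-\textsc{xor} two-input gates — is immediate. A secondary point that must be handled cleanly is the reduction from ``realizes $p_\wedge$ exactly'' to ``the supports of $\sigma_{00},\sigma_{01},\sigma_{10}$ are orthogonal to that of $\sigma_{11}$'', which is where $0\le\Pi_b\le\mbb{I}$ and positivity of the encoded states enter.
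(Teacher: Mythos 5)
Your proof is correct in substance and even targets the same witness channel as the paper: your \textsc{and} gate is, up to relabelling inputs and output, the channel $p(0|a_1a_2)=\delta_{0,a_1}\delta_{0,a_2}$ used in the supplementary proof. But you reach the contradiction by a genuinely different route. The paper first argues that, since the target channel is an extreme point, the initial state may be taken pure; it then splits on whether the state obtained after one party's encoding has a vacuum component, killing the positive-weight case via the invariance of $\op{0}{0}^{\bsA}$ under N-P maps (a ``$1$ versus $0$'' clash on the same vacuum term), and killing the zero-weight case by a dimension count: all encoded states then live in the two-dimensional one-particle space, and perfect distinguishability of $\sigma_{00}$ from the other three forces $\sigma_{01}=\sigma_{10}=\sigma_{11}$, hence input-independent encodings. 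You instead parametrize every single-mode N-P channel by its survival probability $d$ and coherence factor $\gamma$ with $|\gamma|\le\sqrt d$, convert exact realization into orthogonality of supports, and then saturate determinant and Cauchy--Schwarz inequalities to force a maximally coherent pure state with phase-unitary encodings, where the separability $\nu_{a_1a_2}=u_{a_1}+v_{a_2}$ of the encoded phases contradicts the three orthogonality conditions. Your route is more computational but buys more: purity and the extremal configuration are derived rather than imported from extremality, and the argument exhibits explicitly that the only near-realizations are \textsc{xor}-type phase protocols, i.e.\ a constructive version of the obstruction the paper reaches abstractly. Two small repairs are needed in your write-up: the clause that ``each nonzero $\sigma_{a_1a_2}$'' has a rank-one qubit block should be restricted to $\sigma_{01}$ and $\sigma_{10}$ (their blocks are automatically nonzero once $d_1(1)=d_2(1)=1$, whereas the block of $\sigma_{00}$ could a priori vanish and is only controlled after you have forced $d_i(0)=1$), and the final sentence misstates the contradiction: separability does not give $u_0=u_1$ and $v_0=v_1$; rather $\nu_{11}-\nu_{01}\equiv\pi$ and $\nu_{11}-\nu_{10}\equiv\pi \pmod{2\pi}$ give $u_1-u_0\equiv v_1-v_0\equiv\pi$, whence $\nu_{11}-\nu_{00}\equiv 0$, contradicting $\nu_{11}-\nu_{00}\equiv\pi$. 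Both fixes are one-line and the argument stands.
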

\begin{proof}
Consider the MAC $\mbf{p}_{B|A_1A_2}$ given by $p(0|a_1a_2)=\delta_{0,a_1}\delta_{0,a_2}$, which is clearly in $\mc{C}_{2,2}([2]^2;[2])$ since $\mc{C}_{2,2}([2]^2;[2])$ contains all channels from $[2]^2$ to [2]. Assume, towards a contradiction, that this channel is in $\mc{Q}_{2,1}([2]^2;[2])$ as well. Suppose that the senders $\msf{A}_1$ and $\msf{A}_2$ share the state $\rho$ having support on $\mc{H}_1^{\msf{A}_1\msf{A}_2}$, then we have \begin{align}
    p(0|00)=\tr\left[\Pi_0\left(\mc{E}_0^{\msf{A}_1}\otimes\mc{E}_0^{\msf{A}_2}(\rho)\right)\right],\notag\\
    p(0|01)=\tr\left[\Pi_0\left(\mc{E}_0^{\msf{A}_1}\otimes\mc{E}_1^{\msf{A}_2}(\rho)\right)\right].\notag
\end{align}
Notice that the channel $\mbf{p}_{B|A_1A_2}$ cannot be expressed as a convex combination of two channels; therefore, $\rho$ must in fact be a pure state $\ket{\psi}\bra{\psi}$. Let $\rho'=\mc{E}_0^{\msf{A}_1}\otimes\text{id}^{\msf{A}_2}(\ket{\psi}\bra{\psi})$, then
\begin{align}
p(0|00)&=\tr\left[\Pi_0\left(\text{id}^{\msf{A}_1}\otimes\mc{E}_0^{\msf{A}_2}(\rho')\right)\right],\notag\\
p(0|01)&=\tr\left[\Pi_0\left(\text{id}^{\msf{A}_1}\otimes\mc{E}_1^{\msf{A}_2}(\rho')\right)\right].\notag
\end{align}
We can write $\rho'=p\rho_0+(1-p)\rho_1$ where $\rho_0$ and $\rho_1$ are projections of $\rho$ to the vacuum subspace $\mc{H}_0$ and the one-particle subspace $\mc{H}_1$ respectively. This is because NPE operations always map $\ket{\psi}\bra{\psi}$ to a block-diagonal state (see Eq. \eqref{Eq: M}). Then
\begin{equation}
    \begin{cases}
    \label{Eq:Q_21-C_22-contradiction}
    p\tr\left[\Pi_0\left(\text{id}^{\msf{A}_1}\otimes\mc{E}_0^{\msf{A}_2}(\ket{00}\bra{00})\right)\right] + (1-p)\tr\left[\Pi_0\left(\text{id}^{\msf{A}_1}\otimes\mc{E}_0^{\msf{A}_2}(\rho_1)\right)\right]=1, \\[5pt]
    p\tr\left[\Pi_0\left(\text{id}^{\msf{A}_1}\otimes\mc{E}_1^{\msf{A}_2}(\ket{00}\bra{00})\right)\right] + (1-p)\tr\left[\Pi_0\left(\text{id}^{\msf{A}_1}\otimes\mc{E}_1^{\msf{A}_2}(\rho_1)\right)\right]=0.
    \end{cases}
\end{equation}
We then proceed to show that Eq.~\eqref{Eq:Q_21-C_22-contradiction} always leads to a contradiction for any $p\in[0,1]$.\par
If $p\in(0,1]$, then Eq.~\eqref{Eq:Q_21-C_22-contradiction} implies that
\begin{align}
\tr\left[\Pi_0\left(\text{id}^{\msf{A}_1}\otimes\mc{E}_0^{\msf{A}_2}(\ket{00}\bra{00})\right)\right]=1 \notag   
\end{align}
and
\begin{align}
\tr\left[\Pi_0\left(\text{id}^{\msf{A}_1}\otimes\mc{E}_1^{\msf{A}_2}(\ket{00}\bra{00})\right)\right]=0. \notag
\end{align}
Since $\mc{E}_0^{\msf{A}_2}$ and $\mc{E}_1^{\msf{A}_2}$ are both NPE operations, $\text{id}^{\msf{A}_1}\otimes\mc{E}_0^{\msf{A}_2}(\ket{00}\bra{00})=\text{id}^{\msf{A}_1}\otimes\mc{E}_1^{\msf{A}_2}(\ket{00}\bra{00})=\ket{00}\bra{00}$. Therefore, these two equations cannot be true simultaneously, and we have a contradiction.

Next, suppose that $p=0$. In this case, $\rho'=\mc{E}_0^{\msf{A}_1}\otimes\text{id}^{\msf{A}_2}(\ket{\psi}\bra{\psi})$ remains a one-particle state. Following a similar argument, we can see that $\mc{E}_1^{\msf{A}_1}\otimes\text{id}^{\msf{A}_2}(\ket{\psi}\bra{\psi})$ and $\text{id}^{\msf{A}_1}\otimes\mc{E}_{0,1}^{\msf{A}_2}(\ket{\psi}\bra{\psi})$ also have to be one-particle states. Therefore, all of the encoded states $\sigma_{00}$, $\sigma_{01}$, $\sigma_{10}$, and $\sigma_{11}$ (recall $\sigma_{a_1a_2}=\mc{E}_{a_1}^{\msf{A}_1}\otimes\mc{E}_{a_2}^{\msf{A}_2}(\ket{\psi}\bra{\psi})$) are operators on the two-dimensional space $\mc{H}_1$. Additionally, note that $p(0|a_1a_2)=\delta_{0,a_1}\delta_{0,a_2}$ implies that the encoder can perfectly distinguish $\sigma_{00}$ from $\{\sigma_{01},\sigma_{10},\sigma_{11}\}$, i.e., $\sigma_{00}\perp\sigma_{01},\sigma_{10},\sigma_{11}$. Thus, it must be that $\sigma_{01}=\sigma_{10}=\sigma_{11}$, which means that $\mc{E}_0^{\msf{A}_1}=\mc{E}_1^{\msf{A}_1}$ and $\mc{E}_0^{\msf{A}_2}=\mc{E}_1^{\msf{A}_2}$. However, this will imply that all of the four encoded states are the same, that is, $\sigma_{00}=\sigma_{01}=\sigma_{10}=\sigma_{11}$. We thus arrive at a contradiction as well. \par
To conclude, we have shown that the MAC $\mbf{p}$ is not in $\mc{Q}_{2,1}([2]^2;[2])$, and therefore $\mc{Q}_{2,1}([2]^2;[2])\ne\mc{C}_{2,2}([2]^2;[2])$.
\end{proof}
\begin{remark}
While in our formalism, the two parties $\msf{A}_1$ and $\msf{A}_2$ are limited to NPE operations, the proof above in fact works for a more general class of operations that preserves the vacuum state.
\end{remark}
\begin{conjecture}
$\mc{Q}_{2,1}([2]^2;[2])$ is not a polytope.
\label{prop: nonpolytope}
\end{conjecture}
\begin{proof}
Here we give a proof of the special case with projective measurement (in 1-particle subspace), for the general POVM case, we have strong numerical evidence that this is also true. \par  
From proposition \ref{prop:Rank Qk general}, $\dim \mc{Q}_{2,1}([2]^2;[2])=4$, and MACs $\mbf{p}_{B|A_1A_2}\in \mc{Q}_{2,1}([2]^2;[2])$ can be written in coordinate $\{p(0|00),p(0|01),p(0|10),p(0|11)\}$. By convexity, we can always assume the state to be pure (in subspace $\{\ket{01},\ket{10}\}$):
\begin{equation}
\rho=\begin{pmatrix}
\cos^2\theta_s, &\sin\theta_s\cos\theta_s e^{i\phi_s}\\ 
\sin\theta_s\cos\theta_s e^{-i\phi_s} & \sin^2\theta_s
\end{pmatrix}
\end{equation}
As discussed in Eq.~\eqref{Eq:Kraus-qubit-MAC}, every NPE operation $\mc{E}_{a}^{\msf{A}_i}$ on qubit system $\msf{A}_i$ is characterized by Kraus operators $\left\{\left(\begin{smallmatrix}1&0\\0&y_{a_i}^{i}\end{smallmatrix}\right),\left(\begin{smallmatrix}0&z_{a_i}^{i}\\0&0\end{smallmatrix}\right)\right\}$ with $|y_{a_i}^i|^2+|z_{a_i}^i|^2=1$.  Hence the final state will be given as:
\begin{equation}
\sigma_{a_1a_2}=\mc{E}_{a_1}^{\msf{A}_1}\otimes\mc{E}_{a_2}^{\msf{A}_2}(\rho)=\begin{pmatrix}
|z^1_{a_1}|^2\cos^2\theta_s+|z^2_{a_2}|^2\sin^2\theta_s&0&0\\
0&|y^1_{a_1}|^2\cos^2\theta&y^1_{a_1}y^{2*}_{a_2}\sin\theta_s\cos\theta_s e^{i\phi_s}\\ 
0& y^{1*}_{a_1}y^{2}_{a_2}\sin\theta_s\cos\theta_s e^{-i\phi_s} & |y^2_{a_2}|\sin^2\theta_s
\end{pmatrix}
\end{equation}
We further assume the measurement POVM to be projective in the 1-particle subspace:
\begin{equation}
\Pi_0=q\op{00}{00}\oplus \begin{pmatrix}
\cos^2\theta_m, &\sin\theta_m\cos\theta_m e^{i\phi_m}\\ 
\sin\theta_m\cos\theta_m e^{-i\phi_m} & \sin^2\theta_m
\end{pmatrix},
\label{eq: POVM}
\end{equation}
with $0\le q \le 1$. Then, 
\begin{equation}
p(0|a_1a_2)=\tr\left[\Pi_0\sigma_{a_1a_2}\right]=q(|z^1_{a_1}|^2\cos^2\theta_s+|z^2_{a_2}|^2\sin^2\theta_s)+|y^1_{a_1}\cos\theta_s\cos\theta_m+y^{2}_{a_2}\sin\theta_s\sin\theta_m e^{\phi_s-\phi_m}|^2
\end{equation}
To show that $\mc{Q}_{2,1}([2]^2;[2])$ is not a polytope, we can just look at one of its cross-section. As an illustive example, we set $p(0|00)=1$ and $p(0|10)-p(0|11)=0$ and look at the cross-section represented in the coordinates $\{p(0|01),p(0|10)\}$. \par
With $p(0|00)=1$, we have the following two extreme cases:
\par
case 1: $q=|z^1_{0}|=|z^2_{0}|=1$
\par
case 2: $|y^1_{0}|=|y^2_{0}|=1$, $\theta_s=\theta_m$ and $\phi_s-\phi_m-\phi_{0}^1+\phi_{0}^2=0$\\
All other cases can be written as affine combination of these two, e.g: $q=1$, $|z^1_{0}|=|z^2_{0}|$, $|y^1_{0}|=|y^2_{0}|$, $\theta_s=\theta_m$ and $\phi_s-\phi_m-\phi_{0}^1+\phi_{0}^2=0$
\par
Then with $p(0|10)=p(0|11)$, we have:
\par 
case 1: $|z^2_1|=1$, therefore every MAC can be written as:
$$\{1, |z^1_1|\cos^2\theta+\sin^2\theta_s+|y^1_1|^2\cos^2\theta_s\cos^2\theta_m\}$$
\par 
case 2: $2(\phi^1_1-\phi^1_0)=(\phi^2_1-\phi^2_0):=\phi$ and $|y^2_1|=1$, therefore every MACs can be written as:
$$\{|\cos^2\theta_s+\sin^2\theta_s e^{-\phi}|^2,||y^1_{1}|\cos^2\theta_s+\sin^2\theta_s e^{\phi/2}|^2\}$$
\par 
case 2': $(\phi^1_1-\phi^1_0)=\pm\frac{\pi}{2},~(\phi^2_1-\phi^2_0)-(\phi^1_1-\phi^1_0)=\pm\frac{\pi}{2}$ and $|y^2_1|=1$, and MACs are written as:
$$\{\cos^4\theta+\sin^4\theta,|y^1_1|^2\cos^4\theta+\sin^4\theta\}$$
\begin{figure}[t]
    \centering
    \includegraphics[width=0.5\textwidth]{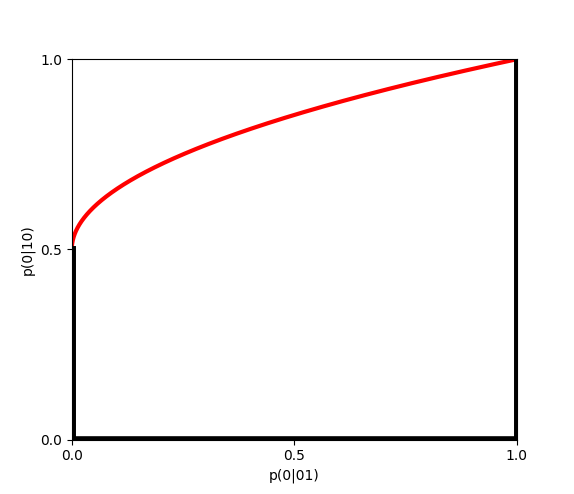}
    \caption{Cross-section of $\mc{Q}_{21}([2]^2,[2])$ with $p(0|00)=1$ and $p(0|10)=p(0|11)$, the boundary is obtained with the subset of measurement considered in eq.~\ref{eq: POVM} (projective in the 1-particle subspace). However, based our numerical simulation, no other points can be found even when general POVM is used.}
    \label{fig:my_label}
\end{figure}
We now identify the boundary of the above set. From the supporting hyperplane theorem, it is necessary to find all the supporting hyperplane, which in the case of two-deimensional space, will be of the form $tp(0|10)-p(0|01)=C$. One can already check that the three extreme points on the lower-right $\{\{0,0\},\{1,0\},\{1,1\}\}$ can be obtained. Therefore, we only have to look at hyperplane on the upper-left corner:  $$tp(0|10)-p(0|01)=C~~~~\text{for $t\in (0,\infty)$},$$ which are maximized only by point in case 2 when 
$$|y^1_1|=1,~~~~\cos\theta=\frac{1}{\sqrt{2}},~~~~\cos(\frac{\phi}{2})=\begin{cases}\frac{t}{4} & \text{for $t\in (0,4)$}\\ 1&\text{for $t\in [4,\infty)$}\end{cases}.$$
And the upper right boundary now can be calculated out as:
$\{\frac{t^2}{16},\frac{t}{8}+\frac{1}{2}\}$, 

\end{proof}
Obviously, the above boundarys form a convex set, but the set is not a polytope, and it follows that the original set $\mc{Q}_{1,1}([2]^,[2])$ is not a polytope with projective measurement. Our numerical search find no point     generates from POVM locates outside this set, thus we conjecture that  $\mc{Q}_{1,1}([2]^,[2])$ is not a polytope 

\end{document}